\newcommand{\opt}{{\rm OPT}}
\newcommand{\alg}{{\rm ALG}}
\newcommand{\cost}{{\rm cost}}
\def\Real{\mathbb{R}}
\renewcommand{\Re}{\Real}
\def\cA{\mathcal{A}}
\def\cD{\mathcal{D}}
\def\cF{\mathcal{F}}
\def\cI{\mathcal{I}}
\def\cM{\mathcal{M}}
\newcommand{\ind}[1]{\mathds{1}_{\{#1\}}}
\let\R\Real
\let\Z\Integer
\def\Rp{\Real_{\geq 0}}
\def\argmax{\operatornamewithlimits{arg\,max}}
\def\E{{\mathbb{E}}}		
\def\1{\mathds{1}}             
\def\rp#1{^{\overline{#1}}}		
\newtheorem{theorem}{Theorem}[section]
\newtheorem{claim}[theorem]{Claim}
\newtheorem{lemma}[theorem]{Lemma}
\newtheorem{corollary}[theorem]{Corollary}
\newtheorem{fact}[theorem]{Fact}
\theoremstyle{definition}
\newtheorem{definition}[theorem]{Definition}
\newcommand{\lp}{\left}
\renewcommand{\rp}{\right}
\newcommand{\hide}[1]{}
\definecolor{toc}{RGB}{13,55,174}	
\newcommand{\probName}{Subadditive Coverage}
\newcommand{\IGNORE}[1]{}
\renewcommand{\emph}[1]{\textit{#1}}
\title{Online Combinatorial Optimization\\
with Graphical Dependencies}
\begin{document}

	\author{Zhimeng Gao\thanks{
	(zhimeng@gatech.edu)
    School of Computer Science,
        Georgia Tech.
        Supported in part by NSF awards CCF-2327010 and CCF-2440113.
    }
	\and Evangelia Gergatsouli\thanks{
         (evagerg@gmail.com)
         School of Computer Science,
        Georgia Tech.
    }
    \and Kalen Patton\thanks{
        (kpatton33@gatech.edu)
        School of Mathematics,
        Georgia Tech.
        Supported in part by NSF awards CCF-2327010 and CCF-2440113.
        }
	\and Sahil Singla\thanks{
        (ssingla@gatech.edu)
        School of Computer Science,
        Georgia Tech.
        Supported in part by NSF awards CCF-2327010 and CCF-2440113.
        }
}

\maketitle

\begin{abstract}
\medskip

Most existing work in online stochastic combinatorial optimization assumes that inputs are drawn from \textit{independent} distributions---a strong assumption that often fails in practice. At the other extreme, arbitrary correlations are equivalent to worst-case inputs via Yao’s minimax principle, making good algorithms often impossible. This motivates the study of intermediate models that capture mild correlations while still permitting nontrivial algorithms.

   \smallskip
   
In this paper, we study online combinatorial optimization under Markov Random Fields (MRFs), a well-established graphical model for structured dependencies. MRFs parameterize correlation strength via the maximum weighted degree $\Delta$, smoothly interpolating between independence ($\Delta = 0$) and full correlation ($\Delta \to \infty$). While na\"ively this yields  $e^{O(\Delta)}$-competitive algorithms and $\Omega(\Delta)$ hardness, we ask: when can we design tight $\Theta(\Delta)$-competitive algorithms?

   \smallskip
   
We present general techniques achieving $O(\Delta)$-competitive algorithms for both minimization and maximization problems under MRF-distributed inputs. For minimization problems with coverage constraints (e.g., Facility Location and Steiner Tree), we reduce to the well-studied $p$-sample  model \cite{KPSSV-ITCS19,LattanziMVWZ-NeurIPS21,CorreaCFOT21}. For maximization problems (e.g., matchings and combinatorial auctions with XOS buyers), we extend the ``balanced prices" framework for online allocation problems \cite{DuttFeldKessLuci2020} to MRFs. 

\bigskip

\end{abstract}

{\small
 \setcounter{tocdepth}{1}
    \tableofcontents
 }

\newpage
\section{Introduction}

Online combinatorial optimization is a foundational area in theoretical computer science, where an algorithm must respond to a sequence of $n$ requests arriving one by one, without knowledge of future inputs. The algorithm must make irrevocable decisions that satisfy global combinatorial constraints while optimizing an objective. The standard benchmark is the offline optimum---the best solution achievable with full knowledge of the input sequence---and performance is measured by the \emph{competitive ratio}, defined as the worst-case ratio between the algorithm’s objective and that of the offline optimum (see, e.g., books \cite{BorodinE98,fiat1998online,BN-Book09,vaze2023online}).
However, under fully adversarial inputs, many important online problems do not admit algorithms with a small competitive ratio. For example, online maximization problems like matchings and combinatorial auctions face $\Omega(n)$ hardness, while minimization problems like Facility Location and Steiner Tree have known $\Omega(\log n/\log\!\log n)$ lower bounds~\cite{Fota2008,DPZ2021,MakoWaxm1991}.

To overcome worst-case inapproximability, much recent work in online combinatorial optimization has focused on stochastic input models. A central example is the \emph{prophet model}, where each request $i \in [n]$ is independently drawn from a known distribution $\mathcal{D}_i$, and the algorithm aims to compete with the expected offline optimum over the product distribution $\mathcal{D} = \mathcal{D}_1 \times \dots \times \mathcal{D}_n$. This model has led to $O(1)$-competitive algorithms for a wide range of problems, including online matchings \cite{ChawlaMS-EC10,gravin2019prophet,EzraFGT-EC20}, combinatorial auctions \cite{FeldmanGL15,DuttFeldKessLuci2020,CorreaC23}, Facility Location and Steiner Tree~\cite{GargGuptLeonSank2008,ArgyFrieGuptSeil2022}.

However, these guarantees critically rely on the independence assumption. In the presence of arbitrary correlations between inputs, such guarantees collapse. This failure is formalized by Yao’s minimax principle~\cite{Yao1977}, which shows that online algorithms for known correlated distributions are no easier than those for worst-case adversarial inputs. Since real-world data often exhibits non-adversarial dependencies---and arbitrary correlations lead back to worst-case hardness---this motivates the study of online stochastic optimization under mild, structured correlations.

   Several recent works have begun addressing stochastic inputs with structured correlations. These include models with pairwise independence~\cite{CaraGravLuWang2021,GuptaHKL-IPCO24,DughmiKP-STOC24}, linear dependencies~\cite{bateni2015revenue,ImmorlicaSW23}, and other restricted forms (see related work in \Cref{sec:related}). However, their correlation models do not provide a unified framework that smoothly interpolates between full independence and worst-case correlation. To bridge this gap, Cai and Oikonomou~\cite{CaiOiko2021} proposed using Markov Random Fields (MRFs)---a classic  graphical model for reasoning about high-dimensional stochastic inputs with controlled dependencies.

\paragraph{Markov Random Fields.}

    MRFs model dependencies among a collection of random variables using an undirected graph (or hypergraph), where each vertex corresponds to a random variable and edges capture conditional dependencies. The key parameter is the \emph{maximum weighted degree} $\Delta$, which quantifies the total influence any single variable experiences from its neighbors. These $\Delta$-MRF distributions
    naturally interpolate between product distributions ($\Delta = 0$) and highly correlated distributions ($\Delta \to \infty$). Originating in statistical mechanics \cite{SnelKind1980}, MRFs have been widely used across domains such as computer vision~\cite{BlakKohlRoth2011}, causal inference~\cite{Pearl}, sensor networks~\cite{ChenSuXionXiao2016,PerrSigeDaSiJaya2009}, mechanism design~\cite{BrusCaiDask2020}, and combinatorial classification~\cite{KleiTard2002}, owing to their ability to capture complex but localized dependencies.

    A central technical property of MRFs is that each variable behaves approximately independently up to an $e^{O(\Delta)}$ multiplicative distortion (\Cref{lem:mrf_conditioned_bound}). This implies that algorithms designed for independent inputs often extend to the MRF setting with an $e^{O(\Delta)}$ loss in approximation. Conversely, since MRFs with maximum weighted degree $\Delta$ can encode general correlations over a small number of variables, standard worst-case hardness implies $\Omega(\Delta)$ lower bounds. Taken together, these two facts raise a natural  question:
    \begin{quote}
        Can we design   linear-in-$\Delta$   competitive online algorithms under MRF-distributed inputs?
    \end{quote}
    We answer this question in the affirmative for broad classes of online minimization and maximization  problems.
    Rather than exploiting problem-specific techniques, we present general frameworks to exponentially improve over the na\"ive $e^{O(\Delta)}$-competitive algorithms.

    \subsection{Our  Results and High-Level Techniques}\label{sec:intro-results}

    \paragraph{Minimization Problems.} We focus on the broad class of \emph{online subadditive covering} problems, which include canonical settings like Facility Location, Steiner Tree, and Set Cover. In these problems, each arriving request must be covered under a global cost-minimization subadditive objective. Our main result is a general reduction from the MRF arrival model to the well-studied \emph{$p$-sample independent model} \cite{KPSSV-ITCS19,LattanziMVWZ-NeurIPS21,KNR2020,KNR-SODA22}.
    
    In the $p$-sample model, a set of requests is chosen adversarially but then a random $p$-fraction is independently revealed in advance to the algorithm as \emph{sample}, while the remaining $(1-p)$ fraction appear one by one in an adversarial order (\Cref{def:p-sample}). Most algorithms in this model are also \emph{monotone}---their cost only decreases if the sample set increases. We show that any monotone algorithm in the $p$-sample model implies an algorithm in the MRF model, which  enables us to leverage existing $p$-sample  algorithms. 

 \begin{theorem}[Informal \Cref{thm:full_reduction_minimization}]\label{thm:full_reduction_minimization_intro}
    For any online subadditive covering problem, if  there exists an \( O\lp( \alpha \cdot  \log \lp( 1/p \rp)\rp) \)-competitive \emph{monotone} algorithm  in the \( p \)-sample independent model, then  there exists an \( O( \alpha \cdot \Delta ) \)-competitive algorithm  in the MRF arrival model. {This reduction holds even if the algorithm doesn't know the MRF distribution and is only given a single $n$ dimensional sample from the  underlying  MRF.}
    \end{theorem}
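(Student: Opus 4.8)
The plan is to reduce MRF arrivals to the $p$-sample model by using a single $n$-dimensional sample from the MRF as a source of "advice" about the true instance. The key structural fact (from \Cref{lem:mrf_conditioned_bound}) is that in a $\Delta$-MRF, conditioning on the values of a subset of variables distorts the marginal of any other variable by at most an $e^{O(\Delta)}$ multiplicative factor. I would exploit this to show that if we independently "reveal" each coordinate of the true instance with some probability $p$, the revealed coordinates behave (up to $e^{O(\Delta)}$ distortion) like an independent $p$-subsample of the requests, which is exactly what a $p$-sample algorithm expects.

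\textbf{Step 1: Coupling the MRF sample with the true instance.} First I would set up the right probabilistic coupling. We receive one sample $\mathbf{s} \sim$ MRF and separately the online requests come from $\mathbf{x} \sim$ MRF (the same distribution). I would construct a joint distribution on $(\mathbf{x}, \mathbf{s})$ together with a random set $S \subseteq [n]$ obtained by including each coordinate independently with probability $p$, such that (a) on coordinates in $S$ the sample agrees with the truth, $s_i = x_i$, and (b) conditioned on $S$ and on $\{x_i\}_{i \in S}$, the remaining coordinates $\{x_i\}_{i \notin S}$ are distributed as the MRF conditioned on those revealed values. This is essentially a "subsampling" coupling: pick $S$, reveal $\mathbf{x}|_S$, and let the algorithm's sample be that restriction. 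The content of \Cref{lem:mrf_conditioned_bound} is that this conditional distribution is within $e^{O(\Delta)}$ of the unconditional one, coordinatewise.

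\textbf{Step 2: Feeding the $p$-sample algorithm.} Given the monotone $p$-sample algorithm $\cA$, I would run it with sample set $\{x_i : i \in S\}$ (the revealed coordinates of the true instance) and then feed it the requests $\{x_i : i \notin S\}$ in their online (adversarial) arrival order. Since $\cA$ is $O(\alpha \log(1/p))$-competitive against the offline optimum of the full request set, and since the request set here is exactly the support of $\mathbf{x}$, the cost of $\cA$ is $O(\alpha \log(1/p)) \cdot \opt(\mathbf{x})$ in expectation \emph{over the independent randomness of $S$} — but crucially we need this to hold conditioned on $\mathbf{x}$, or at least in expectation over $\mathbf{x}$. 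Here is where the MRF distortion enters: the "adversary" in the $p$-sample model may choose the request multiset adversarially, but the \emph{identity} of which ones are sampled must be an independent $p$-fraction. Because of the coupling, conditioning on the sampled coordinates only changes the law of the unsampled ones by $e^{O(\Delta)}$, so a monotone $\cA$'s guarantee degrades by at most that factor; choosing $p = e^{-\Theta(\Delta)}$ then turns the $\log(1/p) = \Theta(\Delta)$ factor into the target $O(\Delta)$ while the $e^{O(\Delta)}$ distortion is absorbed into a constant via monotonicity (a smaller effective sample only helps a monotone algorithm, and the distortion can only inflate the conditional instance by a bounded factor).

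\textbf{Main obstacle.} The hard part will be Step 3: making precise the sense in which "the revealed coordinates of an MRF look like an independent $p$-sample." In the genuine $p$-sample model the unrevealed requests are a \emph{fixed} set and only the sample membership is random; under an MRF the unrevealed coordinates are themselves random and correlated with the sample. I expect to handle this by a hybrid/charging argument: condition on the full realization $\mathbf{x} = \mathbf{z}$, so the request multiset is fixed; then the only randomness left is $S$, which by construction includes each request independently with probability $p$ — but the probability that a \emph{given} configuration $\mathbf{z}$ arises with a \emph{given} sampled subset is reweighted relative to the product-of-marginals by at most $e^{O(\Delta)}$ (this is exactly \Cref{lem:mrf_conditioned_bound} applied to bound $\Pr[\mathbf{x} = \mathbf{z} \mid S, \mathbf{x}|_S]$ against $\Pr[\mathbf{x}=\mathbf{z}]$). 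Monotonicity of $\cA$ lets us compare its cost on the true (correlated) sample to its cost on a stochastically dominated independent sample, and the $e^{O(\Delta)}$ Radon–Nikodym bound lets us transfer the expectation. Putting $p = \Theta(e^{-c\Delta})$ for the right constant $c$ balances the $\log(1/p) = \Theta(\Delta)$ competitive factor of $\cA$ against the $e^{O(\Delta)}$ loss from the distortion, yielding the claimed $O(\alpha\Delta)$ bound; the final competitive ratio is then obtained by taking expectations over $\mathbf{x}$ and noting $\E[\opt(\mathbf{x})]$ is the benchmark. I would also remark that since the reduction only ever uses the single sample $\mathbf{s}$ and never the MRF parameters themselves, it works in the promised "distribution-oblivious" regime.
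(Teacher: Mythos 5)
There is a genuine gap, and it is in your Step~1: the coupling you posit does not exist. In the single-sample $\Delta$-MRF prophet model the algorithm receives an \emph{independent} draw $\mathbf{s}\sim\cD$, and the true instance $\mathbf{x}\sim\cD$ is a second independent draw; you do not get to construct their joint distribution. Forcing $s_i=x_i$ on an independent $p$-fraction $S$ of coordinates is impossible for independent draws (the collision probability $\Pr[s_i=x_i]$ can be arbitrarily small), and the algorithm in any case has no way to identify which coordinates of $\mathbf{s}$ happen to agree with the truth. A quick sanity check: if such a coupling existed for a constant $p$, your Step~2 would immediately give an $O(\alpha\log(1/p))=O(\alpha)$-competitive algorithm with no dependence on $\Delta$ at all, contradicting the $\widetilde\Omega(\Delta)$ lower bounds of \Cref{sec:lowerBounds}. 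The paper's way around this is the Game-of-Googol reduction (\Cref{lem:prophet-to-googol}): treat $(s_i,x_i)$ as an \emph{unordered pair} of candidate values for coordinate $i$ and observe that, conditioned on all the pairs, the assignment $\bm{\sigma}\in\{-1,1\}^n$ of which element of each pair is the sample is itself a symmetric $2\Delta$-MRF. This makes every one of the $2n$ values land in the sample with marginal probability exactly $1/2$ and, by \Cref{lem:mrf_conditioned_bound}, with conditional probability at least $\tfrac12 e^{-8\Delta}$ no matter how the other coordinates are conditioned --- which is the handle the rest of the argument needs.

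The second problem is your handling of the $e^{O(\Delta)}$ distortion. Transferring an expected-cost bound across two distributions whose densities differ pointwise by a factor $e^{O(\Delta)}$ costs that factor multiplicatively, landing you back at an exponential-in-$\Delta$ ratio; it is not ``absorbed into a constant.'' You also have monotonicity backwards: the paper's monotonicity is robustness to \emph{augmenting} the sample set, not to shrinking it. The correct mechanism --- which you gesture at in one clause but do not develop --- is stochastic dominance: because every value's conditional probability of being sampled is at least $p=\tfrac12 e^{-8\Delta}$, the sample set can be coupled so that it contains an independent $\mathrm{Ber}(p)$ sample as a subset; the excess is an augmentation that a monotone algorithm tolerates at no cost, so $\Delta$ enters the competitive ratio only through $\log(1/p)=O(\Delta)$ (\Cref{lem:p-sample_to_half_p_mrf}). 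Finally, the paper must split the Googol instance into two sub-instances $\cI_1,\cI_2$ and invoke subadditivity (losing a factor $2$) to match the $p$-sample benchmark $\opt(V)$ to the MRF benchmark; your proposal does not address this benchmark alignment.
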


As a concrete application, we obtain $O(\Delta)$-competitive algorithms for Facility Location and Steiner Tree, using prior $p$-sample algorithms or their small modifications \cite{ArgyFrieGuptSeil2022}. 
Notably, our result provides the first nontrivial approximation guarantees for these problems in the presence of structured correlations. Moreover, it is particularly interesting that we only require a single-sample from the underlying distribution, since learning MRF distributions is a challenging problem and an active area of research. Our theorem permits one to bypass learning the correlated MRF distribution to design an algorithm directly.

  \begin{corollary}\label{cor:log-guarantee-min}
        For online Facility Location and online Steiner Tree in the MRF arrival model with maximum weighted degree \( \Delta \), there exist $O(\Delta)$-competitive algorithms given  only a single $n$ dimensional sample from the  underlying  MRF.
    \end{corollary}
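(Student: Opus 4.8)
The plan is to instantiate the reduction of \Cref{thm:full_reduction_minimization_intro} with $\alpha = O(1)$, which reduces the corollary to exhibiting, for each of online Facility Location and online Steiner Tree, a \emph{monotone} algorithm in the $p$-sample independent model with competitive ratio $O(\log(1/p))$. The first step is to verify that both problems fit the online subadditive covering template required by the theorem: each arriving request (a client for Facility Location, a terminal for Steiner Tree) must be served irrevocably, and the optimal cost of serving a set $S$ of requests---$\min_F \big(\sum_{f\in F} o_f + \sum_{c\in S} d(c,F)\big)$ for Facility Location, the min Steiner tree connecting $S$ to the root for Steiner Tree---is a monotone subadditive set function of $S$ (for subadditivity, use $F_S\cup F_T$, resp.\ the union of the two trees, as a feasible solution for $S\cup T$). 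This is a short check against the formal definition of subadditive covering in the body.

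Next I would invoke the $p$-sample algorithms of \cite{ArgyFrieGuptSeil2022} (or the small modifications alluded to there). For online Steiner Tree, the algorithm buys an (approximate) Steiner tree on the sampled terminals up front and then greedily connects each online terminal to the current tree; the standard charging argument gives the $O(\log(1/p))$ bound against the offline optimum on the full instance, and monotonicity is essentially structural, since enlarging the sample only enlarges the pre-bought tree while weakly decreasing the greedy augmentation cost---one just has to phrase the greedy step so that the total cost is genuinely non-increasing in the sample. For online Facility Location, I would similarly open a near-optimal facility set for the sampled clients and then serve each online client by its nearest open facility or by a Meyerson-style opening, again invoking the $O(\log(1/p))$ analysis; here monotonicity needs the most care, so if the published algorithm is not literally monotone I would apply the paper's indicated patch (e.g.\ coupling the sample randomness so a larger sample's solution refines a smaller one's, or taking the cheaper of the two solutions in a way that preserves the competitive ratio).

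Given these two ingredients, \Cref{thm:full_reduction_minimization_intro}---with $p$ chosen internally by the reduction as a function of $\Delta$ (so that $\log(1/p) = \Theta(\Delta)$)---immediately yields $O(\Delta)$-competitive algorithms for both problems in the MRF arrival model; moreover, since the reduction consumes only a single $n$-dimensional sample from the MRF, the resulting algorithms are distribution-oblivious. I expect the main obstacle to be not the reduction itself (which we are given) but auditing and, where necessary, minimally patching the existing $p$-sample algorithms so that they simultaneously satisfy both requirements: the sharp $O(\log(1/p))$ dependence on the sample rate \emph{and} monotonicity in the sample set, neither of which is always stated explicitly in the prior literature.
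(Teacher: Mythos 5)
Your proposal is correct and follows essentially the same route as the paper: verify that both problems are subadditive coverage problems, exhibit monotone $O(\log(1/p))$-competitive $p$-sample algorithms (MST-on-sample plus greedy connection for Steiner Tree; offline $O(1)$-approximation on the sample followed by Meyerson's rule for Facility Location, exactly as in the paper's Appendix), and plug them into \Cref{thm:full_reduction_minimization} with $p = \Theta(e^{-\Delta})$. The paper likewise treats Facility Location's monotonicity as the delicate point, handling it by bounding the Phase-1 cost against $\opt(V)$ and using stochastic dominance of the augmented sample in the ring/Chernoff analysis, which is in the spirit of the patches you anticipate.
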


\noindent These results are nearly tight:  in \Cref{sec:lowerBounds} we show $\tilde{\Omega}(\Delta)$ hardness for both online Facility Location and online Steiner Tree by leveraging the fact that $\Delta$-MRF distributions can capture general correlations that are given by a Markov chain (\Cref{lem:mrf-mc-coupling}).

 We also remark that for online Set Cover, if the conjectured $O(\log(mn)\cdot \log(1/p))$-competitive guarantee in the $p$-sample model of \cite{GuptKehnLevi2024}  holds via a monotone algorithm, our reduction would yield an $O(\log(mn) \cdot \Delta)$-competitive algorithm under MRF arrivals.

At a high level, the proof of \Cref{thm:full_reduction_minimization_intro} is based on the intuition that MRF distributions behave independently, up to an $e^{O(\Delta)}$ multiplicative distortion. This allows us to view them in the $p$-sample model for $p=e^{-O(\Delta)}$, however, the formal reduction is much more subtle. To this end, we introduce an intermediate $(\frac12,p)$-sample model, where again the requests are chosen adversarially with a random fraction being revealed in advance (similar to the $p$-sample model), but there are correlations between the revelations of  requests: each request is revealed with probability  $\geq 1/2$ and,  conditioned on other requests, is revealed with probability $\geq p$.  In \Cref{sec:minimization}, we first reduce our problem from MRF distributions to the $(\frac12,p)$-sample model, and then we reduce the $(\frac12,p)$-sample problem to the $p$-sample model. This proof crucially relies on the monotonicity of the algorithm since some requests may appear in the sample with probability $\gg p$, and exploits  stochastic dominance and coupling properties of the underlying distributions.

      \paragraph{Maximization Problems.} 
    We then turn to maximization problems, particularly online allocation problems such as (hypergraph) matchings and combinatorial auctions. Here, buyers (or edges) arrive sequentially, each with a valuation drawn from a known MRF distribution, and the goal is to allocate items (or vertices) to maximize the sum of valuations (also known as \emph{social welfare}).

    A natural first step might be to  mimic the above reduction to the $p$-sample model for minimization problems. While such a reduction can be shown to exist, it fails to give $O(\Delta)$ competitive ratio:   even allocating a single item exhibits a $1/p$-hardness in the $p$-sample model (details in \Cref{apn:hardness}), limiting this approach to exponential-in-$\Delta$ guarantees. In contrast, prior work~\cite{VPS24} has shown that tight $O(\Delta)$ bounds are achievable for single-item allocation under MRFs. This highlights that the $p$-sample reduction is inherently lossy for maximization.

    To overcome this, we build on the \emph{balanced prices} framework~\cite{DuttFeldKessLuci2020}, which unifies many prophet inequality results for independent inputs. We extend this framework to handle MRF correlations and design pricing schemes that yield linear-in-$\Delta$ guarantees for general online allocation problems with MRF arrivals   
    when the buyers' valuations  are either XOS (a superclass of submodular functions) or when each buyer is interested in a bundle of at most $k$ items ($k$-hypergraph matching).  
    
   \begin{theorem}[Informal Theorem~\ref{thm:max_main}] \label{thm:informalMaxim}
        For a sequence of buyers whose valuations are drawn from a known MRF distribution with degree $\Delta$, we achieve 
        \vspace{-0.15cm}
        \begin{itemize}
            \item $O(\Delta)$-competitive algorithm for buyers with submodular/XOS valuations.
            \item $O(k^2 \cdot (\Delta + \log k) )$-competitive algorithm for $k$-uniform hypergraph matching, where each buyer is interested in a bundle of at most $k$ items.
        \end{itemize}
        \vspace{-0.15cm}
        Moreover, both these results are achieved via posted-pricing algorithms, thereby immediately enabling  online truthful mechanisms with the same approximation to the optimum welfare.
    \end{theorem}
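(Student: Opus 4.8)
The plan is to establish \Cref{thm:max_main} by lifting the balanced-prices framework of \cite{DuttFeldKessLuci2020} from product distributions to $\Delta$-MRF-distributed valuations; we deliberately avoid the $p$-sample reduction used for the minimization results, since (as noted above) that route is inherently lossy for welfare maximization. Recall the framework: the seller commits, possibly adaptively based on the valuations and allocation revealed so far, to item prices; each arriving buyer $i$ takes the bundle maximizing $v_i(\cdot)-p(\cdot)$ among the available items; and the resulting welfare equals total buyer utility plus total revenue. A pricing scheme is \emph{$(\alpha,\beta)$-balanced} with respect to the offline optimum $\opt$ if (i) the revenue always recovers at least a $1/\alpha$ fraction of the optimal welfare attributable to the sold items, and (ii) the prices are low enough that, in expectation over the valuations, every buyer can ``afford'' at least a $1/\beta$ fraction of her optimal marginal contribution; any such scheme is $O(\alpha+\beta)$-competitive. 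Since the algorithm merely posts prices and lets buyers pick their favorite bundles, the induced mechanism is automatically truthful, which yields the ``Moreover'' clause at no extra cost. The whole job is therefore to construct prices that remain balanced with $\alpha,\beta=O(\Delta)$ for XOS buyers and $\alpha,\beta=O(k^2(\Delta+\log k))$ in the single-minded $k$-hypergraph case.

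For XOS buyers I would fix the offline optimal allocation, and for each item $j$ assigned to buyer $i$ take its supporting price $q_j^{(i)}$ in an additive clause of $v_i$ certifying $v_i(\opt_i)$, so that $\sum_{j\in\opt_i}q_j^{(i)}=v_i(\opt_i)$; the price posted on item $j$ is then a constant fraction of an appropriately conditioned expectation of $q_j^{(i)}$. For the $k$-hypergraph matching case the buyers are single-minded, so I would post one price per item derived from the optimal LP relaxation of the offline matching, and, since values span a multiplicative range, bucket the optimal contributions into $O(\log k)$ geometric value scales and argue scale by scale; the $O(k^2)$ blow-up then comes from the hypergraph-matching/rounding structure exactly as in the independent-input analysis, and the additive $\log k$ from the number of scales. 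In both cases, adapting the two balanced-prices conditions to MRF inputs forces each marginal expectation of a valuation to be replaced by its expectation \emph{conditioned on the arrival prefix} $v_{<i}$, since the set of items still available when buyer $i$ arrives is a function of $v_1,\dots,v_{i-1}$.

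This is where the MRF structure is used and where the main difficulty lies. By \Cref{lem:mrf_conditioned_bound}, the conditional law of $v_i$ given $v_{<i}$ is within an $e^{O(\Delta)}$ multiplicative factor of its marginal, and plugging this bound directly into the independent-case analysis already reproduces the na\"ive $e^{O(\Delta)}$-competitive scheme. \textbf{The main obstacle --- the technical heart of the proof --- is to convert this multiplicative $e^{O(\Delta)}$ distortion into an additive $O(\Delta)$ loss.} My plan is to post \emph{randomized} prices: each item's price is drawn from $O(\Delta)$ geometrically spaced levels together spanning the entire $e^{O(\Delta)}$ window in which the realized conditional law of $v_i$ can sit (for the matching case the $O(\log k)$ value scales enlarge this to $O(\Delta+\log k)$ levels). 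Conditioned on which level the realized distribution actually occupies, the price lands at the right scale with probability $\Omega(1/\Delta)$ (resp.\ $\Omega(1/(\Delta+\log k))$), and at that scale --- using XOS subadditivity, resp.\ the single-minded structure --- the buyer extracts an $\Omega(1)$ fraction of her contribution; so only an $O(\Delta)$ (resp.\ $O(\Delta+\log k)$) factor is lost in expectation and one never pays the $e^{O(\Delta)}$ factor more than once. Carrying this through requires a prefix version of \Cref{lem:mrf_conditioned_bound} --- conditioning on $v_{<i}$ rather than on all coordinates --- which should either hold as stated or follow by a short chaining argument over the prefix, together with the standard telescoping charge of the value lost to already-sold items against the revenue via condition (i).

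Finally I would assemble the pieces: combine the revenue bound, the bucketed utility bound, and the balanced-prices master inequality to conclude competitive ratios $O(\Delta)$ for XOS and $O(k^2(\Delta+\log k))$ for $k$-hypergraph matching, with truthfulness immediate from the posted-price format. As a sanity check, the same scheme specializes, for $n$ single-item buyers, to the known $O(\Delta)$ single-item prophet inequality of \cite{VPS24}. I expect the randomized-grid step to be the delicate one, since it must simultaneously respect the online arrival order, exploit the valuation structure to relate per-level utility to per-level optimal contribution, and avoid compounding the $e^{O(\Delta)}$ distortion across buyers.
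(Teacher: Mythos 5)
Your overall strategy --- balanced prices rather than a $p$-sample reduction, plus randomized geometrically spaced price levels to turn the multiplicative $e^{O(\Delta)}$ distortion into an additive $O(\Delta)$ loss --- is the right skeleton, and your randomized grid is essentially the paper's ``core'' pricing scheme (\Cref{lem:balanced-prices-core}). But there is a genuine gap: you assume the realized balanced price $p_j^V$ of each item always lies in an $e^{O(\Delta)}$-wide window around its base price $b_j := \E_V[p_j^V]$, so that $O(\Delta)$ geometric levels cover it. That is a misreading of \Cref{lem:mrf_conditioned_bound}: the fact bounds the ratio between conditional and marginal \emph{probabilities}, not the ratio between a realized value and its expectation. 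In general $\Pr[p_j^V > e^{4\Delta} b_j]$ can be positive, and the tail mass $\E_V[(p_j^V - e^{4\Delta}b_j)^+]$ can be a constant fraction of $b_j$ (take $p_j^V = e^{8\Delta}b_j$ with probability $e^{-8\Delta}$ and $0$ otherwise), hence a constant fraction of $\opt$ once summed over $j$. Your grid never posts a price at that scale, so the welfare it certifies misses this contribution entirely; and because under an MRF these rare high-value events can be correlated across buyers, you cannot appeal to the independent-case argument to recover them --- the paper explicitly identifies correlated rare events as the key new obstacle.

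The paper closes exactly this hole with a core--tail decomposition: $p_j^V$ is split into $\min\{p_j^V, e^{4\Delta}b_j\}$ (handled by a geometric grid like yours) and $(p_j^V - e^{4\Delta}b_j)^+$ (handled by a separate, deterministic pricing $p_j = \alpha e^{4\Delta} b_j$, \Cref{lem:balanced-prices-tail}), and the final algorithm randomizes between the two schemes (\Cref{cor:final_guarantee_max}). The tail lemma is itself nontrivial: it compares the realized run against a resampled profile $\hat V \sim \cD$ conditioned on $\hat v_i = v_i$ and invokes \Cref{lem:mrf_conditioned_bound} so that only a single $e^{4\Delta}$ factor is paid, which is then cancelled by the $e^{4\Delta}$ built into the posted price. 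You would need to add this entire component before your argument goes through. A secondary, smaller issue: your hypergraph-matching sketch defers to ``the independent-input analysis,'' but the paper's construction there is new --- independent Bernoulli$(1/k)$ high/low indicators per price level, the event $\xi_i$ that only the largest-base-price items in $e_i$ are low-priced, and the set $H^\tau$ of frequently-selling high-priced items --- and the $O(k^2)$ and $\log k$ factors arise from that specific construction rather than from an off-the-shelf rounding argument.
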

    
    This result greatly generalizes the tight $\Theta(\Delta)$ bound of \cite{VPS24} for allocating a single item to combinatorial buyer valuations over multiple items. Even in the special cases of submodular valuations or for edge-arrival online matching, no $O(\Delta)$-competitive algorithms were previously known. 

    Balanced prices give a way to measure the expected contribution (say $b_j$) of any item $j$ to the optimum allocation, a non-trivial task beyond additive valuations. 
    The key new challenge for the MRF setting stems from the fact that rare events can potentially be correlated. 
    To overcome this, we separately handle the random ``tail'' (above $e^{4\Delta}\cdot b_j$) and ``core'' (below $e^{4\Delta}\cdot b_j$) contributions of any item $j$ using two separate pricing schemes, and then randomize between the two to obtain our final pricing. 
    
    To bound the total tail contribution, our  idea is to treat them nearly independently as they occur rarely and MRF distributions are independent up to an $e^{4\Delta}$ factor. We exploit properties of balanced prices, originally designed for product distributions, and give a generic tail bound for \emph{any} class of valuations. On the other hand, our approach to bound the total core contribution is tailored to  specific classes of valuations. For XOS valuations, this is relatively straightforward using standard properties of ``supporting prices", but for hypergraph matchings we have to carefully assign hyperedge weights to vertices.

    Taken together, \Cref{thm:full_reduction_minimization_intro} and \Cref{thm:informalMaxim} provide unified frameworks for designing online algorithms under structured correlations, addressing both minimization and maximization objectives.

    \subsection{Further Related work} \label{sec:related}
We now survey related literature across three major areas: stochastic online optimization, models of correlation, and prior work specific to MRFs.

    \paragraph{Online and prophet algorithms.} 
    There is a long line of work that gives tight bounds, when the input is worst case, for multiple online network design and coverage problems like    Steiner Tree \cite{MakoWaxm1991}, Facility Location \cite{Fota2008,Meyerson2001} and online Set Cover~\cite{AlonAA-STOC03}.  Stochastic arrival models often lead to significantly better approximations: for instance, for  independent arrivals online Steiner Tree and Facility Location admit constant-factor approximations \cite{GargGuptLeonSank2008,ArgyFrieGuptSeil2022} and Set Cover admits $\widetilde{O}(\log (mn))$-approximations \cite{GrandoniGLMSS-FOCS08,GuptKehnLevi2024}. The $p$-sample model, used in our reduction for the minimization setting, was first studied in \cite{KPSSV-ITCS19,LattanziMVWZ-NeurIPS21,CorreaCFOT21}. A similar AOS model where exactly $pn$ elements are revealed (instead of each element independently with probability $p$) was recently studied in~\cite{KNR2020,KNR-SODA22,ArgyFrieGuptSeil2022}.

    On the maximization side, $\Omega(n)$ hardness already exists for allocating a single item or, equivalently, selecting the maximum of a stream of $n$ elements. Therefore, most work has focused on stochastic arrival models.  The classic prophet inequality  \cite{KrengelS77,KrengelS78} corresponds to a $2$-approximation for our online allocation of a single item. 
    \cite{HajiaghayiKS07} pioneered combinatorial generalizations of prophet inequalities, and we now know $O(1)$-competitive algorithms for any general matroid constraint \cite{kleinberg2012matroid,FeldmanSZ16} and for  combinatorial auctions with submodular/XOS/subadditive buyers \cite{CorreaC23,FeldmanGL15}. These extensions lead to the algorithm design frameworks of  {online contention resolution schemes} \cite{FeldmanSZ16} and  {balanced pricing} \cite{kleinberg2012matroid,FeldmanGL15,DuttFeldKessLuci2020}. Interested readers are referred to book chapter \cite[Chapter 30]{EIV-Book23} and survey \cite{lucier2017economic}.

    \paragraph{Other Correlation Models.}
    Several alternative models have been proposed to capture mild or structured correlations beyond independence. In the $k$-wise independent model~\cite{CaraGravLuWang2021,GravWang2024,GuptaHKL-IPCO24,DughmiKP-STOC24}, all subsets of $k$ variables are jointly independent, which generalizes full independence while retaining analytical tractability. Linear correlation models—common in revenue optimization and market design—assume that each value is a linear combination of shared latent variables~\cite{bateni2015revenue,ImmorlicaSW23,ChawlaMS-EC10}. This has been especially useful in pricing and assortment problems, where value structure arises naturally from features. In the classic interdependent values model from Economics, each random value depends on private independent ``signals"  held by all the buyers \cite{milgrom1982theory,MMR-EC24,FMMR-EC25}.
        Other correlation frameworks include negative correlation (studied for prophet inequalities in \cite{RinoSamu1987,RinoSamu1992,ImmorlicaSW23,qiu2022submodular}), uncontentious distributions  to capture correlations in the context of contention resolution schemes \cite{Dugh2020,Dugh2025,zhao2025universal}, and correlation gaps to compare expected values under correlated versus product distributions \cite{agrawal2012price,RubiSing2017, ChekLiva2024}. These models are often specialized and do not provide a single tunable parameter interpolating between independence and full correlation—unlike MRFs.

    \paragraph{Stochastic Optimization under MRF Dependencies.}
    Markov Random Fields (MRFs) have recently emerged as a natural and general model for capturing structured correlations in stochastic optimization. They were first studied in this context by Cai and Oikonomou~\cite{CaiOiko2021}, who showed exponential-in-$\Delta$ approximations for problems such as single-buyer revenue maximization under additive or unit-demand preferences. More recently, \cite{VPS24} improved these bounds to $O(\Delta)$ for the single-item prophet inequality and revenue maximization with subadditive buyers.

    However, these results are limited to narrow problem settings and do not extend to broader online combinatorial problems such as Facility Location, Steiner Tree, or online allocations with complex buyer valuations. Our work substantially generalizes these early efforts, providing principled frameworks for both minimization and maximization problems under MRF-distributed inputs, achieving $O(\Delta)$-competitive algorithms in domains where no previous guarantees were known.

\section{Preliminaries on Markov Random Fields}
In this section, we formally define Markov Random Fields, the weighted maximum degree $\Delta$, and state some of their key properties.

\begin{definition}[Markov Random Field]\label{def:MRF}
    A \emph{Markov Random Field (MRF)} $\cM$ defines a distribution $\cD_{\cM}$ over a state (\emph{type}) space $\Omega = \Omega_1 \times \dots \times \Omega_n$ and consists of a tuple 
    $\cM = (\Omega,~E,~\{\psi_i\}_{i \in [n]},~\{\psi_e\}_{e \in E})
    $, where    
    \vspace{-0.15cm}
    \begin{itemize} 
        \item $E \subseteq 2^{[n]}$ is the edge-set of a hypergraph on $[n]$.
        \item $\psi_i : \Omega_i \to \R$ is a potential function on coordinate $i$ for each $i \in [n]$.
        \item $\psi_e : \prod_{i \in e} \Omega_i \to \R$ is a potential function over hyperedge $e$ for each $e \in E$.
    \end{itemize}
    \vspace{-0.15cm}
    This distribution $\cD_{\cM}$ is then defined such that if $V  \sim \cD_{\cM}$ and $\mathbf{u} = (u_1, \dots, u_n) \in \Omega$, we have
    \[
    \Pr[V = \mathbf{u}] \propto \exp\Big(\sum_{i \in [n]} \psi_i(u_i) + \sum_{e \in E} \psi_e((u_i)_{i \in e})\Big).
    \]
\end{definition}

To build some intuition on MRFs, notice that in the absence of edges, they are equivalent to product distributions. 
The general MRF model captures several important graphical models as special cases:

\begin{itemize}
    \item Ising model: This corresponds to having a simple graph (i.e., where no edge has $3$ or more vertices) with $\Omega=\{-1,1\}^n$ and  the function $\psi_e$ taking value either $+J_e$ or $-J_e$ for some $J_e\in \Re$, depending on whether the two end points agree or disagree with each other.

    \item (Anti-)Ferromagnetic Ising model: This is a special case of Ising model with identical edge weights, i.e., $J_e = J$. It's called ferromagnetic or anti-ferromagnetic depending on the sign of $J$, and its magnitude is called the inverse temperature.
    
    \item Sherrington–Kirkpatrick model: This is another special case of Ising model where the underlying graph is complete and the edge weights $\psi_e$ are chosen from Gaussian distribution.
    
    \item Potts model: This is where we have a simple graph, and each vertex has $q$ states $\Omega_i = \{1, \dots, q\}$.
    
\end{itemize}

We define the weighted maximum degree of an MRF $\cM$ as follows.
\begin{definition}[Weighted maximum degree]
    For any MRF $\cM = (\Omega,E,\{\psi_i\}_{i \in [n]},\{\psi_e\}_{e \in E})$,  the \emph{weighted maximum degree} 
    \[
    \Delta(\cM) := \max_{i \in [n]} \max_{U \in \Omega} \Big|\sum_{e \ni i} \psi_e ((u_i)_{i \in e})\Big|.
    \]
    Moreover, suppose $\cD$ is a distribution on $\Omega$ such that $\cD = \cD_{\cM}$ for an MRF $\cM$ with $\Delta(\cM) = \Delta \geq 0$, we say $\cD$ is a \emph{$\Delta$-MRF distribution}.
\end{definition}

Note that in the definition of $\Delta(\cM)$, we only sum over $\psi_e$, and not over vertices $\psi_i$. Hence, product distributions correspond to $\Delta=0$. In the special case where the MRF corresponds to (anti-)ferromagnetic Ising model, $\Delta$ equals the inverse temperature times the max-degree of the graph.

Crucially, we will use the following property of $\Delta$-MRF distributions, which  bounds the degree to which conditioning can affect any marginal distribution.

\begin{fact}[Lemma 1 from~\cite{CaiOiko2021}] 
    \label{lem:mrf_conditioned_bound}
    Suppose the random variable $V \sim \cD$ has a $\Delta$-MRF distribution. Then for any $E_i \subseteq \Omega_i$ and $E_{-i} \subseteq \Omega_{-i}$
    \[
        \Pr[v_i \in E_i] \cdot e^{-4\Delta} \quad \leq \quad  \Pr[v_i \in E_i \mid v_{-i} \in E_{-i}] \quad  \leq \quad  \Pr[v_i \in E_i] \cdot e^{4\Delta}.
    \]
    In other words, conditioning on $v_{-i}$ changes the marginal distribution $v_i$ by at most a factor $e^{4\Delta}$ pointwise.
\end{fact}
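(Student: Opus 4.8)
The plan is to prove the statement by reducing it to the single-type-and-state version, i.e.\ to first establish the bound $e^{-2\Delta} \le \Pr[v_i = u_i \mid v_{-i} = \mathbf{u}_{-i}] / \Pr[v_i = u_i] \le e^{2\Delta}$ for individual realizations $u_i \in \Omega_i$ and $\mathbf{u}_{-i} \in \Omega_{-i}$, and then average/integrate over the events $E_i$ and $E_{-i}$, which only squares the constant into $e^{4\Delta}$. (The factor-of-two slack is exactly why the clean statement above carries $4\Delta$ rather than $2\Delta$.) So the heart of the matter is a pointwise bound on conditional marginals.

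First I would write the conditional law explicitly. Fixing $\mathbf{u}_{-i}$, the Gibbs form gives
\[
\Pr[v_i = u_i \mid v_{-i} = \mathbf{u}_{-i}] \;=\; \frac{\exp\big(\psi_i(u_i) + \sum_{e \ni i}\psi_e(u_i, (\mathbf{u}_{-i})_{e\setminus i})\big)}{\sum_{u_i' \in \Omega_i}\exp\big(\psi_i(u_i') + \sum_{e \ni i}\psi_e(u_i', (\mathbf{u}_{-i})_{e\setminus i})\big)},
\]
so the only place $\mathbf{u}_{-i}$ enters is through the edge potentials touching $i$. The key observation is that, by the definition of the weighted maximum degree, the quantity $S_i(u_i,\mathbf{u}_{-i}) := \sum_{e\ni i}\psi_e(u_i,(\mathbf{u}_{-i})_{e\setminus i})$ lies in $[-\Delta,\Delta]$ for every $u_i$ and every $\mathbf{u}_{-i}$. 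Therefore replacing the true conditional by the ``edge-free'' distribution proportional to $\exp(\psi_i(u_i'))$ changes each numerator term by a multiplicative factor in $[e^{-\Delta},e^{\Delta}]$ and likewise each summand of the denominator, so the ratio is perturbed by at most $e^{\Delta}\cdot e^{\Delta} = e^{2\Delta}$ in either direction. Comparing two different conditionings $\mathbf{u}_{-i}$ and $\mathbf{u}_{-i}'$ through this common edge-free reference then yields $\Pr[v_i = u_i\mid v_{-i}=\mathbf{u}_{-i}] \in [e^{-2\Delta},e^{2\Delta}]\cdot \Pr[v_i = u_i\mid v_{-i}=\mathbf{u}_{-i}']$; averaging the right-hand conditioning over the marginal of $v_{-i}$ recovers the unconditional marginal $\Pr[v_i=u_i]$ within the same factor.

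To go from single states to the set events $E_i, E_{-i}$: for the numerator, $\Pr[v_i \in E_i \mid v_{-i} = \mathbf{u}_{-i}] = \sum_{u_i \in E_i}\Pr[v_i = u_i\mid v_{-i}=\mathbf{u}_{-i}]$, and each term is within $e^{2\Delta}$ of $\Pr[v_i=u_i]$, so the sum is within $e^{2\Delta}$ of $\Pr[v_i\in E_i]$; this holds uniformly over $\mathbf{u}_{-i}$, hence also after conditioning on $v_{-i}\in E_{-i}$ (a convex combination of the $\mathbf{u}_{-i}$'s). That already gives an $e^{2\Delta}$ bound for discrete $\Omega$; the stated $e^{4\Delta}$ is a safe over-estimate and also absorbs the continuous case $\Omega = \R_{\ge 0}^n$, where the sums become integrals against the base measure and the same multiplicative perturbation argument applies verbatim to the densities.

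The only genuine subtlety — and the step I would be most careful about — is the measure-theoretic bookkeeping in the continuous setting: one must argue that the ``density'' form $\frac{d\Pr}{d\mu}(\mathbf{u}) \propto \exp(\sum_i\psi_i(u_i)+\sum_e \psi_e)$ is well-defined (the normalizing integral converges), that conditional densities exist and are given by the expected ratio, and that the pointwise $[e^{-\Delta},e^{\Delta}]$ bound on $\exp(S_i)$ transfers to the normalized conditional density and survives integration over $E_i$ and over $E_{-i}$. All of this is routine once one commits to $\Omega$ being a product of standard Borel spaces with the potentials integrable, so I would state those regularity assumptions up front and then the rest is the elementary multiplicative-perturbation computation above. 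Since this is Lemma~1 of~\cite{CaiOiko2021}, I would cite it and present only this short self-contained argument for completeness.
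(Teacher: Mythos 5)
Your overall approach is the standard one and it does establish the stated bound; note that the paper itself offers no proof here (it cites Lemma~1 of Cai--Oikonomou and remarks that the claim is immediate from the definition), so your write-up is filling a deliberate gap rather than paralleling an argument in the text. The core steps are right: conditioning on $v_{-i}=\mathbf{u}_{-i}$ leaves a Gibbs law on $\Omega_i$ whose only dependence on $\mathbf{u}_{-i}$ is through $S_i(u_i,\mathbf{u}_{-i})=\sum_{e\ni i}\psi_e \in[-\Delta,\Delta]$, so both the numerator and the normalizer of the conditional are within $e^{\pm\Delta}$ of the edge-free reference $Q(u_i)\propto e^{\psi_i(u_i)}$, giving $\Pr[v_i=u_i\mid v_{-i}=\mathbf{u}_{-i}]\in[e^{-2\Delta},e^{2\Delta}]\cdot Q(u_i)$ uniformly in $\mathbf{u}_{-i}$; the unconditional marginal is a convex combination of these and so lies in the same window; and summing over $E_i$ and averaging over $E_{-i}$ preserves a uniform multiplicative bound.

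The problem is your constant bookkeeping, in two places. First, comparing two quantities that each lie within $e^{\pm 2\Delta}$ of a common reference bounds their \emph{ratio} only by $e^{\pm 4\Delta}$, not $e^{\pm 2\Delta}$; so your pointwise claim $\Pr[v_i=u_i\mid v_{-i}=\mathbf{u}_{-i}]/\Pr[v_i=u_i]\in[e^{-2\Delta},e^{2\Delta}]$ is not what your argument delivers --- it delivers $[e^{-4\Delta},e^{4\Delta}]$. Second, the step you blame for the doubling (``averaging over $E_i$ and $E_{-i}$ squares the constant'') loses nothing: a sum or convex combination of terms each within a fixed multiplicative factor of their counterparts stays within that same factor. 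The doubling actually happens when you compare the conditional law to the unconditional marginal, since each carries its own $e^{\pm 2\Delta}$ deviation from $Q$. Net effect: your proof correctly yields exactly the stated $e^{4\Delta}$, but your closing assertion that the argument ``already gives an $e^{2\Delta}$ bound for discrete $\Omega$'' (with $e^{4\Delta}$ merely a safe over-estimate) is unsupported and should be deleted. The remarks on the continuous case are fine once the potentials are assumed measurable with a finite normalizing integral, as the same pointwise perturbation of densities goes through.
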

The proof of this fact follows immediately from the definition of $\Delta$-MRF distribution.

\section{Minimization Problems with MRFs}\label{sec:minimization}

In this section, we will formally define our subadditive coverage  cost minimization problems and prove \Cref{thm:full_reduction_minimization_intro}. 
The formal definition appears in \Cref{sec:min-model}, the statement of the main theorem appears and proof outline in \Cref{sec:min-results}, and the complete proof is in \Cref{sec:min-proof}.

\subsection{Model: \probName{} Problems}\label{sec:min-model}
Similarly to \cite{GargGuptLeonSank2008}, 
let $V$ denote a ground set of possible demand points and $E$ denote the ground set of elements used to construct a solution. For any set of demand points $D \subseteq V$, we have a collection of feasible solutions denoted by $\mathcal{F}_{D}\subseteq 2^E$. To capture general coverage constraints, these solutions sets should have two key properties: (1) we must be able to extend any solution for a set of demands to a solution for a superset of the demands, and (2) the union of two feasible solutions for two different demand sets should be a feasible solution for the union of demand sets. We formalize this notion of coverage constraints bellow.

\begin{definition}[Generalized coverage constraints \cite{ArgyFrieGuptSeil2022}]\label{assu:sol_space}
Given a universe of demands $V$ and a universe of solution elements $E$, a family of \emph{generalized coverage constraints} is a collection of feasible solutions $\mathcal{F}_{D} \subseteq 2^E$ for each $D \subseteq 2^V$ such that 
\begin{enumerate}[label=(\alph*)]
        \item  For any  $D_1, D_2 \subseteq V$ and any $S_1\in \cF_{D_1}$ and $S_2\in \cF_{D_2}$, we have $S_1 \cup S_2  \in \cF_{D_1\cup D_2}$  \label{assu:space_union}
        \item  For any  $D_1\subseteq D_2\subseteq V$, we have $\cF_{D_2} \subseteq \cF_{D_1}$  \label{assu:space_subset}
    \end{enumerate}
\end{definition}

Using this notion of general coverage constraints, we can define a subadditive coverage problem as follows.

\begin{definition}[(Online) Subadditive Coverage]
    In a \emph{subadditive coverage problem}, we have a universe of demand points $V$, a universe of solution elements $E$, a collection of generalized coverage constraints $\{\mathcal{F}_{D}\}_{D\subseteq V}$, and a monotone subadditive\footnote{A function $c$ is subadditive if, for all $S_1, S_2\subseteq E$, we have $c(S_1) + c(S_2) \geq c(S_1 \cup S_2)$.} cost function $c : 2^E \to \Rp$.
    Any subset of elements $S \subseteq E$ can be selected (bought) by paying  cost $c(S)\in \mathbb{R}_{\geq 0}$.  
    Given a set of demands $D\subseteq V$, the goal is to select a subset of elements $F \in \mathcal{F}_{D}$ while minimizing cost $c(F)$.

In the online setting, a sequence of demands $D = \{v_1, \dots, v_n\}$ arrive one by one (possibly from some known arrival model).
Starting with $F_0 = \varnothing$, after each arrival $v_i$, we must select a subset of new elements $S_i \subseteq E$ such that the total set of elements selected $F_i = F_{i-1} \cup S_i$ satisfies $F_i \in \mathcal{F}_{D_{\leq i}}$, where $D_{\leq i} := \{v_1, \dots, v_i\}$ denotes the set of all current demand points. 
Our objective is to choose a feasible solution to minimize $c(F_n)$.
\end{definition}

\paragraph{Examples.}
We can capture several fundamental online minimization problems using this framework: 
\vspace{-0.15cm}
\begin{itemize}
    \item \textit{Steiner Tree}: We have a graph $G = (V, E)$ with a fixed root vertex $r \in V$. Demands are the vertices of the graph and solution elements are the edges. For a set of demands $D \subseteq V$, a feasible solution $F \in \cF_D$ is a set of edges such that all demands are connected to the root in the graph $(V, F)$. Each edge has an associated cost $c(e)$, and costs are additive.  The classic result of  \cite{MakoWaxm1991} gives a tight $\Theta(\log n)$-competitive algorithm  for online adversarial arrivals.

    \item \textit{Facility Location}: The demand points $v$ (clients) are points in a metric space $M$. Solution elements include both points $f \in M$ (facilities) and connections $(f, v) \in M\times M$ between a facility and a client. For a set of clients $D \subseteq M$, a feasible solution is a set of open facilities and connections such that each client is connected to an open facility. Costs are additive: each facility $f \in M$ has an  opening cost $c(f)$ and each connection $(f, v)$ has an assignment cost $c((f,v)) = d_M(f,v)$ given by the metric distance. We know tight $\Theta(\log n/\log\!\log n)$-competitive algorithm  for online adversarial arrivals \cite{Fota2008,Meyerson2001}.

    \item \textit{Set Cover}: Elements $E = [m]$ consist of indices of sets $S_1, \dots, S_m$, and demands consist of points $j \in \bigcup_{i} S_i$. For a set of demands $D$, a feasible solution $F \in \cF_D$ is a set on indices such that $D \subseteq \bigcup_{i \in F} S_i$. Each set $S_i$ has an associated cost $c(i)$, and costs are additive. Unless $\text{NP} \subseteq \text{BPP}$, we know  $\Theta(\log m \cdot \log n)$ is the tight competitive ratio for online adversarial arrivals \cite{AlonAA-STOC03,korman2004use}.
\end{itemize}
\vspace{-0.15cm}

\subsection{Main Result and Proof Overview} \label{sec:min-results}
This section presents our reduction for online subadditive coverage from the MRF arrival model (i.e. where $D$ is sampled from a known MRF distribution) to a model without correlations where some of the input demands are given upfront as samples.
We call this latter setting the \emph{$p$-sample} model. Using known results in this framework, our reduction would imply approximation guarantees in the MRF setting. Before presenting our main theorem, we formally define these two models.

\begin{definition}[\textbf{$\Delta$-MRF prophet model}]
    We have a  $\Delta$-MRF distribution $\cD$ over demand points $V$, which is known upfront to the algorithm. The demand set $D$ is then sampled $D \sim \cD$ and sequentially revealed to the online algorithm.
\end{definition}

\noindent In the \emph{single-sample $\Delta$-MRF prophet model}, the setup is the same as $\Delta$-MRF prophet model, except that the underlying MRF distribution is unknown, and instead the algorithm is provided a single $n$-dimensional sample from this MRF distribution.

Next, we define our $p$-sample independent model, whose definition is identical to the definition of \cite{KPSSV-ITCS19,LattanziMVWZ-NeurIPS21,CorreaCFOT21}, and is closely related to the AOS model in \cite{KNR2020,KNR-SODA22,ArgyFrieGuptSeil2022}.

\begin{definition}[\textbf{$p$-sample independent model}]\label{def:p-sample} 
There is an unknown set of possible demands $V = \{ v_1, v_2, \dots, v_n \}$. The algorithm is given a random \emph{sample} $S \subseteq V$ before the start  such that each $v_i$ belongs to $S$ independently with a known probability $p \in [0,1]$.
The remaining values $R = V \setminus S$ are the \emph{real values}, forming the demand set $D = R$ and  arriving in an adversarial order.
\end{definition}
We say that an algorithm is $\alpha(p)$-competitive in the $p$-sample independent model if its total cost is at most $\alpha(p)\cdot \opt(V)$. (We remark that an alternate choice of benchmark is $\E[\opt(R)]$: both these benchmarks are within a constant factor of each other for subadditive problems with $p<1/2$; we choose $\opt(V)$ since it makes the proofs easier as it does not depend on the randomness of the sample set $S$.)

An $\alpha(p)$-competitive algorithm in the $p$-sample independent model is called \emph{monotone} if it maintains its $\alpha(p)$-competitiveness even if the sample set is \emph{augmented} by moving some of the real values (possibly adversarially) into the sample set, i.e., the new sample set is  $S \supseteq S'$, where $S'$ contains each $v_i$ independently with probability $p$, and the demand set is $V\setminus S$. 

Our main result is a reduction from the $\Delta$-MRF prophet model to the $p$-sample independent model.

\begin{theorem}\label{thm:full_reduction_minimization}
For any $p \in [0,1]$, if  there exists an $\alpha(p)$-competitive monotone algorithm for a subadditive coverage problem in
the $p$-sample independent model, then there exists a $2\alpha\lp(\frac{1}{2}e^{-8\Delta}\rp)$-competitive algorithm for the same problem in the  $\Delta$-MRF model.
Specifically, an $O\lp( \log ({1}/{p})\rp)$-approximation in the $p$-sample model yields an $O(\Delta)$-approximation in the MRF model.
{Moreover, this reduction holds even from the single-sample $\Delta$-MRF prophet model to the $p$-sample independent model.}
    \end{theorem}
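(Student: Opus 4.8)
The plan is to reduce in two stages, exactly along the lines the introduction sketches: first from the $\Delta$-MRF prophet model to an intermediate $(\tfrac12, p)$-sample model, and then from that intermediate model back to the standard $p$-sample independent model where we can invoke the assumed monotone $\alpha(p)$-competitive algorithm. For the first stage, I would construct the sample set $S$ for the $p$-sample simulation out of an independent fresh copy of the MRF. Concretely, draw $V \sim \cD_{\cM}$ (the true arrival sequence of demands) and, independently, draw a second sample $V' \sim \cD_{\cM}$; also flip independent fair coins $b_i \in \{0,1\}$. The idea is to let the ``sample'' given to the $p$-sample algorithm be a subset of the coordinates where we plug in values derived from $V'$ rather than from $V$, chosen so that, coordinate-by-coordinate, each demand point lands in the sample with probability at least $1/2$ (from the fair coin) and, even conditioned on everything else, with probability at least $p := \tfrac12 e^{-8\Delta}$ (this is where \Cref{lem:mrf_conditioned_bound} enters — conditioning on the other coordinates of the MRF only distorts the relevant marginal event by a factor $e^{4\Delta}$ on each of the two independent copies, giving $e^{-8\Delta}$ overall). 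The value set fed to the algorithm is then $V$ with these ``sampled'' coordinates replaced; since coverage costs are subadditive and the feasible-solution families satisfy the union/monotonicity properties of \Cref{assu:sol_space}, the optimum of the simulated instance is within a constant factor of $\opt$ of the true instance.

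For the second stage, the key point — and the reason monotonicity of the base algorithm is essential — is that in the $(\tfrac12,p)$-sample model a coordinate may appear in the sample with probability much larger than $p$ (it is at least $1/2$). I would couple the $(\tfrac12,p)$-sample distribution of the sample set with the genuine product-$p$ sample distribution so that the former stochastically dominates the latter: build $S'$ by independently including each $v_i$ with probability exactly $p$, and then argue that the actual $(\tfrac12,p)$-sample set $S$ can be coupled to satisfy $S \supseteq S'$ pointwise, because each coordinate's conditional inclusion probability is everywhere at least $p$. Feeding $S$ (and the complementary adversarial arrival order on $V \setminus S$) to the monotone algorithm then yields cost at most $\alpha(p)\cdot \opt(V)$ by the very definition of monotone in the $p$-sample model (augmenting $S'$ up to $S$). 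Chaining the two stages and accounting for the constant-factor slack between the simulated and true optima gives the $\alpha(\tfrac12 e^{-8\Delta})$ guarantee; plugging in $\alpha(p) = O(\log(1/p))$ gives $O(\Delta)$.

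For the single-sample strengthening, I would observe that the first-stage construction above uses the MRF only through a single fresh draw $V' \sim \cD_{\cM}$ together with internal coin flips — it never needs the description of $\cM$ or its marginals. The assumed base algorithm in the $p$-sample model is also distribution-oblivious (it only needs to know $p$, which is a function of $\Delta$, assumed known). So the same simulation works verbatim when the algorithm is handed exactly one $n$-dimensional sample from $\cD_{\cM}$: use that sample as $V'$. The only subtlety is ensuring the true arrival $V$ and the given sample $V'$ are independent and identically distributed, which holds by assumption, and that the monotonicity/coupling argument of the second stage does not reference $\cD_{\cM}$ either — which it does not.

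The main obstacle I anticipate is making the first-stage coordinate-wise ``replacement'' construction both correct and benchmark-preserving simultaneously: we need each coordinate's effective sample-inclusion event to have conditional probability in $[p, 1]$ no matter what the other coordinates do (two applications of \Cref{lem:mrf_conditioned_bound}, one per independent copy, plus the fair coin to guarantee the $\ge 1/2$ floor), while also ensuring that the instance actually fed to the $p$-sample algorithm has $\opt$ comparable to the true $\opt(V)$ — this is where the coverage-constraint closure properties \ref{assu:space_union} and \ref{assu:space_subset} and subadditivity of $c$ must be invoked carefully, since replaced coordinates contribute ``phantom'' demands whose covering cost we must charge against either $\opt(V)$ or $\opt(V')$ in expectation. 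Getting the constants right in the benchmark comparison (and confirming $\E[\opt(R)]$ versus $\opt(V)$ are interchangeable for $p < 1/2$, as remarked after \Cref{def:p-sample}) is the delicate bookkeeping step, but it is conceptually routine once the coupling is set up.
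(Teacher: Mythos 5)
Your proposal follows the same architecture as the paper: reduce the MRF prophet model to an intermediate $(\tfrac12,p)$-sample model using two independent copies of the MRF and a random relabeling, then reduce to the $p$-sample independent model via a stochastic-domination coupling plus monotonicity. Your second stage is essentially the paper's \Cref{lem:p-sample_to_half_p_mrf} (sample the inclusion indicators sequentially from their conditional distributions, each of which is at least $p$, so the resulting sample set contains a product-$p$ sample; monotonicity absorbs the augmentation), and your observation that the whole construction only consumes a single fresh draw $V'$ is exactly how the paper gets the single-sample version.

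There is, however, a genuine gap in your first stage. In your ``replacement'' construction the value set fed to the $p$-sample algorithm is $\{u_i\}$ with $u_i = v_i'$ on the sampled coordinates and $u_i = v_i$ otherwise, so the real arrivals of the simulated instance are only $\{v_i : i \notin T\}$ --- the algorithm is never asked to cover the true demands $\{v_i : i \in T\}$, i.e.\ roughly half of $V$. This is a feasibility failure, not a benchmark issue, and it is distinct from the ``phantom demands'' bookkeeping you flag at the end. The paper's fix (\Cref{lem:half_p_to_game_of_googol}) is to build \emph{two} complementary $(\tfrac12,p)$-sample instances, one per copy, whose real sets partition the full demand sequence $R$, run the algorithm on both, and output the union of the two solutions; feasibility of the union uses closure property \ref{assu:space_union}, and the cost doubles, giving the factor $2$ in the final guarantee. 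A secondary point: your claim that each coordinate's conditional inclusion probability is at least $\tfrac12 e^{-8\Delta}$ (``one factor of $e^{4\Delta}$ per copy'') lands on the right number but is not a proof; the paper's \Cref{lem:prophet-to-googol} establishes it by computing the conditional distribution of the partition indicator $\bm\sigma$ given the unordered value pairs and showing it is a \emph{symmetric $2\Delta$-MRF} (edge potentials $\psi_e(v^{\sigma}_e)+\psi_e(v^{-\sigma}_e)$), after which \Cref{lem:mrf_conditioned_bound} applied once to that $2\Delta$-MRF gives the bound. Note also that, conditioned on the realized value set, the inclusion indicators are \emph{not} the independent fair coins $b_i$ --- the coins only randomize the labeling of each pair --- so the ``$\geq 1/2$ from the fair coin'' step also needs the symmetry of this conditional MRF rather than independence of the $b_i$.
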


As mentioned in \Cref{sec:intro-results}, the main application of \Cref{thm:full_reduction_minimization} is the implication of $O(\Delta)$-competitive algorithms for online subadditive coverage problems with $O(\log(1/p))$-competitive monotone algorithm  in the $p$-sample independent model. In particular, in \Cref{sec:monotoneAlgos} we show that existing algorithms (or their variants) due to \cite{GargGuptLeonSank2008,ArgyFrieGuptSeil2022} satisfy this property for online facility location and online Steiner tree, which proves \Cref{cor:log-guarantee-min}.

\paragraph{Proof  Overview.}

In order to obtain the reduction in \Cref{thm:full_reduction_minimization}, we define two intermediate problems: the \emph{MRF Game of Googol} (\Cref{def:GoG}) and the \emph{$(\frac{1}{2},p)$-sample} model (\Cref{def:p-q-sample}). For the Game of Googol, similarly to the original definition of~\cite{CorrChriEpstSoto2020}, we want for each value to randomize between two possible outcomes, but using an MRF distribution instead of a uniformly random choice. For the $(\frac12,p)$-sample model, we require each demand to be in the demand set with some minimum probability.  

\begin{definition} [\textbf{$\Delta$-MRF Game of Googol}] \label{def:GoG}
An adversary chooses arbitrary values $v^1_1, \dots, v^1_n, v^{-1}_1, \dots, v^{-1}_n$ unknown to the algorithm. We then sample $\bm{\sigma} \in \{-1, 1\}^n$ from an \emph{unknown} symmetric\footnote{Symmetric  means that for any $\bm{\sigma} \in \{-1, 1\}^n$, both $\bm{\sigma}$ and $-\bm{\sigma}$ have the same probability.} $\Delta$-MRF distribution $\cD_\sigma$ over $\{-1, 1\}^n$. When a sample $\bm{\sigma} \sim \cD_\sigma$ is drawn, the values $\bm{v}^{\bm{\sigma}} = (v^{\sigma_1}_1, \dots, v^{\sigma_n}_n)$ become the sample set $S = \bm{v}^{\bm{\sigma}}$ and are revealed to the algorithm offline, while the values $\bm{v}^{-\bm{\sigma}} = (v^{-\sigma_1}_1, \dots, v^{-\sigma_n}_n)$  are the real values $R = \bm{v}^{-\bm{\sigma}}$ and arrive sequentially as demands $D = R$. (Note that symmetry implies  $S$ and $R$ are identically distributed.)
\end{definition}

\begin{definition}[\textbf{$(\frac12,p)$-sample MRF model}]\label{def:p-q-sample}
There is an \emph{unknown} adversarial value set $V = \{ v_1, v_2, \dots, v_n \}$. A random sample $S = (s_1, s_2, \dots, s_k) \subseteq V$ is revealed to the algorithm upfront such that the random vector $(\ind{v_i \in S})_{i \in [n]}$ has a $\Delta$-MRF distribution {over $\{0,1\}^n$}. Additionally, we assume that this \emph{unknown} MRF distribution satisfies the following  for each $v_i \in V$:
\begin{itemize}
    \item $\Pr[v_i \in S] \geq 1/2$,
    \item $\Pr[v_i \in S \mid v_{-i}] \geq p$ for any conditioning of other values $v_{-i}:=V\setminus \{i\}$.
\end{itemize}
The remaining values $R := V \setminus S$ form the demand set $D = R$ and arrive sequentially in an adversarial order. 
\end{definition}

We prove Theorem~\ref{thm:full_reduction_minimization} by sequentially reducing from the initial single-sample MRF prophet problem to the Game of Googol MRF, to the $(\frac{1}{2},p)$-sample, and finally to the $p$-sample independent model. Figure~\ref{fig:tikz_reductions_min} shows this reduction, and the main lemmas are given in Sections~\ref{subsec:MRF_to_googol}, \ref{subsec:googol_to_half_p}, and \ref{subsec:half_p_to_p}.

\begin{figure}[H]
    \centering
    \begin{tikzpicture}[node distance=4.2cm,  >=stealth,  thick]
  \tikzstyle{box} = [rectangle, draw, rounded corners, align=center, inner sep=6pt]
  \node[box] (A) {Single-sample\\ MRF prophet};
  \node[box,right of=A] (B) {MRF\\ Game-of-Googol};
  \node[box,right of=B] (C) {$(\tfrac12,p)$–sample\\ MRF};
  \node[box,right of=C] (D) {$p$-sample\\ independent model};
  
  \draw[->] (A) -- node[above]{Lem \ref{lem:prophet-to-googol}} (B);
  \draw[->] (B) -- node[above]{Lem~\ref{lem:half_p_to_game_of_googol}} (C);
  \draw[->] (C) -- node[above]{Lem~\ref{lem:p-sample_to_half_p_mrf}} (D);

  \draw[->, bend left=20] (A) to node[above]{Thm~\ref{thm:full_reduction_minimization}} (D);

\end{tikzpicture}
    \caption{Outline of the sequence of reductions for the minimization case. An arrow $A \rightarrow B$ means problem A reduces to problem B.}
    \label{fig:tikz_reductions_min}
\end{figure}
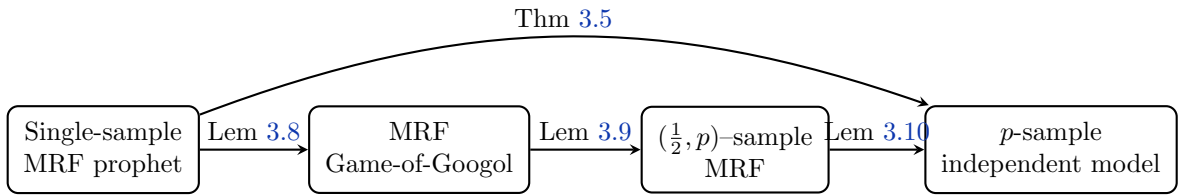

    \begin{proof}[Proof of Theorem~\ref{thm:full_reduction_minimization}]
        Starting with an instance of (single-sample) $\Delta$-MRF prophet, we sequentially convert it first through \Cref{lem:prophet-to-googol} to an instance of a $2\Delta$-MRF Game of Googol, then through \Cref{lem:half_p_to_game_of_googol} to an instance of $(\frac{1}{2},p)$-sample MRF with $p=\frac{1}{2}e^{-8\Delta}$, and finally using \Cref{lem:p-sample_to_half_p_mrf} to a $p$-sample independent instance, with $p=\frac{1}{2}e^{-8\Delta}$. Therefore, an $\alpha(p)$-approximation for the $p$-sample model implies a $2\alpha\lp(\frac{1}{2}e^{-8\Delta}\rp)$-approximation for the initial MRF minimization model.
    \end{proof}

\subsection{Proof  of \Cref{thm:full_reduction_minimization}}\label{sec:min-proof}

We now prove our three reductions in \Cref{lem:prophet-to-googol}, \Cref{lem:half_p_to_game_of_googol}, and  \Cref{lem:p-sample_to_half_p_mrf}.

\subsubsection{Reducing Single-Sample $\Delta$-MRF Prophet  to Game of Googol MRF}\label{subsec:MRF_to_googol}

In this section, we show that any online optimization problem in the single-sample $\Delta$-MRF prophet model can be reduced to the Game of Googol MRF. 

\begin{lemma}\label{lem:prophet-to-googol}
    The single-sample $\Delta$-MRF prophet model can be reduced to a random $(2\Delta)$-MRF Game of Googol instance.
\end{lemma}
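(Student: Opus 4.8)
## Proof Plan for Lemma~\ref{lem:prophet-to-googol}

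The plan is to mimic the classical Game-of-Googol reduction of \cite{CorrChriEpstSoto2020}, adapting it to correlated coordinates. In the single-sample $\Delta$-MRF prophet model we receive an offline sample $A\sim\cD$ and then the demands $D\sim\cD$ arrive online, with $A\perp D$ and $\cD$ an unknown $\Delta$-MRF. I would build the Game-of-Googol instance coordinatewise: set $v^1_i:=\min(A_i,D_i)$, $v^{-1}_i:=\max(A_i,D_i)$, and $\sigma_i:=+1$ iff $A_i=v^1_i$ (breaking ties, i.e.\ coordinates with $A_i=D_i$, by independent fair coins). Then $\bm{v}^{\bm{\sigma}}=A$ and $\bm{v}^{-\bm{\sigma}}=D$, so the reduction is transparent: run the given Game-of-Googol algorithm $\cA$ with the revealed sample set $S:=A$ and feed it the online demands $D_1,D_2,\dots$ as the real values $R$, buying exactly the elements $\cA$ buys; the resulting cost equals $\cA$'s cost on this instance.

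The substance of the proof is checking that what we have built is a legitimate \emph{random} $(2\Delta)$-MRF Game-of-Googol instance. The $\min/\max$ labeling is symmetric in $(A,D)$, so $(v^1,v^{-1})$ is chosen with no reference to $\bm{\sigma}$; the key departure from the i.i.d.\ case --- where sorting makes $\bm{\sigma}$ \emph{globally} independent of the pairs --- is that here $\bm{\sigma}$ is \emph{not} independent of $(v^1,v^{-1})$. The fix is to show that conditioned on \emph{each} realization $(v^1,v^{-1})=(x,y)$ the law of $\bm{\sigma}$ is still a symmetric $2\Delta$-MRF. Indeed, the event $\{\bm{\sigma}=s,\ (v^1,v^{-1})=(x,y)\}$ is exactly $\{A=a^{(s)},\ D=d^{(s)}\}$, where $a^{(s)}_i:=x_i$ if $s_i=1$ and $y_i$ if $s_i=-1$, and $d^{(s)}$ is its complement; since $A,D$ are i.i.d.\ $\sim\cD$ this event has density $\cD(a^{(s)})\,\cD(d^{(s)})$, so conditionally $\Pr[\bm{\sigma}=s\mid v^1,v^{-1}]\propto\cD(a^{(s)})\,\cD(d^{(s)})$. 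Symmetry is immediate from $a^{(-s)}=d^{(s)}$ and $d^{(-s)}=a^{(s)}$. For the degree bound I would plug in $\cD(u)\propto\exp\big(\sum_i\psi_i(u_i)+\sum_e\psi_e(u_e)\big)$: since each coordinate of $a^{(s)}$ and of $d^{(s)}$ depends only on the corresponding $s_i$, the product is $\exp\big(\sum_i\tilde\psi_i(s_i)+\sum_e\tilde\psi_e(s_e)\big)$ with $\tilde\psi_e(s_e)=\psi_e(a^{(s)}_e)+\psi_e(d^{(s)}_e)$, i.e.\ an MRF on $\{-1,1\}^n$ whose weighted degree satisfies $|\sum_{e\ni i}\tilde\psi_e(s_e)|\le|\sum_{e\ni i}\psi_e(a^{(s)}_e)|+|\sum_{e\ni i}\psi_e(d^{(s)}_e)|\le\Delta+\Delta=2\Delta$, using $a^{(s)},d^{(s)}\in\Omega$.

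To transfer the guarantee, note that for each fixed $(x,y)$ the pair $\big((x,y),\ \Pr[\bm{\sigma}=\cdot\mid v^1,v^{-1}=(x,y)]\big)$ is a valid deterministic-values $2\Delta$-MRF Game-of-Googol instance, on which $\cA$ is competitive; averaging over the (random) realization of $(v^1,v^{-1})$ --- a mixture of such instances --- preserves the competitive ratio, since both $\cA$'s cost and the benchmark integrate. Finally, the Game-of-Googol benchmark $\opt(S\cup R)=\opt(A\cup D)$ satisfies $\E[\opt(A\cup D)]\le\E[\opt(A)]+\E[\opt(D)]=2\,\E_{D\sim\cD}[\opt(D)]$, using subadditivity of $c$ together with the union property of \Cref{assu:sol_space} and $A\overset{d}{=}D$, so the reduction loses only a constant factor relative to the prophet benchmark. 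I expect the main obstacle to be precisely this decoupling failure: unlike for product distributions one cannot make $\bm{\sigma}$ independent of the adversary's values, and the crux is to pin down the conditional law of $\bm{\sigma}$ given $(v^1,v^{-1})$ and verify that its weighted degree is still only $2\Delta$.
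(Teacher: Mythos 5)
Your proposal is correct and follows essentially the same route as the paper: condition on the unordered pairs $(v^1,v^{-1})$, observe that the vertex potentials cancel and the conditional law of $\bm{\sigma}$ is $\propto\exp\bigl(\sum_e(\psi_e(v^{\sigma}_e)+\psi_e(v^{-\sigma}_e))\bigr)$, a symmetric MRF of weighted degree at most $2\Delta$. The only cosmetic differences are that the paper labels the pair via an independent uniform $\bm{\sigma}$ rather than your $\min/\max$ sorting (both yield the identical conditional distribution), and that you spell out the benchmark comparison $\E[\opt(A\cup D)]\le 2\,\E[\opt(D)]$, which the paper defers to the surrounding reduction chain.
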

\begin{proof}
    Let $\cD$ be a $\Delta$-MRF distribution, and consider a single-sample prophet setting where $S \sim \cD$ and $R = D \sim \cD$ are the sets of samples and real values respectively. Let $\bm{\sigma}\in \{-1,1\}^n$ be a uniformly random vector, drawn independently of $S$ and $R$. Finally, let $V = (v^1_1, \dots, v^1_n, v^{-1}_1, \dots, v^{-1}_n)$ be given by
    \[
    \forall i \in [n],\quad 
    (v_i^1, v^{-1}_i) = \begin{cases}
        (s_i, r_i) & \text{if } \sigma_i=1, \\
        (r_i, s_i) & \text{if } \sigma_i=-1 .
    \end{cases}
    \]
    We show that for any realization of $V$, the distribution $\cD^V_{\bm \sigma}$ of $\bm \sigma$ conditional on $V$ can be represented as a $2\Delta$-MRF. (Note that this distribution depends on $V$, so is unknown to the algorithm, but we only need its existence.)
    Let $\psi_i$ for $i \in [n]$ and $\psi_e$ for $e \in E$ be the vertex and edge potentials in the MRF $\cD$ respectively. From Definition~\ref{def:MRF}, we know that for any fixed $v_1, \dots, v_n$, 
    \[
    \Pr_{S \sim \cD}[S = (v_1, \dots, v_n)] \propto \exp\Big(\sum_{i \in [n]} \psi_i(v_i) + \sum_{e \in E}\psi_e(v_e)\Big).
    \]
    Using this, we compute the probability mass function of $\cD^V_{\bm \sigma}$. The idea is that $V = V(S, R, \bm \sigma')$ depends on three independent sources of randomness: the MRF samples $S \sim \cD$ and $R \sim \cD$, as well as the uniform $\bm \sigma' \sim \mathrm{Unif}\{-1,1\}^n$. We show that, if given only the result $V$, then the conditional distribution on $\bm \sigma'$ can be expressed in terms of the distributions of $S$ and $R$ using Bayes' theorem:
\begin{align*}
    \Pr_{\bm \sigma', S, R}[\bm \sigma' = \bm \sigma \mid V(S, R, \bm \sigma') = V] &= 
    \frac{\Pr_{\bm \sigma',S, R}[V(S, R, \bm \sigma') = V \mid \bm \sigma' = \bm \sigma] \cdot \Pr_{\bm \sigma'}[\bm \sigma' = \bm \sigma]}{\Pr_{\bm \sigma', S, R}[V(S, R, \bm \sigma') = V]} \\
 &   = 
    \frac{\Pr_{\bm \sigma',S, R}[V(S, R, \bm \sigma') = V \mid \bm \sigma' = \bm \sigma] \cdot 2^{-n} }{\Pr_{\bm \sigma', S, R}[V(S, R, \bm \sigma') = V]} \\
    &\propto \Pr_{\bm \sigma',S, R}[V(S, R, \bm \sigma') = V \mid \bm \sigma' = \bm \sigma] \\
    &= \Pr_{S, R}[V(S, R, \bm \sigma) = V] \\
    &= \Pr_{S}[S= v^{\bm \sigma}]\cdot \Pr_{R}[R= v^{-\bm{\sigma}}]
\end{align*}
    
    Next, let $\bm{v}^{\bm{\sigma}} := (v_1^{\sigma_1}, \dots, v_n^{\sigma_n})$, and likewise define $\bm{v}^{-\bm{\sigma}}$. We have
    \begin{align*}
        \Pr_{\bm{\sigma'}\sim \cD^V_{\bm \sigma}}[\bm{\sigma'} = \bm \sigma] &\propto \Pr_{S \sim \cD}[S = \bm{v}^{\bm{\sigma}}] \cdot \Pr_{R \sim \cD}[R = \bm{v}^{-\bm{\sigma}}]\\
        &\propto \exp\Big(\sum_{i \in [n]} (\psi_i(v^{\sigma}_i) + \psi_i(v^{-\sigma}_i)) + \sum_{e \in E}(\psi_e(v^{\sigma}_e) + \psi_e(v^{-\sigma}_e))\Big)\\
        &\propto \exp\Big(\sum_{e \in E}(\psi_e(v^{\sigma}_e) + \psi_e(v^{-\sigma}_e))\Big).
    \end{align*}
    Therefore, $\cD^V_{\bm \sigma}$ can be represented as an MRF with potential functions $\hat\psi_i(\sigma_i) = 0$ and $\hat \psi_e(\bm \sigma_e) = \psi_e(v^{\sigma}_e) + \psi_e(v^{-\sigma}_e)$, and observe that by our construction the maximum degree of this MRF is at most $2\Delta$. Additionally, it is not hard to see that $\cD_{\bm{\sigma}}^V$ is symmetric, has uniform marginals, and is known to the algorithm as it is determined by $V$ and $\cD$. Hence, we this gives a valid game of googol instance.
      
  Completing the proof, the original single-sample prophet instance can be generated by first sampling $V$ and $\bm{\sigma} \sim \cD^V_{\bm \sigma}$, and then setting $S = \bm{v}^{\bm{\sigma}}$ and $R = \bm{v}^{-\bm{\sigma}}$. With these sets, we can run $\alpha$-competitive algorithm for the $(2\Delta)$-MRF Game of Google model to get an $\alpha$-competitive ratio in the  $\Delta$-MRF setting.
\end{proof}

\subsubsection{Reducing Game of Googol to \texorpdfstring{$(\frac{1}{2}, p)$-sample}{(1/2, p)-sample} MRF}\label{subsec:googol_to_half_p}

Now we reduce the Game of Googol model to the $(\frac{1}{2},p)$-sample MRF. The reduction works by splitting the initial Game of Googol instance into two disjoint instances for the $(\frac{1}{2}, p)$-MRF problem, then using the approximation for this problem we are able to map the decisions one to one in our initial instance. We formally show the following.

\begin{lemma}\label{lem:half_p_to_game_of_googol}
For $p = \frac{1}{2}\exp(-4\Delta)$, if there is an $\alpha$-approximation algorithm for the $(\frac{1}{2}, p)$-sample MRF model then there is a $2\alpha$-approximation for the Game of Googol model with max degree $\Delta$.
\end{lemma}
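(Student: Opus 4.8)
The plan is to split the $2n$ Game-of-Googol values into the two halves $V^{(1)}:=\{v^1_1,\dots,v^1_n\}$ and $V^{(-1)}:=\{v^{-1}_1,\dots,v^{-1}_n\}$, run the assumed $\alpha$-approximation $\mathcal A$ for the $(\tfrac12,p)$-sample MRF model (\Cref{def:p-q-sample}) on each half separately, and paste the two solutions together. Concretely, for $b\in\{-1,1\}$ define the instance $\mathcal I_b$ with value set $V^{(b)}$: the value $v^b_i$ is placed in $\mathcal I_b$'s offline sample exactly when pair $i$'s revealed value was $v^b_i$ (i.e.\ when $\sigma_i=b$), and otherwise $v^b_i$ is one of $\mathcal I_b$'s online demands, arriving in the same relative order as in the Game of Googol. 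Since the Game-of-Googol demand set splits as $R=(R\cap V^{(1)})\sqcup(R\cap V^{(-1)})$, when a demand arrives online we know which $\mathcal I_b$ it belongs to and forward it to the corresponding run of $\mathcal A$, buying in the Game of Googol exactly the elements that run buys; the final solution is $F:=F^{(1)}\cup F^{(-1)}$, where $F^{(b)}$ is $\mathcal A$'s solution on $\mathcal I_b$.

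The crux is to verify that each $\mathcal I_b$ is a legal $(\tfrac12,p)$-sample MRF instance with $p=\tfrac12 e^{-4\Delta}$. The inclusion-indicator vector $\big(\ind{v^b_i\in S}\big)_{i\in[n]}$ equals $(\ind{\sigma_i=b})_{i\in[n]}$, which is the image of $\bm\sigma\sim\cD_\sigma$ under a coordinatewise bijection $\{-1,1\}\to\{0,1\}$; such a relabelling of the per-coordinate state spaces only relabels the MRF potentials and leaves the weighted maximum degree unchanged, so this indicator vector is a $\Delta$-MRF on $\{0,1\}^n$. Symmetry of $\cD_\sigma$ gives $\Pr[\sigma_i=b]=\tfrac12$ for every $i$, which is the required marginal bound $\Pr[v^b_i\in S]\ge\tfrac12$. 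Finally, \Cref{lem:mrf_conditioned_bound} applied with $E_i=\{b\}$ gives, for any conditioning on the remaining coordinates, $\Pr[\sigma_i=b\mid\sigma_{-i}]\ge\Pr[\sigma_i=b]\cdot e^{-4\Delta}=p$, which is exactly the required conditional lower bound. Hence $\mathcal A$'s guarantee applies to each of $\mathcal I_1$ and $\mathcal I_{-1}$.

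Feasibility and the cost bound are then routine. At every time step $F^{(b)}$ is feasible for the demands of $\mathcal I_b$ seen so far, so by property~(a) of \Cref{assu:sol_space} the union $F^{(1)}\cup F^{(-1)}$ is feasible for the union of those demand sets, which is precisely the set of demands that have arrived so far in the Game of Googol. By subadditivity of $c$ we get $c(F)\le c(F^{(1)})+c(F^{(-1)})$, and taking expectations and using that each copy of $\mathcal A$ is $\alpha$-competitive yields $\E[c(F)]\le\alpha\cdot\opt(V^{(1)})+\alpha\cdot\opt(V^{(-1)})$. Since $V^{(1)},V^{(-1)}$ are subsets of the $2n$ adversarially chosen values and covering a smaller demand set is never more expensive (property~(b) of \Cref{assu:sol_space}), each $\opt(V^{(b)})$ is at most the Game-of-Googol optimum $\opt$ (the optimum over all $2n$ values, which by symmetry of $\cD_\sigma$ agrees with $\E_{\bm\sigma}[\opt(R)]$ up to a factor $2$), so $\E[c(F)]\le 2\alpha\cdot\opt$, the claimed $2\alpha$-approximation.

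The main thing to get right is the validity check in the second paragraph — that the relabelling $\{-1,1\}\to\{0,1\}$ really preserves the weighted maximum degree, that it is symmetry of $\cD_\sigma$ (and not equality at exactly $\tfrac12$) that delivers the $\ge\tfrac12$ marginal bound, and that \Cref{lem:mrf_conditioned_bound} supplies precisely the $\ge p$ conditional bound with $p=\tfrac12 e^{-4\Delta}$ — together with the small bookkeeping that the two sub-instances can be simulated online given the pair structure of the revealed sample and the arriving demands, and that property~(a) of \Cref{assu:sol_space} is exactly what lets the two online solutions be merged without ever violating feasibility at an intermediate step.
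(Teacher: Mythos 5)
Your proof is correct and follows essentially the same route as the paper's: the same split of the $2n$ values into the two rows $V^{(1)},V^{(-1)}$, the same verification that each sub-instance is a valid $(\tfrac12,p)$-sample MRF instance via symmetry of $\cD_\sigma$ and \Cref{lem:mrf_conditioned_bound}, and the same combination of the two online solutions via property~(a) of \Cref{assu:sol_space}, subadditivity, and the bound $\opt(V^{(b)})\le\opt_{GoG}$. Your added check that the $\{-1,1\}\to\{0,1\}$ relabelling preserves the weighted maximum degree is a detail the paper leaves implicit, but otherwise the two arguments coincide.
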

\begin{proof}[Proof of Lemma 3.9]
    Our algorithm will partition the GoG instance with sample/real sets $(S, R)$ into two subinstances $(S_1, R_1)$ and $(S_2, R_2)$, and run the $(\frac{1}{2}, p)$-sample algorithm on each subinstance. To create this partition, we will flip independent coins for each $i$. If coin $i$ is heads, we add $s_i$ to $S_1$ and $r_i$ to $R_2$. If tails, we add $s_i$ to $S_2$ and $r_i$ to $R_1$. This ensures $|S_1 \cup R_1| = |S_2 \cup R_2| =n$, so each subinstance is a valid input to the $(\frac{1}{2},p)$-sample algorithm. It now only remains to bound the performance of this algorithm.

    Formally, let $\mathcal{I}$ be an instance of the Game of Googol model, and denote by $ \opt_{GoG}(\mathcal{I})$ and $\alg_{GoG}(\mathcal{I})$ our final algorithm's and the optimal solution cost for $\cI$ respectively.
    Denote by $v^1_1, \dots, v^1_n$ and $v^{-1}_1, \dots, v^{-1}_n$ the values in $\cI$, and by $\bm{\sigma} \sim \cD_{\bm{\sigma}}$ the MRF where $\bm{\sigma} \in \{-1, 1\}^n$. 
    This defines a sample set $S =  (v^{\sigma_1}_1, \dots, v^{\sigma_n}_n)$ and a real set $R = (v^{-\sigma_1}_1, \dots, v^{-\sigma_n}_n)$ in the instance $\mathcal{I}$.
    
    To reduce to the $(\frac{1}{2}, p)$-sample MRF setting, we will split the instance $(S, R)$ into two subinstances $(S_1, R_1)$ and $(S_2, R_2)$ as follows. First, we sample $\bm{\Tilde \sigma} \sim \mathrm{Unif}\{-1, 1\}^n$. Then, we define
    \begin{align*}
        S_1 &:= \{ s_i \in S : \Tilde \sigma_i = 1 \} \\
        R_1 &:= \{ r_i \in R : \Tilde \sigma_i = -1 \} \\
        S_2 &:= \{ s_i \in S : \Tilde \sigma_i = -1 \} \\
        R_2 &:= \{ r_i \in R : \Tilde \sigma_i = 1 \}
    \end{align*}
    Using these sets, we construct two instances: \( \cI_1 = (S_1, R_1) \) and \( \cI_2 = (S_2, R_2) \), where for $i\in\{1,2\}$, we have \( S_i \) represent the sample set and \( R_i \) represent the real set.

    Starting with $\cI_1$, we seek to show that the instance $(S_1, R_1)$ has a valid $(\frac{1}{2}, p)$-sample MRF distribution conditional on any realization of $S_1 \cup R_1 = \{v_i^{\Tilde \sigma_i \sigma_i} : i \in [n] \}$. Notice that for any value $v_i \in S_1 \cup R_1$, we have: 
    \[
    \Pr[v_i \in S_1 \mid S_1 \cup R_1] = \Pr[ \sigma_i = 1 \mid (\Tilde \sigma_j \sigma_j)_{j \in [n]}] = \Pr[\sigma_i = 1] = \frac{1}{2}
    \]
    by the definition of Game of Googol MRF and the independence of $(\Tilde \sigma_j \sigma_j)_{j \in [n]}$ and $\bm \sigma$, and
    \[
    \Pr[v_i \in S_1 \mid S_1 \setminus \{v_i\},~S_1 \cup R_1] = \Pr[\sigma_i = 1 \mid \sigma_{-i},~(\Tilde \sigma_j \sigma_j)_{j \in [n]}] = \Pr[\sigma_i = 1 \mid \sigma_{-i}] \geq \frac{1}{2} \exp\left( -4 \Delta \right)
    \]
    by Fact 2.3 and the aforementioned independence.

    The equivalent property also holds for $\cI_2$, so both $\cI_1$ and $\cI_2$ are (random) instances for the $(\frac{1}{2}, p)$-sample MRF for $p = \frac{1}{2} \exp\left( -4 \Delta \right)$. Denote by \( \alg_{MRF} \) the cost of the \(\alpha(p)\)-approximation algorithm for the $(\frac{1}{2}, p)$-sample MRF model (we omit the $p$ for brevity) and denote by $\opt_{MRF}$ the optimal cost.
    
    For the instances $\cI_1$ and $\cI_2$ we run $\alg_{MRF}$. In each instance the algorithm buys a set of elements $F_{R_1}$ (resp. $F_{R_2}$) to satisfy the incoming demands from $R_1$ (resp. $R_2$).
    In the initial instance we sequentially see demands from both $\cI_1$ and $\cI_2$ mixed with each other. Upon seeing demand $v_i\in R$, we add to the solution set the elements $e_i \in \cF_{R_1}$ given by $\alg_{MRF}$ to satisfy $v_i \in R_1$ (resp. $e_i\in \cF_{R_2}$ if $v_i\in R_2$). 
    Therefore the final solution of $\alg_{GoG}$ is $\cF_{R_1}\cup \cF_{R_2}$ which is a feasible solution for the Game of Googol MRF by Assumption 3.1, (a). The final costs of our reduction are
    \begin{align*}
        \alg_{GoG}(\mathcal{I}) & \leq\alg_{MRF}(\cI_1) + \alg_{MRF}(\cI_2)  & \text{Subadditivity}\\
        & \leq  \alpha \opt_{MRF}(\cI_1) +  \alpha \opt_{MRF}(\cI_
2) & \alpha\text{-approx guarantee}\\
        & \leq \alpha \opt_{MRF}(\cI_1 \cup \cI_2) + \alpha \opt_{MRF}(\cI_1 \cup \cI_2) & \text{Monotonicity, (Assumption 3.1, (b))}\\
        & = 2\alpha \opt_{GoG}(\mathcal{I}) & \text{Definition of }\cI_1, \cI_2.    
    \end{align*}
    \end{proof}

\subsubsection{Reducing \texorpdfstring{$(\frac{1}{2}, p)$-sample}{(1/2, p)-sample} MRF to \texorpdfstring{$p$-sample}{p-sample} independent}\label{subsec:half_p_to_p}

Finally, we observe that any approximation guarantee in the $p$-sample model immediately translates to the $(\frac{1}{2}, p)$-sample model.

\begin{lemma}\label{lem:p-sample_to_half_p_mrf}
If there exists an $\alpha(p)$-competitive monotone algorithm in the $p$-sample independent model, then there exists an $\alpha(p)$-competitive for the $(\frac{1}{2}, p)$-sample MRF model.
\end{lemma}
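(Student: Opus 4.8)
The plan is to show that a monotone $\alpha(p)$-competitive algorithm for the $p$-sample independent model can be run essentially verbatim on a $(\frac12,p)$-sample MRF instance, with the monotonicity property absorbing the discrepancy between the two sampling distributions. Recall the key difference: in the $p$-sample model each $v_i$ lands in the sample set \emph{independently} with probability exactly $p$, whereas in the $(\frac12,p)$-sample MRF model the indicator vector $(\ind{v_i \in S})_i$ follows a $\Delta$-MRF distribution with the two guarantees $\Pr[v_i \in S] \ge \tfrac12$ and $\Pr[v_i \in S \mid v_{-i}] \ge p$ for every conditioning. So the $(\frac12,p)$-sample set is, intuitively, ``more generous'' than a $p$-sample set, and an algorithm that tolerates adversarial augmentation of a $p$-sample set should tolerate it as well.

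The main step is a coupling argument. First I would construct a coupling between the $(\frac12,p)$-sample set $S$ and a ``baseline'' $p$-sample set $S'$ such that $S \supseteq S'$ almost surely, where $S'$ contains each $v_i$ independently with probability $p$. The natural way to do this is to reveal the coordinates of $S$ one at a time in some fixed order; when deciding coordinate $i$, the conditional probability that $v_i \in S$ given the already-revealed coordinates is at least $p$ by the second defining property of the $(\frac12,p)$-sample MRF model, so one can split the randomness of that coordinate into an independent $\mathrm{Bernoulli}(p)$ part (defining $\ind{v_i \in S'}$) plus a residual part, ensuring $\ind{v_i \in S'} \le \ind{v_i \in S}$ coordinatewise. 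This yields $S' \subseteq S$ with $S'$ distributed exactly as an independent $p$-sample, and $S$ is then a (randomized, but that is fine) augmentation of $S'$ by moving some real values into the sample. Second, I would observe that the demand set in the $(\frac12,p)$-sample instance is $D = V \setminus S$, which is exactly $V$ minus the augmented sample set, matching the setup in the definition of monotonicity; the arrival order of $D$ is adversarial in both models, so the algorithm's behavior and cost on the $(\frac12,p)$-sample instance is identical in distribution to its behavior when run in the $p$-sample model with sample augmented to $S$. Third, by the monotonicity hypothesis, the algorithm is still $\alpha(p)$-competitive against $\opt(V)$ under this augmentation, and since $\opt(V)$ is the same benchmark in both models (it depends only on $V$, not on the sample), we conclude the algorithm is $\alpha(p)$-competitive in the $(\frac12,p)$-sample MRF model.

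One technical point to handle carefully: the definition of monotonicity as stated allows the augmentation $S \supseteq S'$ to be adversarial (``possibly adversarially''), so it already covers the case where the extra elements placed in the sample are chosen by a correlated/MRF-driven process rather than adversarially — a randomized augmentation is a distribution over adversarial augmentations, so competitiveness is preserved in expectation. I would make this explicit: condition on the realization of the residual randomness that determines $S \setminus S'$, apply the monotone guarantee for that fixed augmentation, then take expectations. I expect the main (minor) obstacle to be getting the coupling direction and the conditioning order exactly right — in particular making sure the independent $\mathrm{Bernoulli}(p)$ variables $\ind{v_i \in S'}$ are genuinely mutually independent despite being extracted sequentially from a correlated process, which follows because we extract a \emph{fresh} independent $\mathrm{Bernoulli}(p)$ at each step regardless of history, using only that the conditional probability of $v_i \in S$ exceeds $p$. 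Everything else (feasibility of the solution on $D$, the benchmark being unchanged) is immediate from the problem definitions.

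\begin{proof}
We are given an $\alpha(p)$-competitive monotone algorithm $\alg$ in the $p$-sample independent model. We run it directly on the $(\frac12,p)$-sample MRF instance: the revealed sample set $S$ is fed to $\alg$ upfront, and the demands $D = V \setminus S$ are passed to $\alg$ in their (adversarial) arrival order. Feasibility of the resulting solution for $\cF_D$ is immediate since $\alg$ produces feasible solutions for its demand set.

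It remains to bound the cost. Fix an arbitrary order on $[n]$, say $1, 2, \dots, n$. We couple the sample set $S$ with an independent $p$-sample set $S'$ as follows. Process the coordinates in order; when processing $i$, let $q_i := \Pr[v_i \in S \mid (\ind{v_j \in S})_{j < i}]$ be the conditional probability of $i$ joining the sample given the outcomes already fixed. By the defining property of the $(\frac12,p)$-sample MRF model, $q_i \ge p$. Draw $\ind{v_i \in S}$ according to this conditional, and then set $\ind{v_i \in S'} = 1$ with probability $p/q_i$ if $v_i \in S$, and $\ind{v_i \in S'} = 0$ otherwise. This guarantees $\ind{v_i \in S'} \le \ind{v_i \in S}$ pointwise, so $S' \subseteq S$, and $\Pr[v_i \in S' \mid (\ind{v_j \in S'})_{j<i}, \text{history}] = q_i \cdot (p / q_i) = p$ regardless of the past; hence the variables $(\ind{v_i \in S'})_{i \in [n]}$ are mutually independent, each Bernoulli$(p)$, i.e.\ $S'$ is exactly a $p$-sample of $V$.

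Thus $S$ is obtained from the independent $p$-sample $S'$ by moving the (random) set $S \setminus S'$ of real values into the sample, and the demand set is $V \setminus S$ — exactly the augmentation scenario in the definition of monotonicity. Conditioning on any realization of $S \setminus S'$, the monotone guarantee of $\alg$ gives that its cost is at most $\alpha(p) \cdot \opt(V)$; taking expectations over all the randomness, the expected cost of $\alg$ on the $(\frac12,p)$-sample MRF instance is at most $\alpha(p)\cdot \opt(V)$. Since $\opt(V)$ is the benchmark in both models, $\alg$ is $\alpha(p)$-competitive in the $(\frac12,p)$-sample MRF model.
\end{proof}
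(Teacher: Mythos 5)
Your proof is correct and follows essentially the same route as the paper: a sequential coupling that extracts an independent $\mathrm{Bernoulli}(p)$ sample $S'\subseteq S$ using the conditional lower bound $\Pr[v_i\in S\mid \cdot]\ge p$, so that $S$ is an augmentation of $S'$ and monotonicity yields the $\alpha(p)$ guarantee. The only difference is that you spell out the coupling (the $p/q_i$ thinning) explicitly, whereas the paper merely asserts its existence; both arguments are sound.
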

\begin{proof}
For any instance, our algorithm in the $(\frac{1}{2}, p)$-sample MRF model runs the $\alpha(p)$-approximation monotone subroutine in the $p$-sample independent model. 
This is possible since we can send the sample set in $(\frac{1}{2}, p)$-sample model as input to the $p$-sample independent subroutine. We claim that this algorithm is  $\alpha(p)$-competitive in the $(\frac{1}{2}, p)$-sample MRF model.

To prove this, we show that the sample-set sent to the $p$-sample  subroutine can be viewed as first drawing each value of the adversarial set $V=\{v_1,\ldots,v_n\}$ independently with probability $p$ and then augmenting it with some additional real values. Due to monotonicity of the algorithm, the algorithm remains $\alpha(p)$-competitive even after this augmentation.

To this end, we observe that the sample-set distribution $\mathcal{D}_1$ in the $(\frac{1}{2}, p)$-sample majorizes the sample-set distribution  $\mathcal{D}_2$ in the $p$-sample independent model. In other words, there exists a coupling between $\mathcal{D}_1$ and $\mathcal{D}_2$ such that $S_1 \supseteq S_2$ with   $S_1 \sim \mathcal{D}_1 $ and  $S_2 \sim \mathcal{D}_2$.  Hence, $S_1\setminus S_2$ can be viewed as an augmentation. The coupling exists because we can sample the elements of $S_1$ one by one, going from $v_1$ to $v_n$, from the conditional distribution (i.e., sample whether $v_i$ is in $S_1$ after conditioning on $v_1,\ldots, v_{i-1}$);  each value $v_i$ appears in $S_1$  with probability at least $p$ in this conditional distribution, irrespective of the prior conditioning, due to \Cref{def:p-q-sample}.  Hence, $\mathcal{D}_1$ majorizes  $\mathcal{D}_2$.
\end{proof}

\section{Online Matchings and Combinatorial Auctions}\label{sec:maximization}

Now we present our results for  online allocation problems such as online combinatorial auctions and $k$-uniform hypergraph matching. 

In an \emph{online allocation problem},  we have $m$ items and $n$ buyers, where the buyers arrive one by one with their valuations 
$V = (v_1, \dots, v_n)$ 
 drawn from some known correlated distribution $\mathcal{D}$, where $v_i : 2^{[m]}\rightarrow \R_{\geq 0}$. The goal is to immediately allocate the $i$-th buyer a subset of the remaining items to maximize the total expected welfare $\E_V \lp[ \sum_{i\in[n]} v_i(S_i)\rp]$. We will assume that $\mathcal{D}$ is an MRF distribution over $\Omega = \Omega_1\times \ldots \times \Omega_n$, where any element of $\Omega_i$ is called a \emph{type} of buyer $i$ and each type corresponds to a unique valuation function.

We study two cases of valuation functions that lead to the following two settings: (1) online combinatorial auctions with XOS valuations, and (2) $k$-uniform hypergraph matching. For both cases we design \emph{posted-price} algorithms---each incoming buyer faces fixed prices $p_1,\ldots, p_m \in \R_{\geq 0}$ and takes the  {utility} maximizing decision  $S_i = \arg\!\max_{S  \in \text{Remaining}} \big( v_i(S) - \sum_{j \in S} p_j \big)$---that guarantee linear in $\Delta$ fraction of the optimal welfare. Note that our results imply $O(\Delta)$-competitive algorithm for online matching in general graphs since this is captured by $k$-uniform hypergraph matching for $k=2$.

\begin{theorem}\label{thm:max_main}
Let $\cD$ be an $\Delta$-MRF distribution over valuation profiles $V = (v_1, \dots, v_n)$. Then, we have posted-pricing algorithms with the following guarantees:
\begin{itemize}
    \item \textbf{XOS valuations}. When each buyer's valuation has the form $v_i(S) = \max_{a \in \cA_i} \sum_{j \in S} a_j$, where each $\cA_i$ is a subset of $\Rp^m$, then there is an $O({\Delta})$-competitive algorithm. 
    
    \item \textbf{$k$-uniform Hypergraph matching}. When each buyer's valuation has the form $ v_i(S) = w_i \cdot \ind{e_i \subseteq S}$ 
for some $w_i \geq 0$ and hyper-edge $e_i \subseteq [m]$ with $|e_i| \leq k$, then there is an $O({ k^2 \cdot (\Delta + \log k)})$-competitive algorithm.
\end{itemize}    
\end{theorem}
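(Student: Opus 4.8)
The plan is to instantiate the \emph{balanced prices} framework of \cite{DuttFeldKessLuci2020} for MRF-distributed valuations. Recall that with any static item prices $\{p_j\}_{j\in[m]}$ (write $p(S)=\sum_{j\in S}p_j$), the posted-price algorithm collects welfare $\sum_i u_i+\sum_{j\in T}p_j$, where $u_i$ is buyer $i$'s realized utility and $T$ is the set of sold items; since buyer $i$ could have bought $O_i\cap A_i$ --- its bundle $O_i=O_i(V)$ in the offline optimum $\opt(V)$, restricted to the items $A_i$ still available on arrival --- we have $u_i\ge v_i(O_i\cap A_i)-p(O_i\cap A_i)$, and, using that every item of $O_i\setminus A_i$ was already sold, the usual manipulation lower-bounds $\E[\alg]$ in terms of $\E[\opt]$, with a ratio controlled by two properties of the prices: \emph{not over-charging} (so $\sum_{j\in T}p_j$ is comparable to the collected welfare) and \emph{revenue-covering} (so $\sum_j p_j$ captures a good fraction of $\E[\opt]$). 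Thus the task reduces to constructing prices with these properties and as small a ratio as possible. We attach to each item $j$ a \emph{balanced value} $b_j$: for XOS, $b_j=\E_V[a^{(i)}_j]$, where $a^{(i)}\in\cA_i$ is the additive function supporting $v_i(O_i)$ and $i$ is the buyer holding $j$ in $\opt(V)$, so that $\sum_j b_j=\E[\opt]$; for $k$-hypergraph matching, $b_j$ is obtained by distributing the weight $w_i$ of each matched hyperedge $e_i$ among its $\le k$ vertices, in a way tailored to the pricing (part of the source of the $k$-dependence), with $\sum_j b_j=\Theta(\E[\opt])$.

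The difficulty absent under product distributions is that the contribution of a single item $j$ to $\opt(V)$ can, on rare but possibly \emph{correlated} events, greatly exceed its mean $b_j$, so no static price of order $b_j$ can simultaneously cover those events and avoid over-charging on typical ones. To handle this we split, for each item $j$, its contribution to $\opt$ at the threshold $e^{4\Delta}b_j$ into a \emph{core} part (contribution at most $e^{4\Delta}b_j$) and a \emph{tail} part (the rest), so that $\E[\opt]\le\mathrm{Core}+\mathrm{Tail}$. We then build a core price vector $\{p^C_j\}$ and a tail price vector $\{p^T_j\}$ and let the final (randomized) prices be $\{p^C_j\}$ with probability $\tfrac12$ and $\{p^T_j\}$ with probability $\tfrac12$. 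It suffices to show that posted pricing with $\{p^C_j\}$ collects $\Omega(\mathrm{Core})$ up to the relevant $O(1)$ (XOS) or $O(k^2)$ (hypergraph) factor, and that posted pricing with $\{p^T_j\}$ collects $\Omega(\mathrm{Tail}/\Delta)$; summing and accounting for the $1/2$ split yields the stated ratios, with the additional $\log k$ in the hypergraph bound coming from a union bound over the $\le k$ vertices of a hyperedge when controlling availability.

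For the core we take $p^C_j\approx c\cdot\E_V[\min(\text{contribution of }j,\;e^{4\Delta}b_j)]\le c\,b_j$, with $c$ a constant for XOS and of order $1/k$ for hypergraph matching. Verifying the two properties is then essentially the product-distribution computation: for XOS one uses that the supporting additive function of $v_i$ lower-bounds $v_i$ on every subset and equals $v_i$ on $O_i$, so the value lost when an item of $O_i$ becomes unavailable is exactly its (capped) contribution; for hypergraph matching one argues directly with the vertex weights, using that a buyer receives its whole edge or nothing. Here the MRF structure enters only through \Cref{lem:mrf_conditioned_bound}, to pass between conditional and unconditional item contributions, and since contributions are capped at $e^{4\Delta}b_j$ the resulting distortion is absorbed into a constant.

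The tail is handled by a \emph{generic} argument valid for any valuation class: the contribution of item $j$ to $\opt$ is dominated by the single-item values $v_i(\{j\})$, which are coordinate-wise functions of $V$, so by Markov the event ``$j$'s contribution exceeds $e^{4\Delta}b_j$'' has probability at most $e^{-4\Delta}$, and by \Cref{lem:mrf_conditioned_bound} it stays this rare under any conditioning; hence these tail events, though correlated, behave essentially like independent rare events, and pricing item $j$ at a level drawn (on a logarithmic scale) from the law of its own contribution, then summing revenue over the $O(\Delta)$ scales between $b_j$ and the top of the relevant range, recovers an $\Omega(1/\Delta)$ fraction of $\mathrm{Tail}$. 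I expect this tail step to be the main obstacle: making ``nearly independent rare events'' quantitatively precise under MRF conditioning so that the loss is \emph{linear}, not exponential, in $\Delta$, and checking that the randomized tail prices also satisfy the no-over-charging property so that the two price vectors can be mixed without further loss. Finally, since the whole construction is a (randomized) posted-price mechanism, the same guarantees transfer to a truthful online mechanism.
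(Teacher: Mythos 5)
Your overall architecture---balanced prices, splitting each item's contribution at the threshold $e^{4\Delta}b_j$ into core and tail, and randomizing between a core pricing scheme and a tail pricing scheme---is exactly the paper's (see \Cref{cor:final_guarantee_max}). But you have swapped where the factor of $\Delta$ must be paid, and both of your sub-arguments have genuine gaps. For the core, you post a fixed price $p^C_j\approx c\,b_j$ and assert the analysis is ``essentially the product-distribution computation'' because capping at $e^{4\Delta}b_j$ lets the distortion from \Cref{lem:mrf_conditioned_bound} be ``absorbed into a constant.'' It cannot be: the cap bounds the \emph{value} of $\overline p^V_j:=\min\{p^V_j,\,e^{4\Delta}b_j\}$ but not the \emph{ratio} between $\E[\overline p^V_j \mid j \text{ sold}]$ and $\E[\overline p^V_j]$, which can still be $e^{\Theta(\Delta)}$ (take $p^V_j$ concentrated on an event of probability $e^{-\Theta(\Delta)}$ that is positively correlated with low-value buyers arriving first). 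This is precisely where the naive $e^{O(\Delta)}$ loss lives, and the paper's fix (\Cref{lem:balanced-prices-core}, \Cref{lem:XOS}) is to randomize the posted price over the $4\Delta+2$ logarithmic scales between roughly $b_j/e$ and $e^{4\Delta}b_j$ and only credit items whose posted price lands within a constant factor of the realized $\overline p^V_j$; the probability $1/(4\Delta+2)$ of hitting the right scale is exactly the $O(\Delta)$ loss you claim to avoid in the core.

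For the tail, your plan to draw a price ``on a logarithmic scale'' and sum ``over the $O(\Delta)$ scales between $b_j$ and the top of the relevant range'' is not available: $(p^V_j-e^{4\Delta}b_j)^+$ is unbounded above, so there is no bounded range of scales, and a revenue argument at a sampled scale loses a factor equal to the number of scales, which is not $O(\Delta)$. The paper's \Cref{lem:balanced-prices-tail} instead posts the single deterministic price $\alpha e^{4\Delta}b_j$ and captures the tail through the buyers' \emph{utility}, not revenue: by the first balancedness property, whenever $p^{\hat V}_j>e^{4\Delta}b_j$ and $j$ is still available, the arriving buyer's surplus already accounts for the full $\alpha(p^{\hat V}_j-e^{4\Delta}b_j)^+$ with no $\Delta$ loss at all, while the $e^{4\Delta}$ from \Cref{lem:mrf_conditioned_bound} appears only once in the revenue bound and cancels against the $e^{4\Delta}$ built into the price. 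Capturing the whole tail (rather than a $1/\Delta$ fraction of it) is also what makes the combination step work: the two schemes are mixed with weights $\tfrac{1}{1+\alpha\gamma}$ and $\tfrac{\alpha\gamma}{1+\alpha\gamma}$ in \Cref{cor:final_guarantee_max}, not $\tfrac12$--$\tfrac12$. As written, neither of your two sub-guarantees is established, so the proof does not go through without reworking both pieces along these lines.
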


To design the posted-pricing in the proof of   \Cref{thm:max_main},   we use two key tools: the framework of ``balanced prices'' from \cite{DuttFeldKessLuci2020} and a core-tail decomposition of item values.

Roughly, {balanced prices} give a way to measure the value of an item within a fixed valuation profile $V$.  For a given realization of the valuations $V$, let $O^V_i$ denote the set of items that buyer $i$ receives in the hindsight optimum allocation with valuations $V$, then balanced prices are formally defined as follows, which is a slight modification from \cite{DuttFeldKessLuci2020} to fit our setting.

\begin{definition}[Balanced Prices]\label{def:balanced-prices}
    For a given valuation profile $V = (v_1, \dots, v_n)$ and $\alpha, \beta \geq 1$, we say that prices $p^V_1, \dots, p^V_m$ on items are \emph{$(\alpha,\beta)$-balanced} if for any buyer $i$ and set $S \subseteq [m]$, we have
    \begin{enumerate}
        \item $\sum_{j \in O^V_i \setminus S} p^V_j \geq \frac{1}{\alpha}\left(v_i(O^V_i) - v_i(O^V_i \cap S)\right)$.
        \item $\sum_{j \in O^V_i}p^V_j \leq \beta v_i(O^V_i)$.
    \end{enumerate}
\end{definition}

We need to use the balanced pricing scheme $p^V_j$ to set prices $p_j$ that are simultaneously not too low, in which case the items are bought by low value buyers first, and not too high, in which case they end up not bought by even the high value buyers. A unique challenge that comes from the correlated setting is that multiple buyers might rarely and simultaneously value the same item $j$ at an uncommonly high value, making it difficult for the item to be ``reserved'' for the highest bidder. 

To handle this challenge, we decompose the value $p^V_j$ of any item $j$ into a ``tail'' part $(p^V_j - e^{4\Delta} b_j)^+$ and a ``core'' part $\min\{p^V_j, e^{4\Delta} b_j\}$, where for any item $j$ the we define the \emph{base price} $b_j$ as
\[ b_j := \E_V \lp[ p^V_j\rp] .\]  
Now we will design two pricing schemes, which will separately  obtain welfare  approximating the contribution from the tail and core, respectively. The following lemma shows that such a guarantee would suffice since we can then randomize over the two pricing schemes.

\begin{lemma}\label{cor:final_guarantee_max}
    For some distribution $V \sim \cD$ and balanced prices $p^V_1, \dots, p^V_m$, suppose there exist two posted-pricing schemes obtaining expected welfare $\alg_1$ and $\alg_2$, respectively, such that
    \begin{align}
        \alg_1 &\geq \alpha \cdot \E_V \Big[ \sum_{j\in [m]} (p^V_j - e^{4\Delta} b_j)^+ \Big] + \Big(\opt - \alpha \sum_{j\in [m]} b_j\Big) &\text{(Tail contribution)} \label{eq:tail-price-req} \\
        \alg_2 &\geq \frac{1}{\gamma} \Big(\E_V \Big[ \sum_{j\in [m]} \min\{p^V_j,~e^{4\Delta}b_j\}\Big] - \epsilon \sum_{j\in [m]} b_j \Big)  &\text{(Core contribution)}  \label{eq:core-price-req}
    \end{align}
    for some $\gamma, \epsilon \geq 0$. {Then}, there exists a $\big(\frac{1 + \alpha \gamma}{1 - \epsilon \alpha \beta}\big)$-competitive posted-price mechanism.
\end{lemma}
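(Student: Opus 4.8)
The plan is to randomize between the two given pricing schemes with carefully chosen probabilities, say using scheme 1 with probability $q$ and scheme 2 with probability $1-q$, and then lower bound the resulting expected welfare by a constant fraction of $\opt$. The randomized mechanism obtains expected welfare at least $q \cdot \alg_1 + (1-q)\cdot \alg_2$, so substituting the hypothesized bounds \eqref{eq:tail-price-req} and \eqref{eq:core-price-req} gives
\[
q\left(\alpha \cdot \E_V\Big[\sum_j (p^V_j - e^{4\Delta}b_j)^+\Big] + \opt - \alpha\sum_j b_j\right) + \frac{1-q}{\gamma}\left(\E_V\Big[\sum_j \min\{p^V_j, e^{4\Delta}b_j\}\Big] - \epsilon \sum_j b_j\right).
\]
The key algebraic observation is that for any nonnegative reals, $(x-c)^+ + \min\{x,c\} = x$ pointwise, so $\E_V[\sum_j (p^V_j - e^{4\Delta}b_j)^+] + \E_V[\sum_j \min\{p^V_j, e^{4\Delta}b_j\}] = \E_V[\sum_j p^V_j]$. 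Thus if we could take the coefficients of the two expectation terms to be equal, they would combine into the single quantity $\E_V[\sum_j p^V_j]$, which by the upper-balancedness property (part 2 of \Cref{def:balanced-prices}, summed over buyers) is at least $\frac{1}{\beta}\E_V[\sum_i v_i(O^V_i)] = \frac{1}{\beta}\opt$ — wait, more precisely part 2 gives $\sum_j p^V_j \le \beta \opt^V$, which is an \emph{upper} bound, so I will instead need to use it in the denominator; the lower bound on welfare will come from the $\opt$ term carried directly by scheme 1, and the role of $\E_V[\sum_j p^V_j]$ is to \emph{cancel} against a $-\alpha\gamma^{-1}\cdot(\ldots)$-type negative term. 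Let me restate: I choose $q$ so that $q\alpha = \frac{1-q}{\gamma}$, i.e. $q = \frac{1}{1+\alpha\gamma}$, making the two expectation-of-sum terms coalesce into $q\alpha \cdot \E_V[\sum_j p^V_j]$, which is nonnegative and hence can simply be dropped (we only need a lower bound).

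After dropping that nonnegative term, the remaining lower bound on the randomized welfare is
\[
q\Big(\opt - \alpha \sum_j b_j\Big) - \frac{1-q}{\gamma}\,\epsilon \sum_j b_j.
\]
Using $b_j = \E_V[p^V_j]$ and summing part 2 of \Cref{def:balanced-prices} over all buyers (every item $j$ lies in exactly one $O^V_i$, or in none, in which case we may assume $p^V_j$ contributes to no constraint; in either case $\sum_j p^V_j \le \sum_i \beta v_i(O^V_i) = \beta\,\opt^V$), taking expectations gives $\sum_j b_j \le \beta\,\opt$. Substituting $q = \frac{1}{1+\alpha\gamma}$ and $1-q = \frac{\alpha\gamma}{1+\alpha\gamma}$, and using $\frac{1-q}{\gamma}\epsilon = \frac{\alpha\epsilon}{1+\alpha\gamma} = q\alpha\epsilon$, the bound becomes
\[
q\Big(\opt - \alpha\beta\,\opt - \alpha\epsilon\beta\,\opt\Big) \;=\; q\,\opt\,(1 - \alpha\beta - \alpha\epsilon\beta).
\]
Hmm — this gives $\frac{1-\alpha\beta-\alpha\beta\epsilon}{1+\alpha\gamma}$, not quite the stated $\frac{1-\epsilon\alpha\beta}{1+\alpha\gamma}$. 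The discrepancy means I should be more careful about which $b_j$ sum is subtracted where: I expect that the $-\alpha\sum_j b_j$ term in \eqref{eq:tail-price-req} is meant to cancel not against $\opt$ but against part of the combined $\E_V[\sum_j p^V_j] = \sum_j b_j$ term (note $\E_V[\sum_j p^V_j] = \sum_j b_j$ exactly, by definition of $b_j$!). So the right move is \emph{not} to drop the coalesced term but to write $q\alpha\,\E_V[\sum_j p^V_j] = q\alpha\sum_j b_j$ and cancel it exactly with the $-q\alpha\sum_j b_j$ coming from the $\opt - \alpha\sum_j b_j$ term. That leaves $q\cdot\opt - \frac{1-q}{\gamma}\epsilon\sum_j b_j = q\,\opt - q\alpha\epsilon\sum_j b_j \ge q\,\opt - q\alpha\epsilon\beta\,\opt = q\,\opt(1 - \alpha\beta\epsilon) = \frac{1-\epsilon\alpha\beta}{1+\alpha\gamma}\,\opt$, which is exactly the claim.

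So the clean order of steps is: (i) form the randomized mechanism with $q = \frac{1}{1+\alpha\gamma}$; (ii) bound its welfare below by $q\,\alg_1 + (1-q)\,\alg_2$ and plug in \eqref{eq:tail-price-req}, \eqref{eq:core-price-req}; (iii) use the pointwise identity $(x-c)^+ + \min\{x,c\} = x$ together with linearity of expectation and $b_j = \E_V[p^V_j]$ to recognize that the two expectation terms plus the $-\alpha\sum_j b_j$ term telescope (given $q\alpha = \frac{1-q}{\gamma}$), leaving $q\,\opt - \frac{1-q}{\gamma}\epsilon\sum_j b_j$; (iv) invoke property 2 of balanced prices, summed over buyers and in expectation over $V$, to get $\sum_j b_j \le \beta\,\opt$; (v) conclude competitive ratio $\frac{1+\alpha\gamma}{1-\epsilon\alpha\beta}$. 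The only mild subtlety — and the step I'd double-check most carefully — is the bookkeeping in step (iii): making sure the coefficient balance $q\alpha = (1-q)/\gamma$ really does let the $\E_V[\sum_j(p^V_j - e^{4\Delta}b_j)^+]$ and $\E_V[\sum_j\min\{p^V_j,e^{4\Delta}b_j\}]$ terms merge into $q\alpha\E_V[\sum_j p^V_j] = q\alpha\sum_j b_j$ and that this exactly annihilates the negative $q\alpha\sum_j b_j$ term, with no leftover $e^{4\Delta}$ factors (they cancel inside the identity $(x-c)^+ + \min\{x,c\}=x$ regardless of the value of $c = e^{4\Delta}b_j$). There is no real analytic obstacle here; it is a short convexity-free accounting argument, and the main risk is an off-by-a-term slip in the constants, which the target expression $\frac{1+\alpha\gamma}{1-\epsilon\alpha\beta}$ pins down.
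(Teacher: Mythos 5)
Your proposal is correct and follows essentially the same route as the paper: randomize between the two schemes with probability $q = \frac{1}{1+\alpha\gamma}$ so that the coefficients of the tail and core expectation terms match, merge them via $(x-c)^+ + \min\{x,c\} = x$ into $q\alpha\,\E_V[\sum_j p^V_j] = q\alpha\sum_j b_j$, cancel against the $-q\alpha\sum_j b_j$ term, and finish with $\sum_j b_j \le \beta\,\opt$ from property 2 of balanced prices. The self-correction midway (not dropping the coalesced term but cancelling it exactly) lands you on precisely the paper's computation.
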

\begin{proof}[Proof of \Cref{cor:final_guarantee_max}]
    Choose item prices as follows: with probability $\frac{1}{1 + \alpha \gamma}$, set prices according to $\alg_1$, and with the remaining probability, set prices according to $\alg_2$. Then, the expected welfare of the combined algorithm is
    \begin{align*}
        \E[\alg] &\geq \frac{1}{1 + \alpha \gamma}\Bigg(\alpha \E_V \Big[\sum_{j\in [m]} (p^V_j - e^{4\Delta}b_j)^+\Big] + \opt - \alpha \sum_{j\in [m]} b_j\Bigg) \\
        & \hspace{2cm}+ \frac{\alpha \gamma }{1 + \alpha \gamma} \cdot\frac{1}{\gamma} \Bigg( \E_V \Big[\sum_{j\in [m]} \min\{p^V_j,~e^{4\Delta}b_j\}\Big] - \epsilon \sum_{j\in [m]} b_j \Bigg)\\
        &=\frac{1}{1 + \alpha \gamma}\Bigg(   \alpha \E_V \Big[\sum_{j\in [m]} p_j^V\Big] + \opt - (\alpha + \alpha \epsilon) \sum_{j\in [m]} b_j\Bigg)\\
        &=\frac{1}{1 + \alpha \gamma}\Big(\opt -  \alpha\epsilon \sum_{j\in [m]} b_j\Big) \quad \geq \quad  \frac{1 - \alpha\epsilon \beta}{1 + \alpha \gamma}\opt,
    \end{align*}
    where in the last equality we used property 2 of \Cref{def:balanced-prices} (Balanced Prices).
\end{proof}

Thus, it only remains to show that we can obtain pricing schemes that satisfy the conditions of \Cref{cor:final_guarantee_max}, which we do in the following lemmas.

\begin{restatable}[Tail contribution]{lemma}{TailContribution}\label{lem:balanced-prices-tail}
    Let $V$ be drawn from a $\Delta$-MRF distribution $\cD$. Suppose for each $V$ in the support of $\cD$, we have $(\alpha,\beta)$-balanced prices $p^V_1, \dots, p^V_m$.  Suppose we run a posted-priced algorithm with item prices $p_j := \alpha e^{4\Delta} b_j$ for $j\in [m]$. 
    If $S_i$ is the set of items that buyer $i$ receives, then 
    \[
    \E_V \Big[ \sum_{i\in [n]} v_i(S_i) \Big] \geq \alpha \cdot \E_V \Big[ \sum_{j\in [m]} (p^V_j - e^{4\Delta} b_j)^+ \Big] + \Big(\opt - \alpha \sum_{j\in [m]} b_j\Big).
    \]
\end{restatable}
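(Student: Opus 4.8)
\textbf{Proof plan for \Cref{lem:balanced-prices-tail}.}

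The plan is to run the fixed-price algorithm with prices $p_j = \alpha e^{4\Delta} b_j$ and track, for each realization $V$, how the welfare decomposes relative to the optimal allocation $O^V = (O^V_1, \dots, O^V_n)$. The standard posted-pricing argument (as in \cite{DuttFeldKessLuci2020}) says that when buyer $i$ arrives, it picks a utility-maximizing bundle $S_i$, so in particular its utility is at least what it would get by buying the still-available part of $O^V_i$. Writing $A_i$ for the set of items still available when buyer $i$ arrives, this yields $v_i(S_i) - \sum_{j \in S_i} p_j \geq v_i(O^V_i \cap A_i) - \sum_{j \in O^V_i \cap A_i} p_j \geq v_i(O^V_i \cap A_i) - \sum_{j \in O^V_i} p_j$. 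Summing over buyers and using that each sold item is sold once, the revenue terms telescope: $\sum_i v_i(S_i) \geq \sum_i v_i(O^V_i \cap A_i) - \sum_i \sum_{j \in O^V_i} p_j + \sum_{j \text{ sold}} p_j$. Now the first balanced-price property, applied with $S = A_i$ (the available set, noting $O^V_i \setminus A_i$ are exactly the items in $O^V_i$ sold before buyer $i$ arrives), gives $v_i(O^V_i) - v_i(O^V_i \cap A_i) \leq \alpha \sum_{j \in O^V_i \setminus A_i} p^V_j$, i.e. $v_i(O^V_i \cap A_i) \geq v_i(O^V_i) - \alpha \sum_{j \in O^V_i \setminus A_i} p^V_j$. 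Summing, $\sum_i v_i(O^V_i \cap A_i) \geq \opt(V) - \alpha \sum_{j \text{ sold}} p^V_j$, where I'm using that the items in $\bigcup_i (O^V_i \setminus A_i)$ are a subset of the sold items (each counted once, since the $O^V_i$ partition the allocated items). Since $p_j = \alpha e^{4\Delta} b_j \geq \alpha p^V_j$ exactly when $p^V_j \leq e^{4\Delta} b_j$, I should split the sold items by whether $p^V_j \lessgtr e^{4\Delta} b_j$; but a cleaner route is to just note $\alpha p^V_j - \text{(contribution to }\sum p_j) \leq \alpha (p^V_j - e^{4\Delta} b_j)^+$ when the item is sold, while if it is not sold we lose nothing from it. Combining the two chains and using $\sum_{j\text{ sold}} p_j \geq 0$ carefully, I expect to land on $\E_V[\sum_i v_i(S_i)] \geq \opt - \alpha \E_V[\sum_j p^V_j] + \alpha \E_V[\sum_j (p^V_j - e^{4\Delta}b_j)^+] + (\text{something} \geq 0)$, and then $\E_V[\sum_j p^V_j] \geq \sum_j b_j$... wait, actually $\E_V[p^V_j] = b_j$ by definition, so this is an equality, giving exactly the claimed bound $\alpha \E_V[\sum_j (p^V_j - e^{4\Delta}b_j)^+] + (\opt - \alpha\sum_j b_j)$.

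More carefully, the bookkeeping I would organize is: for each realization $V$, let $T_V \subseteq [m]$ be the set of items actually sold by the algorithm. Then (i) from the utility-maximization / telescoping argument, $\sum_i v_i(S_i) \geq \sum_i v_i(O^V_i \cap A_i) + \sum_{j \in T_V} p_j - \sum_i \sum_{j \in O^V_i} p_j$; (ii) from the first balanced property, $\sum_i v_i(O^V_i \cap A_i) \geq \opt(V) - \alpha \sum_{j \in T_V \cap (\cup_i O^V_i)} p^V_j \geq \opt(V) - \alpha\sum_{j \in T_V} p^V_j$. Since $\sum_i \sum_{j \in O^V_i} p_j \le \sum_{j\in[m]} p_j = \alpha e^{4\Delta}\sum_j b_j$ is a fixed quantity (the $O^V_i$ are disjoint), adding (i) and (ii): $\sum_i v_i(S_i) \geq \opt(V) - \alpha \sum_{j \in T_V} p^V_j + \sum_{j\in T_V} p_j - \alpha e^{4\Delta}\sum_j b_j$. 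For $j \in T_V$, $p_j - \alpha p^V_j = \alpha(e^{4\Delta}b_j - p^V_j) \geq -\alpha(p^V_j - e^{4\Delta}b_j)^+$, so $-\alpha\sum_{j\in T_V} p^V_j + \sum_{j\in T_V}p_j \geq -\alpha\sum_{j\in T_V}(p^V_j - e^{4\Delta}b_j)^+ \geq -\alpha\sum_{j\in[m]}(p^V_j - e^{4\Delta}b_j)^+$. Hmm — that inequality goes the wrong way; I need a \emph{lower} bound in terms of $+\alpha\sum_j(p^V_j-e^{4\Delta}b_j)^+$. The fix: bound $\sum_i\sum_{j\in O^V_i}p_j$ not by the total but split it — for $j \in O^V_i$ that is \emph{not} sold, actually every $j \in O^V_i$ is either sold (possibly to a different buyer) or ends up with buyer $i$; the items never sold contribute $p_j$ to the "$-\sum p_j$" term with no compensating revenue. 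So I should write $\sum_i\sum_{j\in O^V_i}p_j - \sum_{j\in T_V}p_j \le \sum_{j \notin T_V} p_j$. Then I get $\sum_i v_i(S_i) \geq \opt(V) - \alpha\sum_{j\in T_V}p^V_j - \sum_{j\notin T_V}p_j$. Now for $j\notin T_V$: I'd argue $p^V_j \le p_j/\alpha$? No — an item goes unsold either because no buyer wanted it at price $p_j$, which doesn't directly bound $p^V_j$. This is where I need the second balanced property and the structure more delicately, and it is the crux.

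\textbf{The main obstacle} is precisely handling the items that the hindsight optimum allocates but that the algorithm leaves unsold (or sells to the "wrong" buyer), and relating the loss on those items to the tail quantity $(p^V_j - e^{4\Delta}b_j)^+$ rather than to $p^V_j$ itself — this is what converts the naive $\opt - \alpha\E[\sum p^V_j] = \opt - \alpha\sum b_j$ bound (which would already follow from the classical argument with prices $\alpha e^{4\Delta}b_j$ if those dominated $\alpha p^V_j$, but they don't on the tail) into one that additively recovers the tail contribution. I expect the resolution to use: for an item $j$ with $p^V_j > e^{4\Delta}b_j$ (a "tail" item in realization $V$), the price $p_j = \alpha e^{4\Delta} b_j$ is \emph{below} the balanced price $\alpha p^V_j$, so the classical guarantee "loses" exactly $\alpha(p^V_j - e^{4\Delta}b_j)$ on it — but this loss is not actually incurred because the algorithm's welfare on such a cheap-priced, highly-valued item is correspondingly large; concretely, if buyer $i$ (the optimal recipient) finds $j$ available, it buys a bundle of utility $\geq v_i(O^V_i\cap A_i) - \sum_{j\in O^V_i}p_j$, and the surplus $v_i(O^V_i) - \sum_{j\in O^V_i}p_j$ on the tail items is large and positive. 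I'd make this rigorous by redoing the telescoping more carefully, keeping the revenue term $\sum_{j \in T_V} p_j$ and the term $-\sum_i\sum_{j\in O^V_i}p_j = -\alpha e^{4\Delta}\sum_j b_j$ separately, and then observing $\alpha e^{4\Delta}b_j = \alpha p^V_j - \alpha(p^V_j - e^{4\Delta}b_j) \le \alpha p^V_j - \alpha(p^V_j - e^{4\Delta}b_j)^+ + \alpha(e^{4\Delta}b_j - p^V_j)^+$; taking expectations, $\E_V[\alpha e^{4\Delta}b_j]$... actually since $b_j = \E[p^V_j]$, $\E_V[\alpha(p^V_j - e^{4\Delta}b_j)^+] - \E_V[\alpha(e^{4\Delta}b_j - p^V_j)^+] = \alpha\E_V[p^V_j] - \alpha e^{4\Delta}b_j = \alpha b_j - \alpha e^{4\Delta}b_j$, which rearranges to $\alpha e^{4\Delta}b_j = \alpha b_j + \E_V[\alpha(p^V_j - e^{4\Delta}b_j)^+] - \E_V[\alpha(e^{4\Delta}b_j - p^V_j)^+]$, and then I'd absorb the nonnegative $\E_V[\alpha(e^{4\Delta}b_j - p^V_j)^+]$ term favorably into the bound. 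After this substitution the $\alpha\E_V[\sum_j(p^V_j - e^{4\Delta}b_j)^+]$ appears with the right sign, the fixed $-\alpha\sum_j b_j$ appears, and $\opt$ is retained — completing the proof. I'll lean on \cite{DuttFeldKessLuci2020} for the core telescoping lemma and supply the tail-specific rearrangement above.
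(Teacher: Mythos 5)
There is a genuine gap, and you have in fact put your finger on it yourself ("that inequality goes the wrong way... it is the crux") without resolving it. Your telescoping bound charges buyer $i$ the \emph{posted} prices $\sum_{j \in O^V_i} p_j = \alpha e^{4\Delta}\sum_{j \in O^V_i} b_j$ for its optimal bundle, which is $e^{4\Delta}$ times larger than the budget $\alpha \sum_j b_j$ the lemma allows; and for a tail item sold before its optimal recipient arrives you pick up a term $-\alpha(p^V_j - e^{4\Delta}b_j)$, whereas the lemma needs $+\alpha(p^V_j - e^{4\Delta}b_j)^+$. The rearrangement identity $\E[(p^V_j - e^{4\Delta}b_j)^+] - \E[(e^{4\Delta}b_j - p^V_j)^+] = b_j - e^{4\Delta}b_j$ cannot bridge either deficit: it relates two expectations globally but does not flip the sign of the tail term on the event that the item was already sold, nor does it shrink the $e^{4\Delta}$-inflated price of the unsold items in $O^V_i$. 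The second balanced-price property is not the missing ingredient either (the paper only uses it later, in \Cref{cor:final_guarantee_max}). A telling symptom is that your argument never uses the $\Delta$-MRF assumption at all; the lemma is false for arbitrary correlated $\cD$, so any proof must invoke \Cref{lem:mrf_conditioned_bound} somewhere.

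The two ideas you are missing are the following. First, buyer $i$'s deviation bundle should not be the available part of $O^V_i$ but the available part of $O^{\hat V}_i$, where $\hat V \sim \cD$ is a \emph{fresh sample conditioned on} $\hat v_i = v_i$; moreover one takes the utility-maximizing \emph{subset} $S \subseteq O^{\hat V}_i \cap R^V_i$, which turns out to be exactly $\{j : \alpha p^{\hat V}_j \geq p_j\}$. Writing $v_i(S) \geq v_i(O^{\hat V}_i) - \alpha\sum_{j \in O^{\hat V}_i \setminus S} p^{\hat V}_j$ and regrouping gives utility at least $v_i(O^{\hat V}_i) - \alpha\sum_{j \in O^{\hat V}_i} p^{\hat V}_j + \alpha\sum_{j \in O^{\hat V}_i \cap R^V_i}(p^{\hat V}_j - e^{4\Delta}b_j)^+$: the baseline now subtracts the \emph{balanced} prices (whose expectation is $\alpha\sum_j b_j$, not $\alpha e^{4\Delta}\sum_j b_j$), and the tail term appears with the correct sign. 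Second, the missing part of the tail sum — the items in $O^{\hat V}_i \setminus R^V_i$, i.e.\ those already sold — is covered by the revenue $\sum_{j \text{ sold}} \alpha e^{4\Delta} b_j$, but converting $e^{4\Delta}b_j = e^{4\Delta}\E_{V'}[p^{V'}_j]$ into a bound conditioned on $v'_i = v_i$ (so that it matches the $\hat V$ in the utility bound) costs exactly a factor $e^{-4\Delta}$ via \Cref{lem:mrf_conditioned_bound}; this is the one place the MRF degree enters and is the reason the posted price carries the $e^{4\Delta}$ factor in the first place. Without the resampling step and this conditioning argument, the proof cannot be completed along the lines you propose.
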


\begin{lemma}[Core contribution]\label{lem:balanced-prices-core}
    Let $V$ be drawn from a $\Delta$-MRF distribution $\cD$. 
    \begin{enumerate}
        \item If $\cD$ is a distribution over XOS valuation profiles, then there exist $(1,1)$-balanced prices $p^V_1, \dots, p^V_n$ and a pricing scheme which obtains expected welfare
      \[
    \frac{1}{O(\Delta)} \Bigg(\E_V \Big[ \sum_{j\in [m]} \min\{p^V_j,~e^{4\Delta}b_j\}\Big] - \frac{1}{e} \sum_{j\in [m]} b_j\Bigg).
    \] 
    
    \item If $\cD$ is a distribution over $k$-uniform hypergraph matching valuations, then there exist $(1,k)$-balanced prices $p^V_1, \dots, p^V_n$ and a pricing scheme which obtains expected welfare
      \[
    \frac{1}{O(k^2(\Delta + \log k))} \Bigg(\E_V \Big[ \sum_{j\in [m]} \min\{p^V_j,~e^{4\Delta}b_j\}\Big] - \frac{1}{ek} \sum_{j\in [m]} b_j\Bigg).
    \] 
    \end{enumerate}
\end{lemma}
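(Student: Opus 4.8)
The plan is to prove the two parts of \Cref{lem:balanced-prices-core} by a common four‑step template; I describe it for XOS valuations and flag the changes needed for $k$-uniform hypergraph matching. \textbf{Step 1 (the balanced prices).} For each realization $V$ fix a hindsight‑optimal allocation $(O^V_i)_i$. When $v_i$ is XOS, pick a clause $a^{(i)} \in \cA_i$ supporting $O^V_i$, so that $\sum_{j \in O^V_i} a^{(i)}_j = v_i(O^V_i)$ while $\sum_{j \in T} a^{(i)}_j \le v_i(T)$ for all $T \subseteq O^V_i$; put $p^V_j := a^{(i)}_j$ for $j \in O^V_i$ and $p^V_j := 0$ for items unallocated by OPT. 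The defining (in)equalities of the supporting clause give properties~1 and~2 of \Cref{def:balanced-prices}, so these prices are $(1,1)$-balanced. For matching one may assume each buyer gets either $e_i$ or $\varnothing$ in OPT; put $p^V_j := w_i$ for every $j \in e_i$ of a winning buyer $i$ (well defined since winning hyperedges are disjoint) and $0$ otherwise. Property~1 holds because a single element of $e_i \setminus S$ already accounts for $w_i$, and property~2 holds with $\beta = |e_i| \le k$.

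\textbf{Step 2 (a dyadic decomposition of the core value).} Write $M_j := e^{4\Delta} b_j$. Using $\E_V[\min\{p^V_j, M_j\}] = \int_0^{M_j} \Pr[p^V_j > t]\,dt$, splitting the range $(\tfrac{b_j}{e\beta}, M_j]$ into $L = O(\Delta + \log \beta)$ geometric levels $t^{(j)}_1 < \dots < t^{(j)}_L$ of ratio $e$, and bounding each piece by its left endpoint, one gets $\E_V\big[\sum_j \min\{p^V_j, M_j\}\big] \le \tfrac{1}{e\beta}\sum_j b_j + O(1)\cdot\sum_{\ell \le L}\sum_j t^{(j)}_\ell \Pr[p^V_j \ge t^{(j)}_\ell]$. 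Hence it suffices to build, for each level $\ell$, a posted‑price mechanism of expected welfare $\ge \Omega(\tfrac{1}{k\beta})\sum_j t^{(j)}_\ell \Pr[p^V_j \ge t^{(j)}_\ell]$ (with $k$ replaced by $1$ in the XOS case) and then pick $\ell$ uniformly in $\{1, \dots, L\}$: this randomization loses the factor $L$, and substituting $\beta = 1$, $L = O(\Delta)$, respectively $\beta = k$, $L = O(\Delta + \log k)$, produces the advertised $\tfrac{1}{O(\Delta)}$ and $\tfrac{1}{O(k^2(\Delta + \log k))}$, together with the subtracted terms $\tfrac1e \sum_j b_j$ and $\tfrac{1}{ek}\sum_j b_j$.

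\textbf{Step 3 (one level).} Fix $\ell$ and set static item prices $q_j := \tfrac1c\, t^{(j)}_\ell$ with $c = \Theta(1)$ for XOS and $c = \Theta(k)$ for matching, chosen so that a buyer can afford an entire hyperedge of such items whenever that edge's weight is within a constant of its per‑item target. Run the posted‑price mechanism and split its welfare into revenue $\sum_{j \text{ sold}} q_j$ plus the buyers' utilities; the balanced‑prices argument of \cite{DuttFeldKessLuci2020} lower‑bounds buyer $i$'s utility by $\tfrac1\alpha v_i(O^V_i)$ minus the $p^V$-mass of the items of $O^V_i$ already sold when $i$ arrives, and summing this over the disjoint sets $O^V_i$ and adding the revenue shows that \emph{each item $j$ with $p^V_j \ge t^{(j)}_\ell$} contributes $\Omega(q_j)$ to the expected welfare: either $j$ is sold, yielding revenue $q_j$, or $j$ is still available when its OPT‑winner arrives and is then extracted through that buyer's utility. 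The delicate point is the implication ``$j$ still available $\Rightarrow$ extracted with good probability'', since with correlated valuations the availability event (a function of the other buyers' types and of the mechanism's deterministic history) is not independent of the arriving buyer's type: here I would invoke \Cref{lem:mrf_conditioned_bound} to decouple the two at the cost of one $e^{4\Delta}$ factor, and use that the tail above $M_j$ has already been peeled off so that the residual cross‑item correlation of the high‑price events is controlled by a second‑moment / union‑bound estimate at a further $e^{4\Delta}$ cost --- and it is precisely this $e^{\Theta(\Delta)}$ overhead that the $L$ geometric levels of Step~2 are designed to absorb.

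\textbf{Main obstacle.} The technical heart is Step~3: the product‑distribution analyses tacitly exploit near‑independence of ``item $j$ still available'' and ``$v_i$ large'', and here those two must be separated through \Cref{lem:mrf_conditioned_bound}, whose two $e^{4\Delta}$ factors are exactly what force the price to be randomized over $\Theta(\Delta)$ scales rather than set deterministically. For $k$-uniform matching there is the extra wrinkle that a hyperedge is buyable only when all $k$ of its items are simultaneously cheap and available, which both pushes each price down by a factor $k$ and makes property~2 lossy by $\beta = k$ (accounting for the $k^2$), while the grid having to reach down to $\Theta(b_j/k)$ contributes the $+\log k$.
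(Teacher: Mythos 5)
Your Step~1 matches the paper exactly (supporting clauses for XOS give $(1,1)$-balanced prices; $p^V_j=w_i$ on winning hyperedges gives $(1,k)$-balanced prices), and your Step~2 is an equivalent reformulation of the paper's randomization over $O(\Delta)$ (resp.\ $O(\Delta+\log k)$) geometric price scales. The gap is in Step~3, and it is quantitative, not just a matter of detail.

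First, for XOS you are solving a problem that does not exist: the revenue-plus-utility dichotomy is \emph{per realization}. For fixed $V$ and a fixed price scale, every item $j\in O^V_i$ whose posted price lands within a constant factor below $\min\{p^V_j,e^{4\Delta}b_j\}$ is either already sold when buyer $i$ arrives (contributing $\Omega(\overline p^V_j)$ to revenue) or still available, in which case it sits inside the feasible set $O^V_i\cap A\cap R_i$ and contributes $\Omega(\overline p^V_j)$ to buyer $i$'s utility via the supporting prices. No independence between ``$j$ available'' and ``$v_i$ large'' is used anywhere, so \Cref{lem:mrf_conditioned_bound} is not needed in the XOS core argument at all; the only source of the $O(\Delta)$ is the number of scales. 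Second, and more seriously, for hypergraph matching your proposed fix cannot work arithmetically: invoking \Cref{lem:mrf_conditioned_bound} costs a \emph{multiplicative} $e^{4\Delta}$ (twice, in your sketch), whereas randomizing over $L=O(\Delta+\log k)$ levels only recovers an \emph{additive-in-the-exponent} factor $1/L$. You cannot absorb $e^{\Theta(\Delta)}$ into $O(\Delta)$ levels; the resulting guarantee would be $e^{-\Theta(\Delta)}$, not $1/O(k^2(\Delta+\log k))$. The paper's actual mechanism for cancelling the $e^{4\Delta}$ conditioning loss is different: each item is independently assigned a ``high'' price $\tfrac1e e^{4\Delta}b_j$ with probability $1-1/k$ or a ``low'' price with probability $1/k$; items that sell with conditional probability more than $k^{-2}e^{-4\Delta-2}$ at the high price already generate revenue $\Omega(b_j/k^2)$ outright (\Cref{cl:high_selling}), and for the remaining items the unconditional unavailability probability is at most $k^{-2}e^{-4\Delta-2}$, so that after paying the single $e^{4\Delta}$ factor from \Cref{lem:mrf_conditioned_bound} the conditional failure probability is still only $O(1/k)$ per item (\Cref{cl:high_priced}). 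This ``high-price insurance'' step, together with the event that only the maximum-level item of a hyperedge is low-priced (so the buyer can afford the whole edge while the other $k-1$ items are rarely taken by anyone else), is the missing idea; a generic second-moment or union bound at a further $e^{4\Delta}$ cost does not substitute for it.
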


We are now ready to show our main theorem. 

\begin{proof}[Proof of \Cref{thm:max_main}]
    The theorem follows immediately when we combine the core and tail pricing schemes from \Cref{lem:balanced-prices-tail} and \Cref{lem:balanced-prices-core} with \Cref{cor:final_guarantee_max}.

    First, notice that \Cref{lem:balanced-prices-tail} ensures we can always satisfy \eqref{eq:tail-price-req} for any distribution $\cD$ that admits $(\alpha, \beta)$-balanced prices. For XOS valuations, \Cref{lem:balanced-prices-core} gives us $(\alpha,\beta) = (1,1)$-balanced prices that also have a pricing schemes satisfying requirement \eqref{eq:core-price-req} of \Cref{cor:final_guarantee_max} with $\gamma=O(\Delta)$ and $ \epsilon = 1/e$. Similarly, for the hypergraph matching case, \Cref{lem:balanced-prices-core} gives us $(\alpha,\beta) = (1,k)$-balanced prices that also have a pricing scheme satisfying \eqref{eq:core-price-req} with $\gamma=O(k^2(\Delta+\log k))$ and $\epsilon = \frac{1}{ek}$ which implies the $O((\Delta + \log k) k^2)$-approximation of the theorem.
\end{proof}

\subsection{Tail Contribution}
Now we prove \Cref{lem:balanced-prices-tail}, which crucially relies on the properties of balanced prices (\Cref{def:balanced-prices}).


\begin{proof}[Proof of \Cref{lem:balanced-prices-tail}]
  
    We separately bound the utility $\sum_i(v_i(S_i) - \sum_{j \in S_i} p_j)$ and the revenue $\sum_{j \in [m]} \ind{i \text{ sold}} \cdot p_j$ and combine them to get the lemma. 
    \paragraph{Bounding the utility} Let $R^V_i$ be the set of items remaining when buyer $i$ arrives (as a function of $V$), and let $\hat V = (\hat v_1, \dots, \hat v_n)$ be a new sample from $\cD$ conditional on $\hat v_i = v_i$. Recall that $O^V_i$ denotes the set of items that buyer $i$ receives in the hindsight optimum allocation with valuations $V$.
    We consider the following alternative set of items $S_i'$ that buyer $i$ could have bought defined as
    \[
    S'_i := \arg\!\max_{S \subseteq O^{\hat V}_i \cap R^V_i} \Big[v_i(S) - \sum_{j \in S} p_j\Big].
    \]
    Since $S'_i$ is a feasible set of items when buyer~$i$ arrives, and our algorithm takes the utility-maximizing decision $S_i$,
    buyer~$i$ must receive at least as much utility from $S_i$ as from $S'_i$, so we have
    \[
    \E_V\Big[v_i(S_i) - \sum_{j \in S_i} p_j\Big] 
    ~\geq~ 
    \E_{V,\hat V} \Big[v_i(S'_i) - \sum_{j \in S'_i} p_j ~\Big | ~ \hat v_i = v_i \Big] 
    ~=~
    \E_{V,\hat V} \Big[ \max_{S \subseteq O^{\hat V}_i \cap R^V_i} v_i(S) - \sum_{j \in S} p_j ~\Big | ~ \hat v_i = v_i \Big].
    \]
    Next, by  property 1 of \Cref{def:balanced-prices}, we have $v_i(S) \geq v_i(O^{\hat V}_i \cap S) \geq  v_i(O^{\hat V}_i) - \alpha \sum_{j \in O^{\hat V}_i \setminus S}p^{\hat V}_j$. This gives
    \begin{align}
        \E\Big[v_i(S_i) - \sum_{j \in S_i} p_j\Big] 
        &\geq \E_{V,\hat V} \Bigg[ \max_{S \subseteq O^{\hat V}_i \cap R^V_i} v_i(O^{\hat V}_i) - \alpha \sum_{j \in O^{\hat V}_i \setminus S}p^{\hat V}_j - \sum_{j \in S} p_j ~\Big | ~ \hat v_i = v_i \Bigg] \nonumber\\
        &\geq \E_{V,\hat V} \Bigg[\max_{S \subseteq O^{\hat V}_i \cap R^V_i}v_i(O^{\hat V}_i) - \alpha \sum_{j \in O^{\hat V}_i}p^{\hat V}_j + \sum_{j \in S}(\alpha p^{\hat V}_j - p_j) ~\Big | ~ \hat v_i = v_i \Bigg] .\nonumber
    \end{align}
    Since the value of $v_i(O^{\hat V_i}) - \alpha \sum_{j \in O^{\hat V}_i}p^{\hat V}_j$ is independent of the choice of $S$, the expression is maximized when $\sum_{j \in S}(\alpha p^{\hat V}_j - p_j)$ is maximized. Therefore, $S = \{j \in O^{\hat V}_i \cap R^V_i : \alpha p_j^{\hat V} \geq p_j\}$.
    Recalling that $p_j = \alpha e^{4\Delta} b_j$, we have
    \begin{equation}
        \E\Big[v_i(S_i) - \sum_{j \in S_i} p_j\Big] \geq \E_{V,\hat V} \Bigg[v_i(O^{\hat V}_i) - \alpha \sum_{j \in O^{\hat V}_i}p^{\hat V}_j + \alpha\sum_{j \in O^{\hat V}_i \cap R^V_i}(p^{\hat V}_j - e^{4\Delta}b_j)^+ ~\Big | ~ \hat v_i = v_i \Bigg].
        \label{eq:tail-util}
    \end{equation}

    \paragraph{Bounding the revenue} Similarly to the utility proof, we imagine a sample $V' \sim \cD$ independent from both $V$ and $\hat{V}$. We have
    \begin{align*}
        \E_V\Big[\sum_{j\in [m]} \ind{i \text{ sold}} \cdot p_j\Big]
        ~=~ \E_V\Bigg[\sum_{j \in [m] \setminus R^V_n}  \alpha e^{4\Delta} \E_{V'} \lp[p^{V'}_j\rp]\Bigg]
        ~\geq~ \E_V \Bigg[ \sum_{i\in [n]}\E_{V'}\Big[\sum_{j \in O^{V'}_i \setminus R^V_i}  \alpha e^{4\Delta}  p^{V'}_j\Big] \Bigg],
    \end{align*}
    where the inequality follows from the fact that $R^V_i \supseteq R^V_n$ for each $i \in [n]$ and that the sets $O^{V'}_1, \dots, O^{V'}_n$ are pairwise disjoint subsets of $[m]$.
    
    Now, using our MRF conditioning lemma (\Cref{lem:mrf_conditioned_bound}) and the fact that $R^V_i$ depends only on $v_1, \dots, v_{i-1}$, we have
    $\E_{V'}[ \ind{j \in O^{V'}_i \setminus R^V_i} \cdot p^{V'}_j] \geq e^{-4\Delta} \E_{V'}[ \ind{j \in O^{V'}_i \setminus R^V_i} \cdot p^{V'}_j \mid v_i' = v_i]$. Applying this to the above bound and rewriting $V'$ as $\hat V$ to reflect the conditioning, we have
    \begin{equation}\label{eq:tail-rev}
        \E_V\Big[\sum_{j\in [m]} \ind{i \text{ sold}} \cdot p_j\Big] ~\geq~ \E_V \Bigg[ \sum_{i\in [n]}\E_{\hat V}\Big[\sum_{j \in O^{\hat V}_i \setminus R^V_i}  \alpha p^{\hat V}_j ~\Big | ~ \hat v_i = v_i \Big] \Bigg].
    \end{equation}
    
    Finally, combining the revenue bound \eqref{eq:tail-rev} with the utility bound \eqref{eq:tail-util} summed over all buyers $i$, we get 
    \begin{align*}
    \E_V \Big[ \sum_{i\in [n]} v_i(S_i) \Big]
    &     = \sum_{i \in [n]} \E\Big[v_i(S_i) - \sum_{j \in S_i} p_j\Big] + \E_V\Big[\sum_{j\in [m]} \ind{i \text{ sold}} \cdot p_j\Big] \\
    &\geq \E_{V} \Bigg[ \sum_{i\in [n]}  \E_{\hat V}\Big[v_i(O_i^{\hat V}) - \alpha \sum_{j \in O^{\hat V}_i} p^{\hat V}_j + \alpha\sum_{j \in O^{\hat V}_i}(p^{\hat V}_j - e^{4\Delta}b_j)^+ ~\Big | ~ \hat v_i = v_i \Big] \Bigg]\\
    &= \sum_{i\in [n]}  \E_{V}\Bigg[v_i(O_i^{V}) - \alpha \sum_{j \in O^{V}_i} p^{V}_j + \alpha\sum_{j \in O^{V}_i}(p^{V}_j - e^{4\Delta}b_j)^+ \Bigg]\\
    &= \Big(\opt - \alpha \sum_j b_j\Big) + \alpha \cdot \E_V \Big[ \sum_{j\in [m]} \lp(p^V_j - e^{4\Delta} b_j\rp)^+\Big]. \qedhere
    \end{align*}
\end{proof}

\subsection{Core Contribution}
In this section, we prove the core contribution provided in \Cref{lem:balanced-prices-core}. To bound the core contribution, we need to exploit specific properties of the valuation class, so we will argue it separately for XOS and $k$-uniform hypergraph matching valuations.

\subsubsection{XOS Valuations}
In this setting, given the $m$ items, the $n$ buyers arrive sequentially one by one. Recall that the valuation profile $V$ is drawn from a $\Delta$-MRF distribution. Each buyer's valuation is XOS, defined as
 \begin{equation}
     v_i(S) = \max_{a \in \cA_i} \sum_{j \in S} a_j,
 \end{equation} where each $\cA_i$ is a subset of $\Rp^m$. We show, in \Cref{lem:XOS}, how to construct $(1,1)$-balanced prices that also obtain the welfare guarantee of \Cref{cor:final_guarantee_max}.

\begin{lemma}\label{lem:XOS}
   
    We define items prices $p^V_1, \dots, p^V_m$ as follows: for each $i$ and each $j \in O^V_i$, set $p^V_j = a^{(i)}_j$ where $a^{(i)} = \argmax_{a \in \cA_i} \sum_{j \in O^V_i} a_j$. Then
    \begin{enumerate}
        \item The prices $\{p^V_j\}_{j \in [m]}$ are $(1,1)$-balanced.
        \item There exists a posted-price mechanism for valuations $V \sim \cD$ such that 
        \[
        \alg \geq \frac{1}{O(\Delta)}\E_V\Bigg[ \sum_{j \in [m]} \min\{p^V_j,~e^{4\Delta}b_j\} - \frac{1}{e} \sum_{j \in [m]} b_j\Bigg].
        \]
    \end{enumerate}
\end{lemma}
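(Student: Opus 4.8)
The plan is to establish the two claims separately, with the first being essentially immediate and the second being the crux.

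For claim 1, observe that by construction, for each buyer $i$, the prices on $O^V_i$ are exactly the coordinates of the supporting additive function $a^{(i)} \in \cA_i$ that witnesses $v_i(O^V_i)$. Thus $\sum_{j \in O^V_i} p^V_j = \sum_{j \in O^V_i} a^{(i)}_j = v_i(O^V_i)$, which immediately gives property 2 of \Cref{def:balanced-prices} with $\beta = 1$. For property 1 with $\alpha = 1$, I would use that for any $S$, since $v_i$ is XOS, $v_i(O^V_i \cap S) \geq \sum_{j \in O^V_i \cap S} a^{(i)}_j$ (because $a^{(i)}$ restricted to a subset still lower-bounds $v_i$ on that subset). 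Hence $v_i(O^V_i) - v_i(O^V_i \cap S) \leq \sum_{j \in O^V_i} a^{(i)}_j - \sum_{j \in O^V_i \cap S} a^{(i)}_j = \sum_{j \in O^V_i \setminus S} a^{(i)}_j = \sum_{j \in O^V_i \setminus S} p^V_j$, which is exactly property 1. (These prices are well-defined: each item $j$ lies in at most one $O^V_i$ since the optimum allocation is a partition; items in no $O^V_i$ get price, say, $0$ or any convenient value, which does not affect the argument.)

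For claim 2, the approach is to design a posted-price mechanism whose prices are a scaled-down, capped version of the base prices $b_j = \E_V[p^V_j]$. Specifically, I expect to set item price $q_j := \tfrac{1}{c\Delta}\min\{\text{something involving } b_j\}$ — more precisely, draw a uniform threshold and post prices proportional to $b_j$ with a multiplier on the order of $1/\Delta$, so that with constant probability an item survives to a buyer who values it highly. The revenue-plus-utility accounting will go as follows: when buyer $i$ arrives and item $j \in O^{\hat V}_i$ (for a resampled $\hat V$ conditioned on $\hat v_i = v_i$) is still available, buyer $i$ can afford to buy it whenever $p^{\hat V}_j = a^{(i)}_j$ exceeds its posted price; using the XOS supporting-function property, buyer $i$'s utility from its actual purchase dominates what it would get from the affordable subset of $O^{\hat V}_i$. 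The MRF conditioning lemma (\Cref{lem:mrf_conditioned_bound}) lets us move between the conditional law of $\hat V$ given $\hat v_i = v_i$ and the unconditional law at a cost of $e^{4\Delta}$, which is exactly why the cap at $e^{4\Delta} b_j$ appears: it ensures the ``core'' part $\min\{p^V_j, e^{4\Delta} b_j\}$ has expectation comparable to $e^{4\Delta} b_j$ times a probability we can control, and the $e^{4\Delta}$ cancels against the conditioning loss. The $-\tfrac1e \sum_j b_j$ slack term will come from bounding the probability that an item is already sold before buyer $i$ arrives (a union-bound / Markov-type argument showing items are available with probability at least $1 - 1/e$, or from an explicit geometric-series computation over the prices).

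The main obstacle I anticipate is the second claim's mechanism design and its analysis — specifically, choosing the right price scaling so that the $1/O(\Delta)$ factor emerges cleanly while the additive $\tfrac1e \sum_j b_j$ loss is tolerable, and correctly handling the correlation: the event ``$j$ available when $i$ arrives'' depends on $v_1,\dots,v_{i-1}$, which are correlated with $v_i$ under the MRF, so I must be careful to apply \Cref{lem:mrf_conditioned_bound} only to decouple $v_i$ from the rest (as is done in the proof of \Cref{lem:balanced-prices-tail}), and then argue availability using the resampled $\hat V$. I would mirror the structure of the tail-contribution proof: bound utility via a feasible alternative purchase $S'_i \subseteq O^{\hat V}_i \cap R^V_i$, bound revenue via disjointness of the $O^{\hat V}_i$'s, and sum. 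The XOS-specific input is only the supporting-price identity from claim 1; everything else is the generic balanced-prices machinery adapted to MRFs.
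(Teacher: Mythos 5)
Part 1 of your argument is correct and matches the paper (the paper simply cites Lemma B.1 of \cite{FeldmanGL15} for property 1; your spelled-out supporting-price argument is the standard proof of that lemma). The gap is in part 2, and it is the crux of the lemma: you never pin down the pricing scheme, and the candidates you sketch would not work. A price of the form $q_j = \Theta(b_j/\Delta)$ (your ``$\frac{1}{c\Delta}\cdot(\text{something involving }b_j)$'') fails the revenue side of the revenue-plus-utility accounting: the capped contribution $\overline p^V_j := \min\{p^V_j, e^{4\Delta}b_j\}$ ranges over a multiplicatively $e^{4\Delta}$-wide window around $b_j$, so when item $j$ is sold to the ``wrong'' buyer you collect only $O(b_j/\Delta)$ in revenue while forfeiting up to $e^{4\Delta}b_j$ of welfare --- no single price scale can simultaneously be affordable to the intended buyer and extract revenue comparable to $\overline p^V_j$ across this whole range. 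The paper's key idea, which is absent from your proposal, is \emph{multi-scale random pricing}: draw $\tau$ uniformly from $\{-1,0,1,\dots,4\Delta\}$ and post $p_j = \tfrac1e e^{\tau} b_j$. For each item with $\overline p^V_j \geq b_j/e$, exactly one of the $4\Delta+2$ scales lands within $[\tfrac{1}{e^2}\overline p^V_j, \tfrac1e \overline p^V_j]$; on that event the buyer holding $j$ in $O^V_i$ can afford it (utility $\geq (1-1/e)\overline p^V_j$ if it remains) and the revenue if it was already sold is $\geq \tfrac{1}{e^2}\overline p^V_j$. The $1/O(\Delta)$ factor is the probability of hitting the right scale, not a price discount, and the $-\tfrac1e\sum_j b_j$ slack comes from items with $\overline p^V_j < b_j/e$ that may miss every scale --- not, as you suggest, from an availability/union bound.

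A secondary misconception: you plan to mirror the tail lemma by resampling $\hat V$ conditioned on $\hat v_i = v_i$ and paying an $e^{4\Delta}$ conditioning cost via \Cref{lem:mrf_conditioned_bound}. The paper's core argument needs none of this. Since the posted prices depend only on the deterministic base prices $b_j$ and the independent randomness $\tau$, the inequality ``utility of $S_i$ $\geq$ utility of the feasible alternative $O^V_i \cap A \cap R_i$'' holds pointwise for the \emph{realized} $V$, and summing utility and revenue over $i$ using disjointness of the sets $O^V_i$ (for the same $V$) gives $\E[\sum_i v_i(S_i)] \geq \tfrac{1}{e^2}\sum_{j\in A}\overline p^V_j$ for every realization of $V$ and $\tau$; only then does one average over $\tau$. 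The role of the $e^{4\Delta}$ cap is not to cancel a conditioning loss but to confine $\overline p^V_j/b_j$ to $O(\Delta)$ geometric levels so that uniform randomization over scales loses only $1/O(\Delta)$. Without the multi-scale pricing idea, your plan does not yield the claimed bound.
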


\begin{proof}[Proof of \Cref{lem:XOS}]
First, we show that the prices are $(1, 1)$-balanced by checking the properties of balanced prices (\Cref{def:balanced-prices}). Property 2 clearly holds with equality as $\sum_{j \in O^V_i} p^V_j = \sum_{j \in O^V_i} a^{(i)}_j = v_i(O^V_i)$. Additionally, property 1 follows Lemma B.1 of \cite{FeldmanGL15}.

Next, we will show the following posted prices obtain the welfare guarantee of the lemma: Choose $\tau \sim \text{Unif}\{-1,0,1,\ldots,~4\Delta\}$, and set $p_j = \frac{1}{e} e^{\tau} b_j$.

To analyze the welfare of this scheme, let $\overline p^V_j : = \min\{p^V_j,~e^{4\Delta}b_j\}$ be the capped supporting price of $j$. Let $A := \{j \in [m] : \frac{p_j}{\overline p^V_j} \in [\frac{1}{e^2},~\frac{1}{e}]\}$ be the set of elements whose posted price is ``close'' to the true capped price given $V$. 
Recall that  $R_i$ is the set of items remaining when buyer $i$ arrives, $S_i$ is the set of items allocated to buyer $i$, and  $O^V_i$ is the set of items that buyer $i$ receives in the hindsight optimum  with valuations $V$.
We will bound the revenue and utility of the algorithm in terms of the items in $A$. The utility satisfies
\begin{align*}
    v_i(S_i) - \sum_{j \in S_i} p_j &\geq\quad  v_i(O^V_i \cap A \cap  R_i) - \sum_{O^V_i \cap A \cap R_i} p_j\\
    &\geq \sum_{O^V_i \cap A \cap R_i} (a^{(i)}_j - p_j) \quad \geq \quad  \sum_{O^V_i \cap A \cap R_i} (1-1/e) \cdot \overline p^V_j.
\end{align*}
Additionally, we have the revenue bound
\begin{align*}
   \E\Bigg[\sum_{i \in [n]} \sum_{j \in S_i}p_j\Bigg] ~=~ \E\Bigg[ \sum_{j \in [m]} \ind{j \text{ sold}} \cdot p_j \Bigg] ~\geq~ \frac{1}{e^2} \E \Bigg[\sum_{j \in A} \ind{j \text{ sold}}  \cdot \overline p^V_j\Bigg]  ~ \geq ~ \quad  \frac{1}{e^2} \E \Bigg[ \sum_{i\in [n]} \sum_{j \in O^V_i \cap A \setminus R_i}\overline p^V_j\Bigg].
\end{align*}
Adding together these bounds, we get
\[\E\Bigg[ \sum_{i\in [n]} v_i(S_i)\Bigg] \geq \frac{1}{e^2} \sum_{i\in [n]} \sum_{j \in O^V_i \cap A} \overline p^V_j = \frac{1}{e^2}\sum_{j \in A} \overline p^V_j.\]
Now, notice that as long as $p^V_j \geq \frac{1}{e}b_j$, we have that $\Pr_\tau[j \in A \mid V] = \frac{1}{4\Delta + 2}$. Hence, we obtain
\[\E \Bigg[ \sum_{i\in [n]} v_i(S_i)\Bigg] \geq \frac{1}{e^2(4\Delta + 2)}\Bigg(\sum_{j \in [m]} \overline p^V_j - \frac{1}{e}\sum_{j \in [m]} b_j\Bigg). \qedhere
\]
\end{proof}

\subsubsection{Hypergraph Matching}
Now we consider the online weighted matching in a $k$-uniform hypergraph with edge weights $w_i$. Similarly to the previous setting, buyers arrive online and each buyer $i$ is  interested in a specific hyper-edge $e_i\subseteq [m]$, where $|e_i| \leq k$, obtaining a random reward $w_i\geq 0$ if the edge is allocated to them. Each buyer's valuation has the form
\[
v_i(S) = w_i \cdot \ind{e_i \subseteq S}.
\]
The MRF distribution is on the random weights $w_i$ of the hyper-edges.
A hindsight optimum allocation has the form
\[
O^V_i = \begin{cases}
    e_i & \text{if }e_i \in M^V,\\
    \emptyset & \text{otherwise},
\end{cases}
\]
where $M^V$ is the max weight hypergraph matching over $E = \{e_i : i \in [n]\}$ with weights $w_{e_i} = w_i$.
In our main lemma we give $(1,k)$-balanced prices that also satisfy the requirements of \Cref{cor:final_guarantee_max}, yielding the $O(k^2(\Delta + \log k))$ competitive ratio of \Cref{thm:max_main}.

\begin{lemma}\label{lem:max_matching}
    Consider $k$-uniform hypergraph matching with $k \geq 2$ and an MRF distribution $\cD$ on the edge  weights $w_i$. 
    For a given profile $V$, define items prices $p^V_1, \dots, p^V_m$ as follows: for each $i \in [n]$ and each $j \in O^V_i$, set $p^V_j = w_i$ and for all $j \not \in \bigcup_i O^V_i$ set $p^V_j = 0$. We have
    \begin{enumerate}
        \item The prices $\{p^V_j\}_{j \in [m]}$ are $(1,k)$-balanced.
        \item There exists a posted-price mechanism for valuations $V \sim \cD$ such that 
        \[
        \alg \geq \frac{1}{O\lp(k^2(\Delta + \log k)\rp)}\Bigg(\E_V \Big[ \sum_{j\in [m]} \min\{p^V_j,~e^{4\Delta}b_j\}\Big] - \frac{1}{ek} \sum_{j\in [m]} b_j\Bigg).
        \]
    \end{enumerate}
\end{lemma}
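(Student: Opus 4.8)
\textbf{Proof proposal for Lemma~\ref{lem:max_matching}.}

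The plan is to verify the two claims separately. For claim~1, the prices are defined so that each matched buyer $i$ (with $e_i \in M^V$) has $p^V_j = w_i$ for every $j \in e_i = O^V_i$, and $p^V_j = 0$ otherwise. Property~2 of \Cref{def:balanced-prices} is then immediate: $\sum_{j \in O^V_i} p^V_j = |e_i| \cdot w_i \leq k \cdot w_i = k \cdot v_i(O^V_i)$, giving $\beta = k$. For property~1 with $\alpha = 1$: if $O^V_i = \varnothing$ the inequality is trivial; if $O^V_i = e_i$, then $v_i(O^V_i) - v_i(O^V_i \cap S) = w_i(1 - \ind{e_i \subseteq S})$, and whenever $e_i \not\subseteq S$ there is at least one item $j \in e_i \setminus S$ with $p^V_j = w_i$, so $\sum_{j \in O^V_i \setminus S} p^V_j \geq w_i \geq v_i(O^V_i) - v_i(O^V_i \cap S)$. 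This handles claim~1.

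For claim~2, I would design the core pricing scheme analogously to the XOS case (\Cref{lem:XOS}), using a randomized multiplicative scaling of the base prices $b_j$: pick $\tau$ uniformly from a grid of size $O(\Delta + \log k)$ covering the range $[\tfrac{1}{ek}, e^{4\Delta}]$ in multiples of $e$, and set $p_j = \tfrac{1}{ek} e^{\tau} b_j$ (the extra factor $\tfrac1k$ relative to the XOS case is what lets us absorb the $\beta = k$ slack and the $k$ items per edge). Define the capped price $\overline p^V_j := \min\{p^V_j, e^{4\Delta} b_j\}$ and the ``good'' set $A$ of items whose posted price is within a constant multiplicative window of $\overline p^V_j$; as in \Cref{lem:XOS}, for any $j$ with $\overline p^V_j$ not too small we have $\Pr_\tau[j \in A \mid V] = \Theta(1/(\Delta + \log k))$. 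The key structural step, and the main obstacle, is the revenue/utility accounting: for a buyer $i$ whose edge $e_i$ is matched in $M^V$, all $k$ items of $e_i$ carry price $w_i$, so buyer $i$ can afford $e_i$ only if \emph{all} of its items are simultaneously still available and simultaneously in $A$ — a conjunction of $k$ events, which is why we lose an extra $k$ factor compared to XOS (where a single surviving item suffices to extract value). I would handle this by a charging argument over hyperedges: when buyer $i$ arrives and finds some item of $e_i$ already sold, I charge $w_i$ (up to the capping and the window constant) to the earlier buyer who bought that item; since each bought item can be charged by the at most $k$ hyperedges containing it, and the max-weight matching structure bounds total weight, this yields $\E[\sum_i v_i(S_i)] \geq \tfrac{1}{O(k)} \E[\sum_{j \in A} \overline p^V_j]$, and combining with $\Pr_\tau[j \in A \mid V]$ and summing the lost bottom grid cell $\tfrac{1}{ek}\sum_j b_j$ gives the claimed $\tfrac{1}{O(k^2(\Delta + \log k))}$ bound (one $k$ from the charging, one $k$ from $\beta = k$).

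Concretely I would carry out: (i) verify $(1,k)$-balancedness as above; (ii) fix the grid and posted prices $p_j = \tfrac{1}{ek} e^\tau b_j$; (iii) for a fixed $\tau$ (equivalently conditioning on $V$ and the event $j \in A$), lower-bound utility of buyer $i$ by what $e_i \cap A \cap R_i$ yields and lower-bound revenue by $\sum_{j \text{ sold}} p_j \geq \Omega(1)\sum_{j \in A \text{ sold}} \overline p^V_j$; (iv) run the hyperedge charging argument to show sold-or-bought items in $A$ cover a $1/O(k)$ fraction of $\sum_{j\in A}\overline p^V_j$ in expectation; (v) take expectation over $\tau$, using $\Pr_\tau[j\in A\mid V] = \Theta(1/(\Delta+\log k))$ for items with $\overline p^V_j \geq \tfrac{1}{ek}b_j$ and discarding the rest into the $-\tfrac{1}{ek}\sum_j b_j$ term; (vi) take expectation over $V$ to conclude. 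The delicate point throughout is the $k$-wise conjunction in step (iii)–(iv): getting the charging to lose only a single factor of $k$ (rather than $k$ per item or an exponential-in-$k$ factor) is where the argument must be set up carefully, likely by charging a whole hyperedge's weight to a single representative already-sold item and using disjointness of $M^V$.
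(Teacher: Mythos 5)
Your verification of part~1 is fine and matches the paper (the paper cites Lemma~B.1 of \cite{FeldmanGL15} for property~1 of \Cref{def:balanced-prices}, whereas you argue it directly; both work). The gap is in part~2. Your proposed scheme prices every item by a single uniform random scaling $p_j = \frac{1}{ek}e^\tau b_j$ and then charges a blocked edge $e_i \in M^V$ to the revenue of whichever item $j \in e_i$ was sold earlier. This charging does not close: the revenue collected from selling $j$ is proportional to $b_j$, and within a single hyperedge the base prices can be wildly heterogeneous. Concretely, if $e_i = \{j_1, j_2\}$ has $b_{j_1}$ comparable to $\overline w_i$ but $b_{j_2} \ll \overline w_i$ (which is consistent with the cap, since $\overline w_i = \min\{w_i, e^{4\Delta}\max_{j\in e_i} b_j\}$ only references the \emph{largest} base price in the edge), then $j_2$ carries a tiny posted price, is easily bought by some low-value competing buyer, blocks $e_i$, and yields revenue nowhere near $\overline w_i/\mathrm{poly}(k)$. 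No choice of the multiplicative window $A$ fixes this, because the mismatch is between $b_{j_2}$ and $\overline w_i$, not between $p_{j_2}$ and $\overline p^V_{j_2}$. (A smaller slip: an item can lie in arbitrarily many hyperedges, not ``at most $k$''; what actually saves the charging multiplicity is that each item lies in at most one edge of the matching $M^V$.)

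The paper's proof is built precisely to evade this obstacle, via a two-tier randomized pricing rather than a single scaling. Items are grouped into levels $\ell_j$ determined by $b_j$, each level is independently declared ``low'' with probability $1/k$, and within an edge $e_i$ the good event $\xi_i$ (probability $\geq 1/(ek)$) is that only the maximum-level items of $e_i$ are low-priced at $\approx \frac{1}{ek}e^{\tau\ell^i_{\max}} \approx \frac{1}{ek}\overline w_i$, while all lower-level items are priced near their cap $\frac{1}{e}e^{4\Delta}b_j$. High-priced items then satisfy a dichotomy: either they sell so rarely (probability $\leq \frac{1}{k^2}e^{-4\Delta-2}$) that a union bound over the $k$ items, corrected by $e^{4\Delta}$ for MRF conditioning, shows they do not block $e_i$ except with probability $1/e$; or they sell often enough that their revenue alone, $\gtrsim b_j/(2e^3k^2)$, already pays for \emph{all} matched edges through $j$ (the set $H^\tau$ and Claim~\ref{cl:high_selling}). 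Only the single max-level low-priced item is handled by your kind of sold-vs-available case analysis, and there the revenue $\frac{1}{ek}e^{\tau\ell^i_{\max}}$ genuinely is an $\Omega(1/k)$ fraction of $\overline w_i$. To repair your proof you would need to introduce some analogue of this high/low dichotomy for the low-$b_j$ items of each edge; the uniform scaling plus charging, as written, cannot yield the stated bound.
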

\begin{proof}
    First, we show that our prices are $(1,k)$-balanced. We note that property 2 of \Cref{def:balanced-prices} with $\beta = k$ follows by construction, as $\sum_{j \in O^V_i} p_j^V = k \cdot w_i = k v_i(O_i)$. Additionally, property 1 follows from Lemma B.1 of \cite{FeldmanGL15}.

    Next, we give a posted-price mechanism that achieves the guarantee in the lemma. We first define ``capped'' edge-weights $\overline w_i$ for each buyer $i$ as $\overline w_i := \min \{w_i,~e^{4\Delta} \cdot \max_{j \in e_i} b_j\}$. Notice that for all $V$ and $j \in O^V_i$, we have $p^V_j \geq \overline w_i \geq \min\{p^V_j,~e^{4\Delta}b_j\}$. Thus, to show our lemma, it suffices to show that our algorithm obtains expected welfare at least
    \[
    \frac{1}{O((\Delta + \log k)k^2)}\E_V\Bigg[\sum_{i\in [n]} \ind{e_i \in M^V} \cdot \overline w_i - \frac{1}{ek}\sum_{j \in [m]} b_j\Bigg].
    \]

    To obtain our prices, we first pick $\tau \sim Unif[0, 4\Delta + \log k + 2]$. Now, for each item $j$, let $\ell_j \in \Z$ be the unique integer such that $e^{\tau +\ell_j} \in [\frac{1}{e^2k}b_j,~e^{4\Delta}b_j)$ for $b_j := \E_V \lp[ p^V_j\rp]$. Additionally, we sample $X_\ell \sim Ber(\frac{1}{k})$ independently for each $\ell \in \Z$ ; this variable indicates whether the item's price will be set to high or low. 

    Given these variables, we define our item prices as
    \[
    p_j := \begin{cases}
        \frac{1}{e} \cdot e^{4\Delta} b_j & X_{\ell_j} = 0 \text{ (``high price''}),\\
        \frac{1}{e} \cdot e^{\tau +\ell_j} & X_{\ell_j} = 1 \text{ (``low price'')}.
    \end{cases}
    \]
    First, we show that items which sell frequently at a high price achieve a high enough revenue to account for all their neighboring edges. Let $H^t := \lp\{j : \Pr[j \text{ sells} \mid X_{\ell_j} = 0,~ \tau = t] \geq \frac{1}{k^2}e^{-4\Delta - 2}\rp\}$ and denote by $\delta(H^t)$ the neighboring edges of the items in the set $H^t$. 
    \begin{restatable}{claim}{HighSellingClaim}\label{cl:high_selling}
        For fixed $t$, we have $\E \big[\sum_{j \in H^t} \ind{j \text{ sells}} \cdot p_j  \mid \tau = t\big] \geq \frac{1}{2e^3k^2} \sum_{\substack{e_i \in \delta(H^t)}} \ind{e_i \in M^V} \cdot w_i$    
    \end{restatable}

    \begin{proof}[Proof of \Cref{cl:high_selling}]
        For $j \in H^t$, we have 
        \begin{align*}
            \E [\ind{j \text{ sells}} \cdot p_j  \mid \tau = t] 
            &\geq \E \lp[\ind{j \text{ sells}}\cdot e^{4\Delta -1} b_j  \mid X_{\ell_j} = 0,~\tau = t\rp] \Pr[X_{\ell_j} = 0 \mid \tau = t]\\
            &\geq \frac{1}{e^{4\Delta +2}k^2} \cdot e^{4\Delta - 1}b_j \cdot \frac{k-1}{k} \quad \geq \quad  \frac{b_j}{2e^3k^2}.
        \end{align*}
        Using $b_j = \E \lp[ \sum_{e_i \in \delta(j)} \ind{e_i \in M^V} w_i\rp]$ and summing over $j \in H^t$, we get our claim.
            \end{proof}

    Hence, it suffices to focus on the contribution from buyers $i$ such that $e_i \not \in \delta (H^\tau)$, as taking expectation over $\tau$ an applying \Cref{cl:high_selling} ``covers'' the value for $\delta(H^\tau)$. For such a buyer $i$ with $e_i \not \in \delta(H^\tau)$, define the $\xi_i$ to be the event that
    \begin{enumerate}
        \item $X_{\ell^i_{\max}} = 1$, where $\ell^i_{\max} = \max_{j \in e_i} \ell_j$.
        \item $X_{\ell_j} = 0$ for all $j \in e_i$ with $\ell_j \neq \ell^i_{\max}$.
    \end{enumerate}

    In other words, $\xi_i$ is the event that the items $j \in e_i$ with large $b_j$ are all ``low-priced,'' while all other items are ``high-priced.'' Let $B^i_{low} = \{j \in e_i : X_{\ell_j} = 1\}$ and $B^i_{high} = \{j \in e_i : X_{\ell_j} = 0\}$ denote these low and high priced items respectively. Notice that even in the event $\xi_i$, all items in $e_i$ have price at most $\frac{1}{e}e^{\tau + \ell^i_{\max}}$. Additionally, we note that $\xi_i$ is independent of the valuation $V$, and 
    \[
    \Pr[\xi_i] \geq \frac{1}{k} \cdot \left(\frac{k-1}{k}\right)^{k-1} \geq \frac{1}{ek}.
    \]

    \noindent
    Now, we bound the utility and the revenue separately for each fixed value of $\tau$. 
    \paragraph{Bounding the utility.} Recall that we denote by $R_i^V$ the set of items that remain when buyer $i$ arrives, for a given $V$. We have
    \begin{align*}
        \E\Big[v_i(S_i) - \sum_{j \in S_i} p_j\Big] &\geq \E_V\Bigg[\ind{\xi_i}\ind{e_i \subseteq R^V_i} \cdot \Big(w_i - \sum_{j \in e_i} p_j\Big)^+ \Bigg] & \\
        &\geq \E_V\Bigg[\ind{\xi_i}\ind{e_i \subseteq R^V_i} \cdot \Big(w_i - \frac{1}{e}e^{\tau +\ell^i_{\max}}\Big)^+\Bigg] & \\
        &\geq \E_V\left[\ind{\chi_i}\ind{\xi_i}\ind{e_i \subseteq R^V_i} \cdot \frac{e-1}{e} \cdot \overline w_i\right] &  &\\
        &\geq \E_V\left[\ind{\chi_i}\ind{\xi_i} \lp(\ind{B^i_{high} \subseteq R^V_i} - \ind{B^i_{low} \not \subseteq R^V_i}\rp) \cdot \frac{e-1}{e} \cdot \overline w_i\right] .
    \end{align*}
    Here, $\chi_i$ denotes the event that $\frac{e^{\tau +\ell^i_{\max}}}{\overline w_i} \in [\frac{1}{e},1]$. We show the following claim on the term that corresponds to the high-priced items. 
    \begin{restatable}{claim}{HighPricedClaim}\label{cl:high_priced}
        If $i \not \in \delta(H^\tau)$, then we have
        \[
        \E\left[\ind{\chi_i} \ind{\xi_i} \ind{B^i_{high} \subseteq R^V_i} \cdot \overline w_i\right] \geq (1 - 1/e)\E\left[\ind{\chi_i} \ind{\xi_i}\cdot \overline w_i\right] .
        \]
    \end{restatable}
  \begin{proof}[Proof of Claim~\ref{cl:high_priced}]
        It suffices to show that $\Pr[B^i_{high} \not\subseteq R^V_i \mid \xi_i,~\chi_i,~\overline w_i] \leq 1/e$. We have
        \begin{align*}
            \Pr[B^i_{high} \not\subseteq R^V_i \mid \xi_i,~\chi_i,~\overline w_i] 
            &\leq \sum_{j \in B^i_{high}} \Pr[j \not \in R_i \mid \xi_i,~\chi_i,~\overline w_i]\\
            &= \sum_{j \in B^i_{high}} \Pr[j \not \in R_i \mid \xi_i,~\overline w_i]\\
            &= \sum_{j \in B^i_{high}} \frac{\Pr[(j \not \in R_i)  \wedge \xi_i \mid \overline w_i]}{\Pr[\xi_i \mid \overline w_i]}\\
            &\leq ek\sum_{j \in B^i_{high}} \Pr[j \not \in R_i \mid \overline w_i]\\
            &\leq ek\sum_{j \in B^i_{high}} e^{4\Delta} \Pr[j \not \in R_i] \quad \leq \quad  ek \cdot k e^{4\Delta} \cdot \frac{1}{k^2} e^{-4\Delta - 2} \quad = \quad  1/e.       \qedhere
        \end{align*}
    \end{proof}

    Applying our claim to the utility bound and summing over just $i$ with $e_i \in M^V$, we get
    \begin{align*}
        \sum_{i\in [n]} \E\Big[v_i(S_i) - \sum_{j \in S_i} p_j\Big] &\geq \left(\frac{e-1}{e}\right)^2\E_V \sum_{e_i \in M^V \setminus \delta(H)}\left[\ind{\chi_i}\ind{\xi_i} \cdot \overline w_i\right] - \E_V \sum_{e_i \in M^V}\left[\ind{\chi_i}\ind{\xi_i}\ind{B^i_{low} \not \subseteq R^V_i} \cdot \overline w_i\right].
    \end{align*}

\paragraph{Bounding the revenue.}
We have
\begin{align*}
    \E \Big[ \sum_{j\in [m]} \ind{j \text{ sold}} \cdot p_j \Big]
    &\geq \E \Big[ \sum_j \ind{X_{\ell_j} = 1} \ind{j \text{ sold}} \cdot \frac{1}{e}e^{\tau + \ell_j}\Big]\\
    &= \E \Big[ \sum_{e_i \in M^V} \sum_{j \in e_i} \ind{X_{\ell_j} = 1} \ind{j \text{ sold}} \cdot \frac{1}{e}e^{\tau + \ell_j}\Big]\\
    &= \E \Big[ \sum_{e_i \in M^V} \sum_{j \in B^i_{low}} \ind{j \text{ sold}} \cdot \frac{1}{e}e^{\tau + \ell_j} \Big]\\
    &\geq \E \Big[ \sum_{e_i \in M^V} \sum_{j \in B^i_{low}} \ind{\xi_i} \ind{j \text{ sold}} \cdot \frac{1}{e}e^{\tau + \ell^i_{\max}} \Big]\\
    &\geq \E \Big[ \sum_{e_i \in M^V} \ind{\xi_i} \ind{B^i_{low} \not \subseteq R^V_i} \cdot \frac{1}{e}e^{\tau + \ell^i_{\max}} \Big]\\
    &\geq \E \Big[ \sum_{e_i \in M^V} \ind{\chi_i} \ind{\xi_i} \ind{B^i_{low} \not \subseteq R^V_i} \cdot \frac{1}{e^2}\overline w_i \Big].
\end{align*}

    Hence, combining our revenue and utility bounds at the appropriate ratios, we have
    \begin{align*}
        \E \Big[\sum_{i \not \in \delta(H^\tau)} v_i(S_i) \Big]&\geq \frac{1}{e^2}\left(\frac{e - 1}{e}\right)^2\E_V \Big[\sum_{e_i \in M^V \setminus \delta(H^\tau)}\ind{\chi_i}\ind{\xi_i} \cdot \overline w_i\Big]\\
        &\geq \frac{1}{e^3k}\left(\frac{e - 1}{e}\right)^2\E_V \Big[\sum_{e_i \in M^V \setminus \delta(H^\tau)}\ind{\chi_i}\cdot \overline w_i\Big].
    \end{align*}
    Then, by adding contribution from buyers $i\in \delta(H^\tau)$ from \Cref{cl:high_selling}, we have
 \[
 \E \Big[ \sum_{i\in [n]} v_i(S_i) \Big]\geq \frac{1}{e^3k^2}\left(\frac{e - 1}{e}\right)^2 \cdot \E_V \Big[ \sum_{e_i \in M^V}\ind{\chi_i}\cdot \overline w_i\Big].
 \]
 
    Finally, we now take the expectation over the choice of $\tau$. Notice that given $V$ (and hence, $\overline w_i$), the conditional probability of $\chi_i$ is $\frac{1}{4\Delta + \log k +2}$ as long as $w_i \geq \frac{1}{ek}\max_{j \in e_i}b_j$. Using this, we get
    \begin{align*}
        \E \Big[ \sum_{i} v_i(S_i) \Big]&\geq \frac{1}{e^3k^2}\Big(\frac{e - 1}{e}\Big)^2 \cdot \E_V \Bigg[ \sum_{e_i \in M^V}\frac{1}{4\Delta + \log k + 2}  \left(\overline w_i - \frac{1}{ek}\max_{j \in e_i} b_j\right)\Bigg]\\
        &\geq \frac{1}{O((\Delta + \log k) k^2)}\E_V\Big[\sum_{e_i \in M^V} \overline w_i - \frac{1}{ek}\sum b_j\Big] . \qedhere
    \end{align*}    
\end{proof}

\section{Lower Bounds for Minimization Problems} \label{sec:lowerBounds}

Our algorithms for  minimization problems in \Cref{sec:minimization} give competitive ratios that are linear in $\Delta$. A natural question is whether better competitive ratios are possible. 
We answer this question by giving a general lower bound technique for any subadditive coverage problem. To achieve this, we first show that  $\Delta$-MRFs can very accurately capture \emph{time-dependent Markov Chains} (\Cref{def:mc}). Then,  \Cref{lem:general-min-hardness}  shows how to transform any hardness that holds for a time-dependent Markov Chain into a hardness result for a $\Delta$-MRF instance, by only losing a constant factor. Using known worst-case lower bounds for Facility Location and Steiner Tree,  
we   apply this lemma to 
 obtain  an $\Omega\lp({\Delta}/{\log \Delta}\rp)$ lower bound for Facility Location (Corollary~\ref{cor:LB_FL}) and 
 an $\Omega(\Delta)$ lower bound for  Steiner Tree (Corollary~\ref{cor:LB_st_tree}).

\begin{definition}[Time-Dependent Markov Chain]\label{def:mc}
    A random variable $X = (X_1, \dots, X_n) \in \Omega$, where $\Omega = \Omega_1 \times \dots \times \Omega_n$, is a \emph{time-dependent Markov chain} if for any $\omega = (\omega_1, \dots, \omega_n) \in \Omega$ and any $i \geq 2$, 
    \[
    \Pr[X_i = \omega_i \mid \forall j < i,~ X_j = \omega_j] = \Pr[X_i = \omega_i \mid X_{i-1} = \omega_{i-1}].
    \]
\end{definition}

In the following lemma we show that for any arbitrary time-dependent Markov Chain there exists a $\Delta$-MRF that is arbitrarily close to it.

\begin{lemma}\label{lem:mrf-mc-coupling}
    Let $X = (X_1, \dots, X_n) \sim \cD$ be an arbitrary time-dependent Markov chain over $\Omega = \Omega_1 \times \dots \times \Omega_n$. Then, for any $\epsilon > 0$, there is a $\Delta$-MRF distribution $\cD_\cM$ such that for any $\omega \in \Omega$, 
    \[
    \Pr_{Y \sim \cD_\cM} [Y = \omega] \geq (1 - \epsilon) \cdot \Pr_{X \sim \cD} [X = \omega],
    \]
    where $\Delta = O(\log \frac{n \cdot \max_i  |\Omega_i|}{\epsilon})$.
\end{lemma}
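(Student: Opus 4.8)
# Proof Proposal for Lemma~\ref{lem:mrf-mc-coupling}

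\textbf{High-level approach.} The plan is to directly encode the transition structure of the Markov chain into the edge potentials of an MRF supported on the path graph $1 - 2 - \cdots - n$, using potentials of the form $\psi_e = \log(\text{transition probability})$, but \emph{truncated from below} so that no potential is smaller than $-\Theta(\Delta)$. The truncation is what gives us bounded weighted degree; the price we pay is that we can only lower bound $\Pr_{Y}[Y = \omega]$ by a $(1-\epsilon)$ factor of $\Pr_X[X = \omega]$ rather than match it exactly. Throughout, I will treat the chain as having finite state spaces $\Omega_i$ (the stated bound depends on $\max_i |\Omega_i|$, so this is the intended regime); if some transition probability $\Pr[X_i = \omega_i \mid X_{i-1} = \omega_{i-1}]$ is exactly zero we can perturb the chain by $\epsilon'$ first, absorbing the loss into $\epsilon$.

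\textbf{Construction.} Let $\pi_1(\omega_1) := \Pr[X_1 = \omega_1]$ and for $i \geq 2$ let $T_i(\omega_i \mid \omega_{i-1}) := \Pr[X_i = \omega_i \mid X_{i-1} = \omega_{i-1}]$. Set a threshold $t := \log\frac{N}{\epsilon}$ where $N$ is chosen as a suitable polynomial in $n$ and $\max_i |\Omega_i|$ (to be pinned down in the calculation). Define vertex potentials $\psi_1(\omega_1) := \log \pi_1(\omega_1)$ and $\psi_i \equiv 0$ for $i \geq 2$, and edge potentials on each path edge $e_i = \{i-1,i\}$ by
\[
\psi_{e_i}(\omega_{i-1}, \omega_i) := \max\{\log T_i(\omega_i \mid \omega_{i-1}),\; -t\}.
\]
Each vertex $i$ lies in at most two path edges, so $|\sum_{e \ni i} \psi_{e}| \le 2t + (\text{a bounded contribution from } \psi_{e} \text{ being at most } 0)$; more carefully, since $\log T_i \le 0$ we have $\psi_{e_i} \in [-t, 0]$, giving $\Delta(\cM) \le 2t = O(\log\frac{N}{\epsilon}) = O(\log\frac{n \max_i |\Omega_i|}{\epsilon})$ as required. (The vertex potential $\psi_1$ is excluded from $\Delta$ by definition, so its unboundedness is harmless.)

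\textbf{Comparing the two distributions.} By Definition~\ref{def:MRF}, $\Pr_Y[Y = \omega] = Z^{-1} \exp\big(\psi_1(\omega_1) + \sum_{i=2}^n \psi_{e_i}(\omega_{i-1}, \omega_i)\big)$ where $Z$ is the partition function. Call an index $i$ \emph{truncated at $\omega$} if $\log T_i(\omega_i \mid \omega_{i-1}) < -t$, i.e. $T_i(\omega_i \mid \omega_{i-1}) < \epsilon/N$. On any $\omega$ with \emph{no} truncated index, the unnormalized MRF weight equals exactly $\pi_1(\omega_1)\prod_{i\ge 2} T_i(\omega_i\mid\omega_{i-1}) = \Pr_X[X=\omega]$; on a general $\omega$, the unnormalized weight is $\ge \Pr_X[X=\omega]$ since truncation only \emph{raises} a potential. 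Hence $\Pr_Y[Y=\omega] \ge Z^{-1}\Pr_X[X=\omega]$, and it remains to show $Z \le 1 + O(\epsilon)$ (then rescale $\epsilon$). Since $Z = \sum_{\omega} (\text{unnormalized weight})$, I bound $Z \le \sum_\omega \Pr_X[X=\omega] + \sum_\omega (\text{excess from truncation})$. The first sum is $1$. For the excess: on any $\omega$ the unnormalized weight is at most $\Pr_X[X=\omega] \cdot \prod_{i \text{ truncated}} \frac{e^{-t}}{T_i(\omega_i\mid\omega_{i-1})}$, but this ratio can be large, so instead I bound the excess directly by summing, over each potentially-truncated transition $(i, \omega_{i-1}\to\omega_i)$, the total unnormalized mass of paths using it — which is at most $(\text{mass up to } i{-}1) \cdot e^{-t} \cdot (\text{mass after } i)$; the ``mass after'' factor telescopes to at most $\max_i |\Omega_i|^{\,n-i} \le (\max_i|\Omega_i|)^n$-ish but crucially the $e^{-t} = \epsilon/N$ factor kills it once $N \ge n \cdot (\max_i|\Omega_i|)^{O(1)}$...

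\textbf{Main obstacle.} The delicate point — and the step I expect to require the most care — is exactly this partition-function bound: showing $Z \le 1 + \epsilon$. The naive union bound over truncated transitions has to control the total downstream mass of walks, and a single truncated step being followed by many high-probability steps could in principle accumulate. The right way is to observe that for \emph{fixed} prefix $(\omega_1,\dots,\omega_{i-1})$ and fixed truncated step into $\omega_i$, the sum over all suffixes $(\omega_{i+1},\dots,\omega_n)$ of $\prod_{j>i}\psi_{e_j}$-exponentials is at most $\prod_{j>i}(\sum_{\omega_j} \max\{T_j(\cdot\mid\omega_{j-1}), e^{-t}\}) \le \prod_{j>i}(1 + |\Omega_j| e^{-t}) \le \exp(n \cdot \max_i|\Omega_i| \cdot e^{-t})$, which is $1 + O(\epsilon)$ once $e^{-t} \le \epsilon/(n\max_i|\Omega_i|)$, i.e. $t = \log\frac{n \max_i|\Omega_i|}{\epsilon}$ up to constants. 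Summing the $e^{-t}$ factor over the at most $n \cdot (\max_i|\Omega_i|)^2$ truncated transitions (weighted by prefix mass, which sums to $\le 1$) gives excess $\le n(\max_i|\Omega_i|)^2 e^{-t} \cdot (1+O(\epsilon)) = O(\epsilon)$ with the same choice of $t$ (adjusting the constant inside the log). This pins down $\Delta = 2t = O(\log\frac{n\max_i|\Omega_i|}{\epsilon})$ and completes the argument; the remaining steps are the routine verifications that $\psi_{e_i} \in [-t,0]$, that $\Delta(\cM) \le 2t$, and the final rescaling of $\epsilon$.
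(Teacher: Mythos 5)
Your proposal is correct and uses essentially the same construction as the paper: an MRF on the path graph whose edge potentials are the log transition probabilities truncated from below at $-\Theta(\Delta)$, giving weighted degree at most $2t$ and losing only a $(1-\epsilon)$ factor. The one (minor) difference is bookkeeping: the paper chooses vertex potentials so that each conditional $\Pr[Y_i \mid Y_{i-1}]$ is locally normalized and compares transitions one step at a time, whereas you leave the vertex potentials at zero and bound the global partition function by the telescoping product $\prod_i \bigl(1+|\Omega_i|e^{-t}\bigr)$ --- both yield the same final bound, and your telescoping argument (the one you settle on after discarding the naive union bound) is sound and even avoids the paper's appeal to an external construction of the normalizing vertex potentials.
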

\begin{proof}[Proof of \Cref{lem:mrf-mc-coupling}]
    Let $\Delta \geq 0$ be a value chosen later. For each $i \in [n-1]$, $s \in \Omega_i$, and $t \in \Omega_{i+1}$, let 
    \[
    \delta_{i,i+1}(s,t) := \log \left(\Pr[X_{i+1} = t \mid X_i=s]\right) \leq 0.
    \]

    Consider an MRF $\cM = (\Omega, E, \{\psi_i\}_i,\{\psi_e\}_e\}$ and corresponding random variable $Y := (Y_1, \dots, Y_n) \sim \cD_\cM$ given by (also see \Cref{fig:MarkovChain})
    \begin{itemize}
        \item $E := \{\{i,i+1\} : i \in [n-1]\}$.
        \item $\psi_{i,i+1}(s, t) =
            \max\{\delta_{i,i+1}(s,t),~-\Delta/2\}.$
    \end{itemize}

   \newcommand{\equal}{=} 
\tikzstyle{point}=[circle, draw, fill=black!30, inner sep=0pt, minimum width=2pt]
\tikzstyle{graphnode}=[circle, draw, fill=black!30, inner sep=0pt, minimum width=12pt]
\tikzstyle{localgraphnode}=[circle, draw, fill=red!100, inner sep=0pt, minimum width=12pt]
\begin{figure}[ht]
\begin{center}
\begin{tikzpicture}[thin,scale=1]

	\foreach \y in {-4,-2,2,4}{
		\node at (\y,-2) [graphnode]{};
	}
    \draw [thick] (-2-1.8,-2) to (-2.2,-2);
    \draw [thick] (-1.8,-2) to (-1,-2);
    \draw [thick] (1,-2) to (1.8,-2);
    \draw [thick] (2.2,-2) to (3.8,-2);
	\node at (-4,-2) [label=above:$Y_1\equal X_1$]{};
	\node at (-2,-2) [label=above:$Y_2$]{};	
	\node at (2,-2) [label=above:$Y_{n-1}$]{};
	\node at (4,-2) [label=above:$Y_n$]{};
	\foreach \y in {-0.3,0,0.3}{
		\node at (\y,-2) [point]{};	
	}			
\end{tikzpicture}
\end{center}
\caption{MRF $\cM = (\Omega, E, \{\psi_i\}_i,\{\psi_e\}_e\}$ with random variables $Y_1, \dots, Y_n$.}
\label{fig:MarkovChain}
\end{figure}
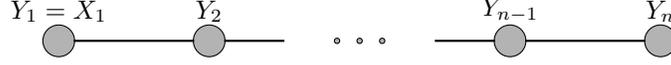

    It is not hard to see that the weighted max degree of $\cM$ is at most $\Delta$, as $|\psi_{i-1,i}(s,t)| \leq \Delta/2$. We  choose $\psi_1$ so that $Y_1$ is identically distributed to $X_1$.
    For each $i \geq 2$, we choose $\psi_i$ so that the marginal distribution of $Y_i$ given $Y_{i-1} = s$ is simply $\Pr[Y_i = t \mid Y_{i-1} = s] =\frac{\exp(\psi_{i-1,i}(s,t))}{\sum_{u \in \Omega_i}\exp(\psi_{i-1,i}(s,u))}$ (see Lemma 5.3 of \cite{VPS24} for an explicit construction of such $\psi_i$).

    \begin{claim}
        For $i \geq 2$, $s \in \Omega_{i-1}$, and $t \in \Omega_{i}$, we have
        \[
        \Pr[Y_{i} = t \mid Y_{i-1} = s] \geq \Pr[X_{i} = t \mid X_{i-1} = s] \cdot (1 - |\Omega_i|\cdot e^{-\Delta/2}) .
        \]
    \end{claim}
    \begin{proof}
        By construction,
        \begin{align*}
        \Pr[Y_{i} = t \mid Y_{i-1} = s] &= \frac{\exp(\psi_{i-1,i}(s,t))}{\sum_{u \in \Omega_i}\exp(\psi_{i-1,i}(s,u))} \\
        &\geq \frac{\exp(\delta_{i-1,i}(s,t))}{|\Omega_i|e^{-\Delta/2} + \sum_{u \in \Omega_i}\exp(\delta_{i-1,i}(s,u))}
        = \frac{\Pr[X_i = t \mid X_{i-1} = s]}{|\Omega_i|e^{-\Delta/2} + 1}, 
        \end{align*}
        where the inequality uses $\exp({\psi_{i-1,i}(s, u)}) \leq
            \exp({\delta_{i-1,i}(s,u)})  + \exp({-\Delta/2})$.
    \end{proof}
    Intuitively, this claim implies that $X$ and $Y$ have very similar transition probabilities between states as long as $|\Omega_i| \cdot e^{-\Delta/2}$ is small for each $i$. 
    Using this claim, we get for any $\omega = (\omega_1, \dots, \omega_n) \in \Omega$,
    \begin{align*}
        \Pr[Y = \omega] &= \Pr[Y_1 = \omega_1] \cdot \prod_{i =2}^n \Pr[Y_i = \omega_i \mid Y_{i-1}]\\
        &\geq \Pr[X_1 = \omega_1] \cdot \prod_{i = 2}^n \Pr[X_i = \omega_i \mid X_{i-1} = \omega_{i-1}](1 - |\Omega_i|\cdot e^{-\Delta/2}) \\
        &\geq \Pr[X = \omega] \cdot (1 - n \cdot \max_i |\Omega_i| \cdot e^{-\Delta/2}).
    \end{align*}
    By setting $\Delta = 2 \log \left(\frac{n \max_i \cdot |\Omega_i|}{\epsilon}\right)$, we get $\Pr[Y = \omega] \geq (1 - \epsilon) \Pr[X = \omega]$, as desired in \Cref{lem:mrf-mc-coupling}.
\end{proof}

We now use this lemma to obtain a general lower bound for online subadditive coverage problems in the $\Delta$-MRF prophet model.

\begin{lemma}\label{lem:general-min-hardness}
    Let $M, C \geq 0$, $\alpha \geq 1$. Suppose for some subadditive coverage problem, there is a time-dependent Markov chain $X =(X_1, \dots, X_n) \sim \cD$ over $\Omega = \Omega_1 \times \dots \times \Omega_n$ such that when arrivals are sampled from $\cD$,  for any online algorithm $\E_{X \sim \cD} [\alg(X)] \geq \alpha \cdot \E_{X \sim \cD} [\opt(X)]$. Furthermore, suppose $|\Omega_i| \leq M$ for each $i$ and $\opt(\omega) \leq C$ for any $\omega \in \Omega$. Then, there exists a $\Delta$-MRF distribution $\cD_\cM$ such that any online algorithm has
    \[
    \E_{Y \sim \cD_\cM} [\alg(Y)] \geq \alpha/2 \cdot \E_{X \sim \cD_\cM} [\opt(X)],
    \]
    where $\Delta = O\left(\log \frac{nMC}{\E [\opt(X)]}\right)$.
\end{lemma}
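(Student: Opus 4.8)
The plan is to combine the coupling from \Cref{lem:mrf-mc-coupling} with a careful accounting of how the multiplicative closeness of the two distributions transfers to the expected values of $\alg$ and $\opt$. First I would invoke \Cref{lem:mrf-mc-coupling} with a parameter $\epsilon$ to be chosen at the end: this produces a $\Delta$-MRF distribution $\cD_\cM$ with $\Delta = O\bigl(\log \tfrac{nM}{\epsilon}\bigr)$ (using $|\Omega_i| \le M$) such that $\Pr_{Y \sim \cD_\cM}[Y = \omega] \ge (1-\epsilon)\Pr_{X \sim \cD}[X = \omega]$ for every $\omega \in \Omega$. The key consequence is that for any \emph{nonnegative} function $g$ on $\Omega$ we have $\E_{Y \sim \cD_\cM}[g(Y)] \ge (1-\epsilon)\,\E_{X \sim \cD}[g(X)]$, while in the other direction, writing $\mu := \cD_\cM - (1-\epsilon)\cD$ (a nonnegative measure of total mass $\epsilon$), we get $\E_{Y \sim \cD_\cM}[g(Y)] = (1-\epsilon)\E_{X\sim\cD}[g(X)] + \int g\, d\mu \le (1-\epsilon)\E_{X\sim\cD}[g(X)] + \epsilon \cdot \max_\omega g(\omega)$.

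Next I would apply these two inequalities to the right functions. For the lower bound on the algorithm's cost, I want to lower bound $\E_{Y\sim\cD_\cM}[\alg(Y)]$ for an \emph{arbitrary} online algorithm. Here the subtlety is that the algorithm's behavior on $\cD_\cM$ need not match its behavior on $\cD$; but since $\alg(\cdot) \ge 0$, the one-sided bound $\E_{Y\sim\cD_\cM}[\alg(Y)] \ge (1-\epsilon)\E_{X\sim\cD}[\alg(X)]$ holds for whatever decision rule the algorithm uses, and by hypothesis $\E_{X\sim\cD}[\alg(X)] \ge \alpha \cdot \E_{X\sim\cD}[\opt(X)]$ for every online algorithm (in particular for this one's induced strategy on $\cD$). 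So $\E_{Y\sim\cD_\cM}[\alg(Y)] \ge (1-\epsilon)\alpha \cdot \E_{X\sim\cD}[\opt(X)]$. For the upper bound on the offline optimum, I use the second inequality with $g = \opt$ together with the uniform bound $\opt(\omega) \le C$: $\E_{Y\sim\cD_\cM}[\opt(Y)] \le (1-\epsilon)\E_{X\sim\cD}[\opt(X)] + \epsilon C \le \E_{X\sim\cD}[\opt(X)] + \epsilon C$.

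Finally I would choose $\epsilon$ to make the loss a factor of $2$. Setting $\epsilon = \Theta\bigl(\E_{X\sim\cD}[\opt(X)]/C\bigr)$ small enough ensures $\epsilon C \le \tfrac{1}{2}\E_{X\sim\cD}[\opt(X)]$ (say) and $1-\epsilon \ge \tfrac{1}{2}$ as well — actually a cleaner route is to pick $\epsilon$ so that $(1-\epsilon)\alpha \ge \tfrac{\alpha}{2}\cdot\bigl(1 + \tfrac{\epsilon C}{\E[\opt(X)]}\bigr)$, which rearranges to a bound of the form $\epsilon = O(\E[\opt(X)]/C)$ since $\alpha \ge 1$. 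With this choice, chaining the two displays gives $\E_{Y\sim\cD_\cM}[\alg(Y)] \ge (1-\epsilon)\alpha\,\E_{X\sim\cD}[\opt(X)] \ge \tfrac{\alpha}{2}\bigl(\E_{X\sim\cD}[\opt(X)] + \epsilon C\bigr) \ge \tfrac{\alpha}{2}\,\E_{Y\sim\cD_\cM}[\opt(Y)]$, which is the claimed inequality (noting $\E_{X\sim\cD_\cM}[\opt(X)] = \E_{Y\sim\cD_\cM}[\opt(Y)]$ is just notation). Substituting $\epsilon = \Theta(\E[\opt(X)]/(nMC))$—or whatever the bookkeeping forces—into $\Delta = O(\log\tfrac{nM}{\epsilon})$ yields $\Delta = O\bigl(\log\tfrac{nMC}{\E[\opt(X)]}\bigr)$, as stated. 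The main obstacle I anticipate is the first step's subtlety: making sure that the one-sided comparison $\E_{\cD_\cM}[\alg] \ge (1-\epsilon)\E_{\cD}[\alg]$ is legitimate for an arbitrary (possibly randomized, adaptive) online algorithm — this works precisely because $\alg$ is a nonnegative random variable and the pointwise density domination $\cD_\cM \ge (1-\epsilon)\cD$ lets us compare expectations of \emph{any} fixed measurable cost functional, including the one the algorithm realizes; one must also be slightly careful that the hypothesis "$\E_{X\sim\cD}[\alg(X)] \ge \alpha\E_{X\sim\cD}[\opt(X)]$ for all online algorithms" is being applied to the correct object, namely the strategy the algorithm would execute when fed arrivals, regardless of which distribution generated them.
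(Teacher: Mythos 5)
Your proposal is correct and follows essentially the same route as the paper: invoke \Cref{lem:mrf-mc-coupling}, decompose $\cD_\cM = (1-\epsilon)\cD + \mu$ with $\mu$ a nonnegative measure of mass $\epsilon$, lower-bound $\E_{\cD_\cM}[\alg]$ and upper-bound $\E_{\cD_\cM}[\opt]$ via the residual, and tune $\epsilon = \Theta(\E[\opt(X)]/C)$. The only (immaterial) difference is that the paper retains the residual term $\int \alg\, d\mu \geq \int \opt\, d\mu$ in both numerator and denominator to get the ratio $\tfrac{\alpha+\eta}{1+\eta}$ with $\eta \le 1$, whereas you discard it using nonnegativity and absorb the loss into the choice of $\epsilon$; both yield the factor $\alpha/2$ and the same $\Delta = O\bigl(\log\tfrac{nMC}{\E[\opt(X)]}\bigr)$.
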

\begin{proof}
    Let $\epsilon > 0$ be a value chosen later. Using \Cref{lem:mrf-mc-coupling}, we may construct an MRF $\cM$ with max degree at most $\Delta = 2 \log \frac{n M}{ \epsilon}$ such that for any $\omega \in \Omega$, we have
    \[
    \Pr_{Y \sim \cD_\cM} [Y = \omega] \geq (1 - \epsilon)\Pr_{X \sim \cD}[X = \omega].
    \]
    Using this, we will compute bounds on $\E_{\cD_\cM} \alg(Y)$ and $\E_{\cD_\cM} \opt(Y)$. To do so, we define for each $\omega \in \Omega$ the value $q_\omega := \Pr[Y = \omega] - (1-\epsilon) \Pr[X = \omega]$. Notice that $q_\omega \geq 0$ and $\sum_{\omega \in \Omega} q_\omega = \epsilon$.  Now, we have 
    \begin{align*}
        \E_{\cD_\cM} [\opt(Y) ]
        ~=~ \sum_{\omega \in \Omega} \Pr[Y = \omega] \cdot \opt(\omega)
        &=~ (1 - \epsilon)\sum_{\omega \in \Omega} \Pr[X = \omega] \cdot \opt(\omega) + \sum_{\omega \in \Omega} q_\omega \opt(\omega)\\
        &=~ (1 - \epsilon) \E_{\cD}[\opt(X)] + \sum_{\omega \in \Omega} q_\omega \opt(\omega).
    \end{align*}
    Additionally, 
    \begin{align*}
        \E_{\cD_\cM} [\alg(Y)] 
        ~=~ \sum_{\omega \in \Omega} \Pr[Y = \omega] \cdot \alg(\omega) ~&=~ (1 - \epsilon)\sum_{\omega \in \Omega} \Pr[X = \omega] \cdot \alg(\omega) + \sum_{\omega \in \Omega} q_\omega \alg(\omega)\\
        &\geq~ (1 - \epsilon) \alpha \E_{\cD}[\opt(X)] + \sum_{\omega \in \Omega} q_\omega \opt(\omega).
    \end{align*}
    Hence, noticing that $\sum_{\omega \in \Omega} q_\omega \opt(\omega) \
    \leq \epsilon C$, we have
    \begin{align*}
        \frac{\E_{\cD_\cM} [\alg(Y)]}{\E_{\cD_\cM} [\opt(Y)]} 
        &\geq \frac{(1 - \epsilon) \alpha \E_{\cD}[\opt(X)] + \sum_{\omega \in \Omega} q_\omega \opt(\omega)}{(1 - \epsilon) \E_{\cD}[\opt(X)] + \sum_{\omega \in \Omega} q_\omega \opt(\omega)}\\
        &\geq \frac{(1 - \epsilon) \alpha \E_{\cD}[\opt(X)] + \epsilon C}{(1 - \epsilon) \E_{\cD}[\opt(X)] + \epsilon C} \quad =\quad  \frac{\alpha + \eta}{1 + \eta},
    \end{align*}
    where $\eta = \frac{\epsilon C}{(1-\epsilon)\E_{\cD} [\opt(X)]}$. Choosing $\epsilon = \frac{\E_{\cD} [\opt(X)]}{2C} \leq \frac{1}{2}$ gives $\eta \leq 1$, and hence $\frac{\alpha + \eta}{1 + \eta} \geq \frac{\alpha}{2}$, as desired.
\end{proof}

\begin{corollary}\label{cor:LB_FL}
    For the problem of online facility location with $\Delta$-MRF arrivals, there is an instance for which no online algorithm is better than $\Omega(\Delta/\log \Delta)$-competitive.
\end{corollary}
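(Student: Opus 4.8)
The plan is to invoke \Cref{lem:general-min-hardness} with a suitable time-dependent Markov chain instance for online facility location, where the worst-case hardness comes from the classical $\Omega(\log n / \log\log n)$ lower bound of \cite{Fota2008,Meyerson2001}. First I would recall the structure of that lower bound: it builds a hierarchically separated tree (an HST) of depth $d = \Theta(\log n / \log\log n)$ on $n$ clients, and exhibits a \emph{random} arrival sequence of $d$ clients (one per level, drilling down a random root-to-leaf path) against which every online algorithm pays $\Omega(d) \cdot \E[\opt]$, while $\opt$ on any realization is bounded by some $C$ (roughly the cost of opening one facility near the final client, i.e.\ $O(1)$ after scaling). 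The key observation is that this random sequence is generated by a Markov chain: the client revealed at step $i$ depends only on the client revealed at step $i-1$ (the current node in the tree determines the distribution over its children), so it fits \Cref{def:mc} with $n$ replaced by $d$ and each $|\Omega_i| \le M$ where $M$ is the maximum branching factor of the HST, which is $\poly(n)$-bounded (in fact $M = n^{O(1/d)} \cdot d^{O(1)}$ type-scale, but certainly $M \le n$).

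The key steps, in order: (i) State the Markov-chain facility location instance and its parameters — number of arrivals $d = \Theta(\log n/\log\log n)$, alphabet bound $M \le n$, worst-case optimum bound $C = O(1)$, expected optimum $\E[\opt(X)] = \Theta(1)$, and competitive lower bound $\alpha = \Omega(d)$ against this distribution. (ii) Apply \Cref{lem:general-min-hardness} to convert this into a $\Delta$-MRF instance losing only a factor $2$ in the competitive ratio, with $\Delta = O\!\left(\log \frac{d \cdot M \cdot C}{\E[\opt(X)]}\right) = O(\log(d \cdot n)) = O(\log n)$. (iii) Conclude: we have a $\Delta$-MRF instance on which every online algorithm is $\Omega(d) = \Omega(\log n / \log\log n)$-competitive, and since $\Delta = \Theta(\log n)$, we have $\log n = \Theta(\Delta)$ and $\log\log n = \Theta(\log\Delta)$, so $\Omega(\log n / \log\log n) = \Omega(\Delta / \log\Delta)$. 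One must check that $d$ and $\Delta$ can be chosen consistently — pick $n$ large, let $d = \Theta(\log n / \log\log n)$ as dictated by the hardness construction, and the resulting $\Delta = \Theta(\log n)$ then gives the claimed tradeoff in terms of $\Delta$ (equivalently, for any target $\Delta$, set $n = e^{\Theta(\Delta)}$).

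The main obstacle is verifying that the known facility location lower bound is genuinely realized by a \emph{Markov} distribution over arrival sequences with the right parameters — specifically that each step's arrival depends only on the previous one (not the whole history), that the per-step alphabet size stays polynomially bounded, and that $\E[\opt]$ is bounded \emph{below} by a constant (so that the ratio $C / \E[\opt]$, and hence $\Delta$, is only logarithmic rather than larger). In the standard HST construction the arrival path is literally a random walk down the tree, so the Markov property is immediate; the delicate bookkeeping is matching the depth $d$, the branching factor, and the scaling of edge lengths so that $\opt$ is $\Theta(1)$ both in expectation and in the worst case while the online cost is $\Omega(d)\cdot\E[\opt]$. Once those parameters are pinned down, steps (ii)–(iii) are a direct substitution into \Cref{lem:general-min-hardness} and an arithmetic simplification of $\log n / \log\log n$ in terms of $\Delta = \Theta(\log n)$.
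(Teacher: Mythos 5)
Your proposal is correct and follows essentially the same route as the paper: both invoke \Cref{lem:general-min-hardness} on the $\Omega(\log n/\log\log n)$ hard instance of \cite{Fota2008}, observe that its arrival sequence is a time-dependent Markov chain (each arrival's distribution is determined by the previous one), bound $\max_i|\Omega_i|$ and $\max_\omega \opt(\omega)/\E[\opt(X)]$ by $\poly(n)$ to get $\Delta = O(\log n)$, and then rewrite $\log n/\log\log n$ as $\Delta/\log\Delta$. The only cosmetic difference is that you normalize so $C$ and $\E[\opt]$ are both $\Theta(1)$, whereas the paper uses the cruder bound $\opt(\omega)\le n\cdot\E[\opt(X)]$; both yield the same $\Delta = O(\log n)$.
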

\begin{proof}
    We note that the $\Omega\lp(\frac{\log n}{\log \log n}\rp)$ hard instance for online facility location given by \cite{Fota2008} satisfies the conditions of \Cref{lem:general-min-hardness}. First, the distribution over arrival sequences $X_1, \dots, X_n$ is a time-dependent Markov chain, as the distribution for arrival $X_i$ is determined completely by the $X_{i-1}$. We also note that $\max_i |\Omega_i| = O(n)$ and $\opt(\omega) \leq n \cdot \E [\opt(X)]$ (since for any sequence of demands $\omega$, one can always open a facility at every demand location). Therefore, \Cref{lem:general-min-hardness} tells us that there is an $\Omega\big(\frac{\log n}{\log \log n}\big)$ hard instance in the $\Delta$-MRF with $\Delta = O(\log (n^3)) = O(\log n)$. Sending $n \to \infty$, this gives a family of $\Omega\big(\frac{\Delta}{\log \Delta}\big)$ hard instances for $\Delta \to \infty$.
\end{proof}

\begin{corollary}\label{cor:LB_st_tree}
    For the problem of online Steiner tree with $\Delta$-MRF arrivals, there is an instance for which no online algorithm is better than $\Omega(\Delta)$-competitive.
\end{corollary}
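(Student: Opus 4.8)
The plan is to mirror the proof of Corollary~\ref{cor:LB_FL} essentially verbatim, feeding Lemma~\ref{lem:general-min-hardness} the classical worst-case online Steiner tree lower bound of \cite{MakoWaxm1991} in place of the facility-location bound of \cite{Fota2008}. The only structural difference --- and the reason the final bound improves to $\Omega(\Delta)$ --- is that online Steiner tree is $\Omega(\log n)$-hard with no extra $\log\log n$ loss, whereas online facility location is ``only'' $\Omega(\log n/\log\log n)$-hard.

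First I would recall the hard instance: \cite{MakoWaxm1991} produce, for each $n$, a (recursively built) graph $G$ on $\poly(n)$ vertices with a fixed root $r$ and a distribution $\cD$ over sequences of $n$ demands such that, by Yao's principle, every online algorithm satisfies $\E_{X\sim\cD}[\alg(X)] \ge \Omega(\log n)\cdot \E_{X\sim\cD}[\opt(X)]$. The key point to verify is that $\cD$ is a \emph{time-dependent Markov chain} in the sense of Definition~\ref{def:mc}: since the instance is built by a hierarchical recursion in which each successive demand lies inside, and is selected (uniformly) from a small set determined by, the sub-structure indexed by the previous demand, the conditional law of $X_i$ depends only on $X_{i-1}$ --- exactly the property invoked for Fotakis's facility-location instance in Corollary~\ref{cor:LB_FL}. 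The same structure gives $\max_i|\Omega_i| \le \poly(n) =: M$.

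Next I would bound $\opt$ uniformly. Steiner-tree cost is additive over edges and (by construction) $G$ has polynomially bounded aspect ratio, so $\opt(\omega) \le \poly(n)\cdot \E_{\cD}[\opt(X)] =: C$ for every $\omega$, i.e.\ $C/\E_{\cD}[\opt(X)] = \poly(n)$. Applying Lemma~\ref{lem:general-min-hardness} with these $M$ and $C$ then produces a $\Delta$-MRF distribution $\cD_\cM$ with $\Delta = O\!\big(\log \tfrac{nMC}{\E_{\cD}[\opt(X)]}\big) = O(\log n)$ on which every online algorithm is $\Omega(\log n)$-competitive. Since $\Delta = \Theta(\log n)$, this is an $\Omega(\Delta)$ lower bound; letting $n \to \infty$ (hence $\Delta \to \infty$) yields the claimed family of hard instances.

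I expect the only genuine obstacle to be the first step --- certifying that the standard online Steiner tree lower-bound distribution really can be taken to be a time-dependent Markov chain (i.e.\ that the adversary's per-level randomization is nothing more than ``descend into a uniformly random child''). This holds for the usual recursive presentation and parallels the facility-location case; if a given textbook version uses a history-dependent adversary, one would first pass to an equivalent Markovian distribution with the same $\Omega(\log n)$ hardness. Everything after that --- the $M$, $C$, and aspect-ratio bookkeeping, and the invocation of Lemma~\ref{lem:general-min-hardness} --- is routine.
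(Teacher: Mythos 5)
Your proposal matches the paper's proof: both apply Lemma~\ref{lem:general-min-hardness} to the diamond-graph instance of \cite{MakoWaxm1991}, check the time-dependent Markov chain property, and bound $\max_i|\Omega_i|$ and $\max_\omega \opt(\omega)/\E[\opt(X)]$ polynomially in $n$ so that $\Delta = O(\log n)$, yielding $\Omega(\Delta)$ as $n \to \infty$. The one step you flag as needing care---that the adversary's randomization can be made Markovian---is precisely what the paper handles explicitly, by fixing a depth-first-style arrival order in which the next twin pair is uniquely determined by the previous arrival, so your plan is complete once that detail is written out.
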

\begin{proof}
We note that a version of the  ``diamond graph'' $\Omega(\log n)$ hard instance for online Steiner tree, described in \cite{MakoWaxm1991, GargGuptLeonSank2008} satisfies the conditions of \Cref{lem:general-min-hardness}, although we must be careful about the arrival order to ensure the distribution satisfies the Markov chain property. We repeat this instance for completeness.

In the hardness instance, our underlying graph is generated inductively for $k \geq 0$ as follows: Define $G_0 = K_2$. For each $i \in [k]$ define $G_i$ as the graph which results from replacing each edge $uv \in E(G_{i-1})$ with two disjoint paths of length two, i.e., remove edge $uv$ and add new vertices $x,y$ along with new edges $ux, xv,uy,yv$. Each pair of vertices $x,y$ born from the same edge $uv \in E(G_{i-1})$ in this way are called \emph{twins}. For each $v \in V(G_k)$, the \emph{rank} of $v$ is the first $i$ in which $v \in V(G_i)$. Moreover, the \emph{parent} of each vertex $v \in V(G_k)$ of rank $i \geq 2$ is defined as the unique node $p$ of rank $i-1$ adjacent to $v$ in $G_i$. If $p$ is the parent of $v$, we call $v$ the \emph{child} of $p$. Notice that each vertex $v \in V(G_k)$ of rank $i \in [k-1]$ has two pairs of twin children.

\IGNORE{
 \begin{figure}
     \centering
     \includegraphics[width=0.5\linewidth]{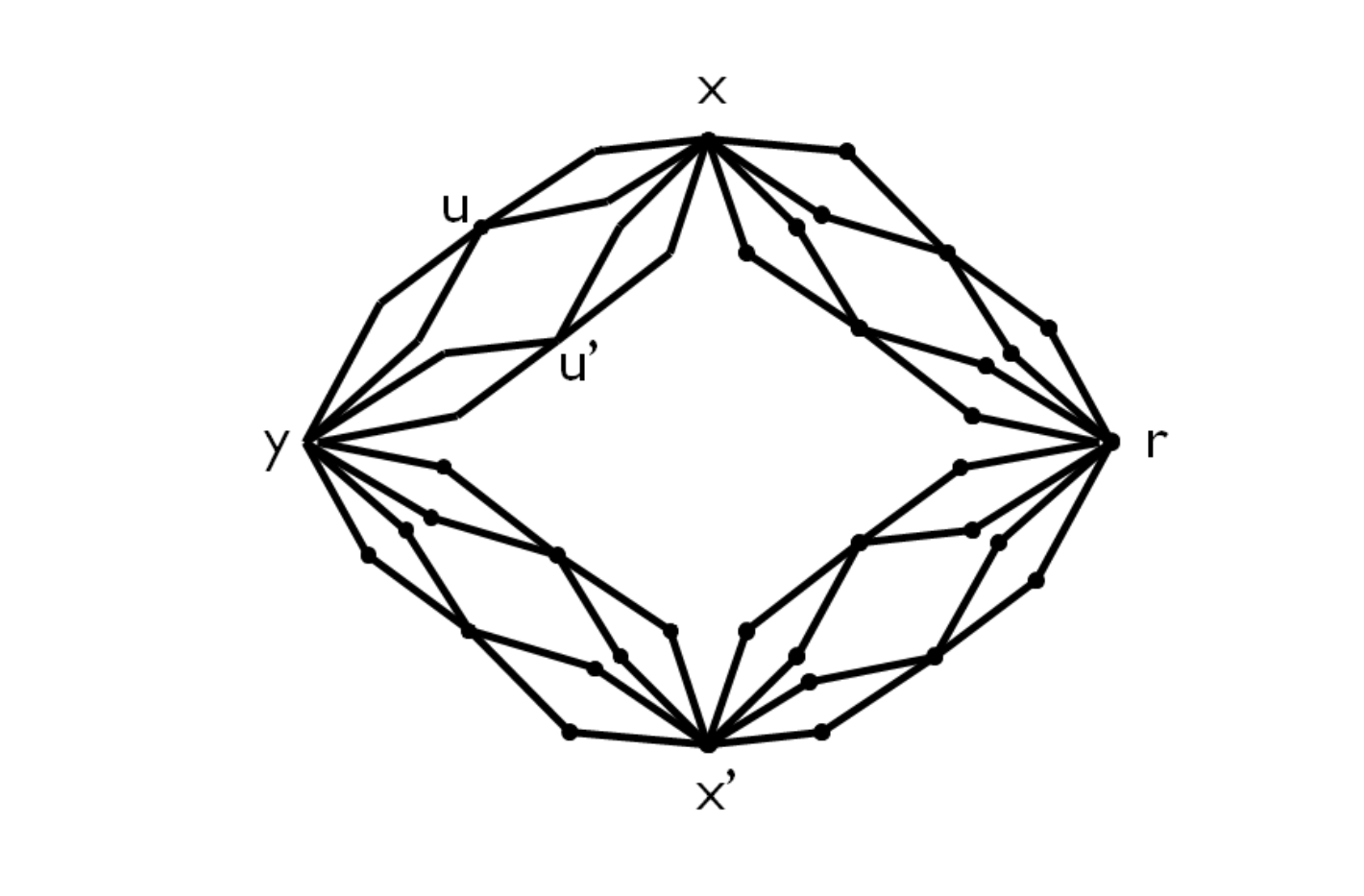}
     \caption{Caption}
     \label{fig:enter-label}
 \end{figure}
}

For our instance of online Steiner tree, our underlying graph is $G_k$. We set one vertex of rank $0$ as the root $r$, and the other vertex of rank $0$ as the first arrival $X_1$. The second arrival $X_2$ is chosen uniformly from one of the two twin vertices of rank $1$. For the remainder of arrivals, the adversary does the following: Let $\{x,y\}$ be the pair of twin vertices closest to $r$ in $G_k$ for which 
\vspace{-0.15cm}
\begin{enumerate}
    \item neither $x$ nor $y$ has arrived yet as a demand point, and
    \item the parent of $x$ and $y$ has already arrived as a demand point.
\end{enumerate}
\vspace{-0.15cm}
Then, pick one of either $x$ or $y$ uniformly at random to arrive next. If no such pair $(x,y)$ exists, which occurs after $2^k + 2 =: n$ arrivals, then halt.

From \cite{MakoWaxm1991}, we know that no online algorithm can obtain competitive ratio better than $\Omega(k) = \Omega(\log n)$ on this instance. Moreover, we crucially note such a distribution can be constructed as a time-dependent Markov chain due to the ``depth-first traversal'' of arrivals (i.e. always picking the arrival $X_i$ from the twin pair $(x,y)$ closest to $r$, which ensures $(x,y)$ can be uniquely determined from the prior arrival $X_{i-1}$). Additionally, we have $\max_i |\Omega_i| \leq |V(G_k)| = O(4^k) = O(n^2)$, and $\frac{\max_{\omega \in \Omega} \opt(\omega)}{\E \opt(X)} \leq n$. Therefore, \Cref{lem:general-min-hardness} implies that there exists a $\Delta$-MRF hardness of $\Omega(\log n)$ for $\Delta = O(\log(n \cdot n^2 \cdot n)) = O(\log n)$. Hence, taking $n \to \infty$ gives a family of $\Omega(\Delta)$-hard instances.
\end{proof}

\clearpage

\appendix
\section{Appendix}

\subsection{Hardness for $p$-Sample Maximization Problems}\label{apn:hardness}

In this section, we show that our approach for minimization problems via reduction to the $p$-sample independent model is not interesting for  maximization problems. 
In the minimization setting, our reduction relies on the existence of an $O(\log(1/p))$-competitive algorithm for $p$-sample independent instances. 
However, below we will construct a prophet inequality (single-item allocation problem) instance where any algorithm restricted to a $p$-fraction of independently sampled items suffers an $\Omega(1/p)$ approximation loss. This implies that there does not exist an $O(\log(1/p))$-competitive algorithm for this specific problem. Given that there is known $O(\Delta)$-competitive algorithm~\cite{VPS24} for prophet inequality under MRFs, we conclude that this reduction framework is unlikely to yield meaningful algorithms for maximization problems under MRFs.

\begin{restatable}{lemma}{WhyMaxBreaks}\label{lem:why_max_breaks}
The \( p \)-sample prophet inequalities, where a \( p \)-fraction of the instance is independently and uniformly sampled and revealed to the online algorithm upfront, is \( \Omega \left( \frac{1}{p} \right) \)-hard to approximate.
\end{restatable}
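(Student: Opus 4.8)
The plan is a Yao-type lower bound: I will exhibit a distribution over single-item $p$-sample instances on which every \emph{deterministic} online algorithm obtains only an $O(p)$ fraction of $\E[\opt]$. Assume $p$ is small (the claim is trivial for $p=\Omega(1)$) and set $m:=\lceil 1/p\rceil$. Draw a chain length $K\in\{0,1,\dots,m\}$ with $\Pr[K=\kappa]\propto 2^{-\kappa}$, and let the value set be $V=\{2^0,2^1,\dots,2^K\}$ (an instance of variable size, so the algorithm cannot infer $K$ from $|V|$). Once the sample $S$ is drawn — each element of $V$ included independently with probability $p$ — the surviving elements $R=V\setminus S$ are presented in increasing order of value. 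Since $2^K$ exceeds the sum of all smaller elements, $\opt(V)=2^K$, hence $\E[\opt]=\E[2^K]=\sum_{\kappa=0}^m\Pr[K=\kappa]\,2^\kappa=\Theta(m)=\Theta(1/p)$; in particular $\E[\opt]\ge\tfrac1{2p}$. The intuition for hardness is that, to an algorithm inspecting the element $2^\kappa$, it is (roughly) equally plausible that $2^\kappa$ is the top of the chain — in which case it should stop now — or that a vastly larger element is still to come; and the sample barely breaks this symmetry, since the largest power of two it ever reveals sits only a $\Theta(p)$-fraction of the way up in a multiplicative sense.

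Let $J^\star:=\max\bigl(\{0\}\cup\{\ell:2^\ell\in S\}\bigr)$ denote the top power of two appearing in the sample. The key step is to show that the value of the \emph{optimal} stopping rule — which may use the whole sample and history — is $O(2^{J^\star})$, hence $\E[\alg]\le O\bigl(\E[2^{J^\star}]\bigr)$. First I would observe that after the algorithm has seen $S$ together with all power-of-two elements of $R$ up to and including $2^\kappa$ (with $\kappa\ge J^\star$), its posterior on the chain length is
\[
\Pr[K=\kappa'\mid \text{view}]\ \propto\ \Bigl(\tfrac{1-p}{2}\Bigr)^{\kappa'},\qquad \kappa'\in\{\kappa,\kappa+1,\dots,m\},
\]
which depends on the view \emph{only through $\kappa$} — the rest of the observed sampling pattern cancels in the likelihood ratio. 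In particular $\Pr[K=\kappa\mid\text{view}]\ge\tfrac{1+p}{2}>\tfrac12$. Letting $W_\kappa$ be the optimal continuation value from this state and using that passing on $2^\kappa$ yields payoff $0$ whenever $K=\kappa$, and otherwise (since no level above $J^\star$ is in $S$) leads to the state at $2^{\kappa+1}$, we get $W_\kappa\le\max\{2^\kappa,\ \tfrac{1-p}{2}W_{\kappa+1}\}$ with $W_m=2^m$, which solves to $W_\kappa\le 2^\kappa$ for all $\kappa$. A one-step version of the same estimate bounds the value of the states at levels below $J^\star$ — where the unavailable element $2^{J^\star}\in S$ is exactly the ``trap'' luring the algorithm into passing on smaller elements — by $O(2^{J^\star})$ as well. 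Taking expectation over $S$ gives $\E[\alg]\le O\bigl(\E[2^{J^\star}]\bigr)$.

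It then remains to compute $\E[2^{J^\star}]$. Conditioning on $K=\kappa$, each of $2^0,\dots,2^\kappa$ lies in $S$ independently with probability $p$, so
\[
\E[2^{J^\star}\mid K=\kappa]\ \le\ 2^\kappa+\sum_{\ell=0}^{\kappa}2^\ell\cdot p(1-p)^{\kappa-\ell}\ \le\ 2^\kappa+\tfrac{2p}{1-p}\,2^\kappa,
\]
the geometric sum being dominated by its $\ell=\kappa$ term. Averaging over $K$, $\E[2^{J^\star}]\le\bigl(1+\tfrac{2p}{1-p}\bigr)\E[2^K]=O(1)+O(p)\,\E[\opt]=O(p)\,\E[\opt]$, the additive $O(1)$ being absorbed because $\E[\opt]\ge\tfrac1{2p}$. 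Combining the two steps yields $\E[\alg]=O(p)\cdot\E[\opt]$, so by Yao's principle no randomized algorithm is better than $\Omega(1/p)$-competitive. (Since the maximum value is unique and lies in $R$ with probability $1-p$, the bound holds equally against the benchmark $\E[\opt(R)]$.)

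The main obstacle is the second step: making rigorous that the sample carries essentially no information about the chain length beyond $J^\star$ — i.e., the posterior identity above and, more delicately, its propagation through the optimal-stopping recursion, including the boundary behavior near level $m$ and the bookkeeping for the states below $J^\star$. Everything else reduces to elementary geometric-series estimates.
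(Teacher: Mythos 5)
Your construction is, at heart, the same as the paper's: a multiplicatively growing chain whose hindsight optimum is its last nonzero element, an indifference/backward-induction argument showing the online algorithm's value is controlled by the largest element appearing in the sample, and a computation showing that largest sampled element is only a $p$-fraction of $\E[\opt]$. The paper takes base $M \gg n$ with a fixed length $n$ (padding with zeros), which makes the stopping recursion collapse to exact indifference ($\E[\alg_t]=M^{t-1}$) and renders the whole analysis three lines; your base-$2$, random-length variant forces the posterior computation and the $\tfrac{1-p}{2}$ recursion that you correctly identify as the delicate part. Both routes work, but two points in your write-up need repair.

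First, the final estimate as written is vacuous: you bound $\E[2^{J^\star}\mid K=\kappa]\le 2^\kappa+\tfrac{2p}{1-p}2^\kappa$, and then claim $\bigl(1+\tfrac{2p}{1-p}\bigr)\E[2^K]=O(1)+O(p)\E[\opt]$. The leading term is $\E[2^K]=\E[\opt]=\Theta(1/p)$, not $O(1)$, so this chain of inequalities only gives $\E[2^{J^\star}]=\Theta(\E[\opt])$ and the lower bound evaporates. The fix is that the additive term accounting for an empty sample should be $2^0=1$, not $2^\kappa$ (your own definition $J^\star:=\max(\{0\}\cup\{\ell:2^\ell\in S\})$ defaults to $0$); with that correction $\E[2^{J^\star}]\le 1+\tfrac{2p}{1-p}\E[2^K]=O(p)\E[\opt]$ as intended. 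Second, your variable-size instance leaks $K$ through $|V|$: if the algorithm knows the number of arrivals (which many formulations, including benchmarks against $\opt(V)$ with known $n$, permit), it reads off $K=|V|-1$, waits for $2^K$, and wins with probability $1-p$. You assert the algorithm ``cannot infer $K$ from $|V|$,'' but that is an assumption about the model, not a feature of the construction. The paper sidesteps this by fixing $n$ and padding the chain with zeros; if you do the same, you must additionally argue that sampled zeros (which reveal an upper bound on $K$) do not help the algorithm --- this goes through, since the posterior on $K$ given the sample is just the prior restricted to an interval and the backward induction survives, but it is a step your proposal currently omits.
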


\begin{proof}[Proof of Lemma~\ref{lem:why_max_breaks}]

    Consider the following instance: for a fixed  $M \gg n$, let \( X_1, X_2, \dots, X_n \) be a sequence of random variables generated as $ X_1 = 1$ and, for \( i \geq 2 \), the value \( X_i \) is
    \[
    X_i =
    \begin{cases}
        M \cdot X_{i-1}, & \text{with probability } \frac{1}{M}, \\
        0, & \text{otherwise}.
    \end{cases}
    \]

\noindent    This defines a sequence  $(1, M, M^2, \dots, M^{k-1}, 0, 0, \dots, 0)$ with optimal stopping point $X_{k}$ and $\opt=M^{k-1}$.

First, consider this instance with no sample. Note that we have
    $\Pr[X_{t + 1} = M^{t} \mid X_t = M^{t - 1}] = \frac{1}{M}.$
    Let \( \alg_i \) denote the value obtained by the algorithm starting at \( X_i \), given that the value of \( X_i \) is \( M^{i - 1} \), then $\E[\alg_{n}] = M^{n - 1}$.
    
    Suppose at time \( t \), the algorithm is at \( X_t \) with value \( M^{t - 1} \), and it is given that
    $
    \E[\alg_{t + 1}] = M^t
    $.
    If the algorithm stops at \( X_t \), the value obtained is \( M^{t - 1} \). If it proceeds to \( X_{t + 1} \), then:
    \[
    \E[\alg_{X_t \text{ proceeds}}] = \frac{1}{M} \cdot \E[\alg_{t + 1}] + \left(1 - \frac{1}{M} \right) \cdot 0
    = \frac{1}{M} \cdot M^{t} + \left(1 - \frac{1}{M} \right) \cdot 0
    = M^{t - 1}.
    \]
    Hence, regardless of the algorithm's decision, we have
    $\E[\alg_{t}] = M^{t - 1}$.   
    By induction, we have 
    \[
    \E[\alg_t] = M^{t - 1}, \quad \forall 1 \leq t \leq n.
    \]
    In particular, for \( t = 1 \), we have
    $
    \E[\alg_1] = 1,
    $
    which is equivalent to simply taking the first value.

    Now, we return to consider the setting where a $p$-fraction of values $X_i$ are revealed up front as a sample. Suppose $k^\prime$ is the index of the the largest non-zero sample $X_{k^{\prime}} = M^{k^\prime - 1}$. It is not hard to see that the optimal online algorithm should wait at least until item $X_{k^\prime}$ before stopping. Moreover, since each sample revealed after $X_{k^\prime}$ is $0$, the conditional distribution of values $X_{k^\prime + 1}, \dots, X_n$ given the sample is only worse than it is in the setting of $\alg_{k^\prime}$ above.
    Hence, the algorithm cannot obtain expected value better than $\alg_{k^\prime}$, so $\E[\alg \mid k'] \leq \alg_{k^\prime} = M^{k^{\prime} - 1}$. Therefore, we have
    \begin{align*}
    \E[\alg] 
     \quad = \quad  \sum_{i = 1}^{n}{\Pr[k^{\prime} = i]} \cdot {M^{i - 1}}
     \quad \leq \quad \sum_{i = 1}^{n}
   \frac{p}{M^{i-1}} \cdot M^{i-1}
     \quad = pn.
   \nonumber
     \end{align*}

    Since, $\opt = \sum_{i=1}^n \left(\frac{M-1}{M}\right)^{n-i} = n \cdot (1 - o(1))$ for $M \gg n$, this gives a lower bound on the competitive ratio of $\Omega \left( \frac{1}{p} \right)$.
\end{proof}

Since there is an \( \Omega (\frac{1}{p})\) lower bound on the competitive ratio for prophet inequalities with \( p \)-sample independent instances, the reduction used for cost minimization problems, which requires an \( O\left(\log \left( \frac{1}{p} \right) \right) \)-competitive algorithm for \( p \)-sample independent instances, is not useful for the maximization problem.

\subsection{Monotone Algorithms in the $p$-Sample Independent Model} \label{sec:monotoneAlgos}

We show why existing $p$-sample independent algorithms (or their variants) for Steiner Tree and Facility Location are monotone. Recall, to prove that an algorithm is monotone, we need to consider a sample set $S$ that is formed by first independently including each value in $V$ independently with probability $p$ and then further augmenting it by moving some real values into the sample set. In other words, the distribution on $S$ stochastically dominates the product distribution containing each value independently with probability $p$.

\subsubsection{Steiner Tree}
We have $n$ demands on a metric space $(V,E,d)$ with a fixed root $r\in V$. Each demand $x_i\in V$ has to be connected to the root by purchasing a set of edges, where edge $e \in E$ has a known cost $c(e)$. The goal is to minimize the total cost of purchased edges.

We observe that the $O(\log (1/p)$-competitive greedy algorithm of \cite{GargGuptLeonSank2008,ArgyFrieGuptSeil2022} is monotone.

\begin{theorem} \label{thm:SteinerTreeMonot}
    Algorithm~\ref{algo:StTree-gen} is a monotone $O(\log(1/p))$-competitive to the optimal offline $\opt(V)$. 
\end{theorem}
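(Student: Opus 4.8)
The plan is to split the cost of Algorithm~\ref{algo:StTree-gen} into two pieces --- the cost $c(T_S)$ of the offline ``seed'' Steiner tree $T_S$ it builds on the revealed sample $S\cup\{r\}$, and the total cost of the online greedy connections of the demands in $R:=V\setminus S$ --- and to bound each piece in a way that is oblivious to how $S$ was produced. Writing $d(x,U)$ for the shortest-path cost (under edge costs $c$) from $x$ to a vertex set $U$, the two facts I would establish are: (i) since the algorithm uses an $O(1)$-approximate offline Steiner tree on $S\cup\{r\}$, we have $c(T_S)=O(1)\cdot\opt(S)\le O(1)\cdot\opt(V)$, the last inequality being monotonicity of Steiner cost together with $S\subseteq V$; and (ii) when $x\in R$ arrives the algorithm pays $c_x=d(x,C_x)$, where $C_x$ is the vertex set of the current tree, and because that tree always contains $S\cup\{r\}$ and every earlier online demand,
\[
c_x \;\le\; d\Big(x,\ (S\cup\{r\})\cup\{x'\in R : x'\text{ arrives before }x\}\Big).
\]
The $O(\log(1/p))$-competitiveness in the ordinary $p$-sample model is then exactly the standard analysis of this greedy rule from \cite{GargGuptLeonSank2008,ArgyFrieGuptSeil2022}: when each demand lies in $S$ independently with probability $p$, the expectation of $\sum_{x\in R} d\big(x,(S\cup\{r\})\cup\{\text{earlier online demands}\}\big)$ is $O(\log(1/p))\cdot\opt(V)$, which with (i) gives the claim. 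So the real work is upgrading this to monotonicity.

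For monotonicity I would write the augmented sample as $S=S'\cup A$ with $A=S\setminus S'$, where $S'$ is the honest part (each demand independently with probability $p$) and $A$ is moved in adversarially, so that $R=V\setminus S\subseteq V\setminus S'$. Fix any realization of $S'$, any choice of $A$, and any adversarial arrival order of $R$; extend it to an arrival order of $V\setminus S'$ by inserting the demands of $A$ arbitrarily. By (i), the seed-tree cost on sample $S$ is still $O(1)\cdot\opt(V)$, a bound that used nothing about the distribution of $S$. For the online cost, the key step is a purely deterministic inclusion of reference sets: for every $x\in R$, the set $(S\cup\{r\})\cup\{x'\in R : x'\text{ before }x\}$ contains $(S'\cup\{r\})\cup\{x'\in V\setminus S' : x'\text{ before }x\}$, since a point of the latter is either in $S'\cup\{r\}\subseteq S\cup\{r\}$, or is an earlier online demand of the honest run that is either still in $R$ or was moved into $A\subseteq S$. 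Hence $c_x^{(S)}$ is at most the corresponding per-demand quantity of the honest run, and summing over $x\in R$ (then adding the nonnegative terms indexed by $x\in A$),
\[
\sum_{x\in R} c_x^{(S)} \;\le\; \sum_{x\in V\setminus S'} d\Big(x,\ (S'\cup\{r\})\cup\{x'\in V\setminus S' : x'\text{ before }x\}\Big).
\]
This inequality holds pointwise in $S'$, $A$, and the arrival order; taking the expectation over the honest sample $S'$ and invoking the standard bound above makes the right-hand side $O(\log(1/p))\cdot\opt(V)$. Adding the seed-tree cost shows the algorithm stays $O(\log(1/p))$-competitive against $\opt(V)$ under augmentation, i.e.\ it is monotone.

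The hard part is making the online bound survive augmentation: the naive ``a bigger sample can only help'' claim is false at the level of realized edges, because an $O(1)$-approximate seed tree on a larger demand set need not contain the tree on the smaller one, and greedy shortest paths can route through completely different Steiner vertices. The fix is to never argue about the realized trees, only about which \emph{demand points (and the root) are already connected} when $x$ arrives --- that set \emph{is} monotone under augmentation --- and to upper-bound each greedy connection cost by the distance to it, which is exactly the quantity the $p$-sample analysis already controls. A secondary point to get right is that the adversary may choose $A$ (and the order of $R$) after seeing $S'$; since the comparison is pointwise and only the random part $S'$ is averaged over at the end, this causes no trouble. For the monotone Facility Location claim I would then run the identical template (offline facility/connection ``seed'' solution on the sample, greedy connection of online clients to the nearest already-open facility), replacing ``tree vertices'' by ``open facilities'' in the reference-set inclusion.
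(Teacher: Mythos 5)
Your proposal is correct and follows essentially the same route as the paper: the seed tree on the sample costs $O(1)\cdot\opt(V)$ for \emph{any} $S\subseteq V$, each online connection cost is pointwise dominated by the corresponding quantity for the honest sample $S'\subseteq S$, and the $O(\log(1/p))$ bound for the honest sample is imported from \cite{GargGuptLeonSank2008,ArgyFrieGuptSeil2022}. The only difference is that Algorithm~\ref{algo:StTree-gen} connects each arrival to the closest point of $S$ alone (not of the current tree), so the monotonicity step collapses to the one-line observation $d(x,S)\le d(x,S')$ together with $R\subseteq V\setminus S'$; your more careful reference-set inclusion argument subsumes this and would also cover the variant that connects to earlier online demands.
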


\begin{algorithm}[H]
\KwIn{$p$-sample independent}
\tcp{Phase 1}
Receive the sample set $S$ and obtain an MST on $S$.\\
\tcp{Phase 2}
For the set of real arrivals $R$, greedily connect the next arrival $x_i \in R$ to the closest point in $S$.
    \caption{Monotone Algorithm for Online Steiner Tree}\label{algo:StTree-gen}
\end{algorithm}
 
\begin{proof}
The algorithm is monotone since augmenting the sample set can only make the connection costs smaller for the real arrivals. Hence,  the proof follows by \cite{ArgyFrieGuptSeil2022}.
\end{proof}

\subsubsection{Facility Location}
We have $n$ demands on a metric space that will arrive one by one. Each demand $x_j\in C$ has to either connect to an open facility or pay the connection cost. Denote by $F$ the set of open facilities, $f$ the (uniform) connection cost and $d(x_j, F)=\min_{c\in F} d(c,x_j)$ the distance of demand $x_j$ to the closest open facility. We want to minimize
\[ \E\Big[ |F|\cdot f + \sum_{j\in [n]} d(F, x_j) \Big].\]

We give a monotone $O(\log 1/p)$-competitive algorithm for this problem.
\begin{theorem} \label{thm:facilLocMonot}
    Algorithm~\ref{algo:FL-gen} is a monotone $O(\log(1/p))$-competitive to the optimal offline $\opt(V)$. 
\end{theorem}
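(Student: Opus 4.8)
The plan is to bound separately the cost incurred by the two phases of Algorithm~\ref{algo:FL-gen} --- the facilities opened in Phase~1 on the sample $S$, and the Meyerson-style cost paid by the real arrivals in Phase~2 --- and then to deduce monotonicity by observing that enlarging the sample set only helps both phases.

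For the competitive ratio, Phase~1 opens a constant-factor approximate facility-location solution for the demand set $S$; since $S \subseteq V$ we have $\opt(S) \le \opt(V)$, so Phase~1 costs $O(\opt(V))$. For Phase~2, when a real arrival $x_j$ appears it opens a facility at $x_j$ with probability $\min\{1, d(x_j,F)/f\}$ and connects $x_j$ to the nearest open facility, so the expected cost it contributes is $O(d(x_j, F_j))$, where $F_j$ is the facility set present when $x_j$ arrives. Charging these distances to the optimal clustering of $V$ and using that every demand independently lies in $S$ with probability $p$ --- which, within each optimal cluster, guarantees that a low-radius witness has appeared (either in the Phase-1 sample or among the first few cluster arrivals) after $O(\log(1/p))$ geometrically growing radius scales --- bounds the total Phase-2 cost by $O(\log(1/p)) \cdot \opt(V)$. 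This is exactly the analysis of \cite{GargGuptLeonSank2008, ArgyFrieGuptSeil2022}, which I would invoke essentially verbatim; combined with the Phase-1 bound it yields the claimed $O(\log(1/p))$-competitiveness.

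For monotonicity, suppose the sample is $\tilde S \supseteq S'$, where $S'$ contains each $v_i$ independently with probability $p$ and the augmentation $\tilde S \setminus S'$ is chosen adversarially, with real set $V \setminus \tilde S$ arriving online. Phase~1 on $\tilde S$ still costs $O(\opt(V))$, since the only property used is $\tilde S \subseteq V$. For Phase~2, the key points are that (i) the real arrivals $V\setminus\tilde S$ form a subset of $V\setminus S'$, so fewer demands must be served, and the seeded facility set is produced from a demand set $\tilde S$ containing a superset of the sampled demands; and (ii) the Phase-2 charging argument needs only the \emph{marginal} fact that each demand lies in the sample with probability at least $p$, which is preserved since $\{v_i \in \tilde S\} \supseteq \{v_i \in S'\}$ and $\Pr[v_i \in S'] = p$. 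Hence the same charging of Phase-2 distances to the optimal clustering of $V$ goes through unchanged, bounding the Phase-2 cost by $O(\log(1/p))\opt(V)$, so the algorithm remains $O(\log(1/p))$-competitive under augmentation, i.e., it is monotone.

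I expect the main obstacle to be making the monotonicity argument fully rigorous: one must verify that the per-demand distance bounds in the Phase-2 analysis truly appeal only to marginal sampling probabilities and not to events conditioned on ``$x_j$ is a real arrival'' (which the adversarial augmentation could correlate with other demands being sampled), and that the not-literally-nested Phase-1 solutions on $\tilde S$ versus $S'$ are handled through the cost inequality $\opt(\tilde S) \le \opt(V)$ rather than through set containment. If some delicate conditioning does surface, the fallback is a coupling in which the run on $\tilde S$ dominates the run on $S'$ arrival-by-arrival in terms of the available facility set, so that its total cost is stochastically dominated by the $O(\log(1/p))\opt(V)$ bound for the $p$-sample run.
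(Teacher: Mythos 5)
Your overall plan matches the paper's proof: bound Phase~1 by $O(1)\cdot\opt(S)\le O(1)\cdot\opt(V)$, analyze Phase~2 against the optimal clustering of $V$ using $L=\Theta(\log(1/p))$ geometrically growing rings around each optimal facility, and argue monotonicity by noting that augmenting the sample only shrinks the real arrival set and enlarges the seeded facility set. The one substantive point to correct is your assertion that the Phase-2 charging ``needs only the \emph{marginal} fact that each demand lies in the sample with probability at least $p$.'' That is not sufficient, and it is exactly the step where the proof could fail if taken literally. The paper's argument for the outermost ring $C_k^{out}$ (the $\le e^{-L}=O(p)$ fraction of each cluster lying far from its optimal facility) injectively matches each real arrival in $C_k^{out}$ to a \emph{sampled} point of the inner rings $C_k^{in}$, charges its cost to that sample point's assignment cost in the Phase-1 solution, and pays $f$ for each unmatched arrival. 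Bounding the expected number of unmatched arrivals by $O(1)$ requires a Chernoff bound showing $|S\cap C_k^{in}|$ concentrates above $|C_k^{out}|$; this needs $|S\cap C_k^{in}|$ to stochastically dominate $\mathrm{Bin}(|C_k^{in}|,p)$, i.e., the set containment $\tilde S\supseteq S'$ with $S'$ an independent $p$-sample. With only marginals $\ge p$, an adversarial joint distribution (e.g., all of $C_k^{in}$ sampled with probability $p$, none otherwise) leaves $\Omega(1)$ probability that no inner sample points exist, and the unmatched arrivals then cost up to $|C_k^{out}|\cdot f = \Omega(p|C_k|f)$, which is not bounded by $O(\log(1/p))\cdot\opt$ in general.

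Since you do state the containment $\tilde S\supseteq S'$ elsewhere, the fix is to route the argument through stochastic dominance of $|S\cap C_k^{in}|$ over the binomial rather than through marginals. Relatedly, the outer-ring matching and Chernoff step is the part of the argument that is genuinely new relative to the cited prior analyses, so ``invoking \cite{GargGuptLeonSank2008,ArgyFrieGuptSeil2022} essentially verbatim'' does not quite suffice; the inner rings do follow from Meyerson's analysis, but the outer ring needs the explicit matching argument.
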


\begin{algorithm}[H]
\KwIn{$p$-sample independent, cost $f$}
\tcp{Phase 1}
Receive the sample set $S$ and run any $O(1)$-competitive offline algorithm on $S$, denote by $\hat F$ the facilities opened.\\
\tcp{Phase 2}
For the set of real arrivals $R$, use Meyerson's algorithm \cite{Meyerson2001}: \newline 
Open a new facility with probability $\min\{ d(x_i,F_{i-1})/f , 1 \}$, where $F_i$ is the set of open facilities (including  $\hat F$) after the $i$-th arrival.
    \caption{Monotone Algorithm for  Online Facility Location}\label{algo:FL-gen}
\end{algorithm}

\begin{proof}

    In Phase~1 of the algorithm, we pay cost at most $O(1)\cdot \E \opt(S)$, so we only need to bound the cost incurred in Phase~2. For each $x_i \in R$, we define $\cost(x_i) := f \cdot \ind{\text{facility opened at } x_i} + d(x_i, F_{i})$, where $F_i$ denotes the facilities open after the $i$th real arrival.

    Consider the optimal clusters $C_1, \dots, C_m$ in the offline optimal solution $\opt(V)$, and let $c_1, \dots, c_m$ be the respective optimal facility for each cluster. For each $k \in [m]$, let 
    \[r_k := \frac{1}{|C_k|} \sum_{v \in C_k} d(v, c_k)
    \]
    denote the average assignment distance of a facility in $C_k$. We partition the $k$th cluster into $L = \Theta(\log(1/p))$ \emph{rings} $C_k^0, C_k^1, \dots, C_k^L$, along with an outer component $C_k^{out}$, given by 
    \begin{align*}
        C_k^0 &:= \{v \in C_k : d(v, c_i) \leq r_k\},\\
        C_k^\ell &:= \{v \in C_k : e^{\ell -1}r_k \leq d(v, c_i) \leq e^{\ell}r_k\},  \qquad \forall \ell \in [L],\\
        C_k^{out} &:= \{v \in C_k : e^Lr_k \leq d(v, c_i)\}.
    \end{align*}
    For convenience, we will also use $C_k^{in} := C_k \setminus C_k^{out} = C_k^0 \cup \dots \cup C_k^L$.

    \begin{claim}\label{clm:mrf-fl-inner}
        For every cluster $k \in [m]$, we have
        \begin{align*}
            \E \sum_{x_i \in R \cap C_k^0} \cost(x_i) &\leq 2f + 2\sum_{x \in C_k^0} d(c_k, x) + 2\sum_{x \in C_k} d(c_k, x),\\
            \E \sum_{x_i \in R \cap C_k^\ell} \cost(x_i) &\leq 2f + 6\sum_{x \in C_k^\ell} d(c_k, x), && \forall \ell \in [L].
        \end{align*}
    \end{claim}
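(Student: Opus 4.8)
The claim is really a statement about Phase~2 alone. Since the Phase-1 facilities $\hat F$ are never removed and only enlarge the facility set, every distance $d(x_i, F_{i-1})$ encountered in Phase~2 is only smaller than it would be with $\hat F$ dropped; hence it suffices to prove the two inequalities treating $\hat F$ as an \emph{arbitrary} fixed initial facility set (this also shows the bounds hold for every realization of which points are real and of the adversarial interleaving with other clusters, in expectation over Meyerson's coins only). Fix a cluster $C_k$, its optimal center $c_k$, and the ring $G$ in question ($G=C_k^0$ or $G=C_k^\ell$). Enumerate the real arrivals falling in $G$ as $y_1,\dots,y_t$ in arrival order, let $F$ denote the facility set when $y_j$ is processed, and set $d_j:=d(y_j,F)$. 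Meyerson opens at $y_j$ with probability $\min\{d_j/f,1\}$, so a one-line computation gives $\E[\cost(y_j)\mid\text{history}] = \min\{d_j,f\} + d_j(1-\min\{d_j/f,1\}) \le 2\min\{d_j,f\}$, and on the event that $y_j$ does not open (which forces $d_j\le f$) one simply has $\cost(y_j)=d_j$.

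The plan is to split the arrivals at $\tau$, the index of the first $y_j$ that opens a facility inside the ring (set $\tau=t+1$ if none does), and write $\sum_j \cost(y_j) = \sum_{j<\tau}\cost(y_j) + \cost(y_\tau)\ind{\tau\le t} + \sum_{j>\tau}\cost(y_j)$. The middle term is $f\cdot\ind{\tau\le t}$, of expectation at most $f$. For the tail $j>\tau$, the set $F$ already contains a facility located exactly at $y_\tau$ (which arrived before $y_j$ and at which Meyerson opened), so $d_j\le d(y_j,y_\tau)\le d(y_j,c_k)+d(c_k,y_\tau)$ and hence $\E[\cost(y_j)\ind{j>\tau}\mid\text{history}] \le 2\big(d(y_j,c_k)+d(c_k,y_\tau)\big)\ind{j>\tau}$. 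Summing over the at most $|G|$ indices $j>\tau$: for $G=C_k^0$ one has $d(c_k,y_\tau)\le r_k$ and $d(y_j,c_k)\le r_k$, giving tail $\le 2\sum_{x\in C_k^0}d(c_k,x) + 2|C_k^0| r_k \le 2\sum_{x\in C_k^0}d(c_k,x) + 2\sum_{x\in C_k}d(c_k,x)$, using $|C_k^0| r_k\le |C_k| r_k=\sum_{x\in C_k}d(c_k,x)$; for $G=C_k^\ell$ every $x$ has $d(c_k,x)$ within a factor $e$ of $e^\ell r_k$, so $d(c_k,y_\tau)\le e\cdot d(c_k,y_j)$ and the tail is $O(1)\cdot\sum_{x\in C_k^\ell}d(c_k,x)$, which is the $6\sum_{x\in C_k^\ell}d(c_k,x)$ term (with the stated absolute constant; in the regime $e^\ell r_k\ge f$ one instead uses $\cost(y_j)\le f$ and $|C_k^\ell| f\le e\sum_{x\in C_k^\ell}d(c_k,x)$).

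The main work — and the step where the randomization of Meyerson's algorithm is essential (without it, adversarial order costs $\Omega(\log n)$) — is bounding the ``warm-up'' cost $\E[\sum_{j<\tau}\cost(y_j)]$ by $f$. For $j<\tau$ the arrival does not open, so $\cost(y_j)=d_j=f\,q_j$ where $q_j:=\min\{d_j/f,1\}\le 1$ is $y_j$'s opening probability (the term vanishes when $q_j=1$), and $\{j<\tau\}$ is exactly the event ``$y_1,\dots,y_{j-1}$ all failed to open''. Conditioning on the history just before $y_j$'s coin flip, $\E[\cost(y_j)\ind{j<\tau}] = f\,\E\big[\ind{\tau>j-1}\,q_j(1-q_j)\big] \le f\,\E[\ind{\tau>j-1}\,q_j]$, and since $\{\tau>j\}=\{\tau>j-1\}\cap\{y_j \text{ does not open}\}$ we get $\E[\ind{\tau>j-1}\,q_j]=\Pr[\tau>j-1]-\Pr[\tau>j]$, so the sum telescopes: $\sum_j\E[\cost(y_j)\ind{j<\tau}] \le f\big(\Pr[\tau>0]-\Pr[\tau>t]\big)\le f$. (Intuitively, a large $d_j$ forces a high opening probability and cuts the sum short, while a small $d_j$ is cheap.) Combining the three pieces gives $\E[\sum_{j\le\tau}\cost(y_j)]\le 2f$, and hence both displayed inequalities, completing the proof of the claim. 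The anticipated obstacle is precisely this telescoping estimate; the rest is triangle-inequality bookkeeping against the ring radii.
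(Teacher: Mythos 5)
Your proposal is correct and is essentially a self-contained reconstruction of exactly the argument the paper invokes: the paper's own ``proof'' of this claim is a one-line citation to Meyerson's Lemma~2.1 and Theorem~4.2, and your decomposition (expected cost $\le 2\min\{d_j,f\}$ per arrival, at most $2f$ in expectation before and at the first in-ring opening via the telescoping $\sum_j \E[\ind{\tau>j-1}q_j]=\Pr[\tau>0]-\Pr[\tau>t]\le 1$, then triangle inequality through the facility at $y_\tau$ for the tail) is precisely that analysis. The monotonicity observation that $\hat F$ only shrinks distances, and the reduction to a per-realization bound over Meyerson's coins, are also the right way to handle the interleaving with other clusters. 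One small caveat: with the paper's rings of ratio $e$ (rather than Meyerson's ratio $2$), your tail bound for $C_k^\ell$ gives the constant $2(1+e)\approx 7.44$ rather than $6$; the constant $6$ as stated really corresponds to ratio-$2$ rings. This is a discrepancy in the paper's claim as written, not a gap in your argument, but you should not assert that your $O(1)$ matches the stated $6$ without either switching to ratio-$2$ rings or adjusting the constant.
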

    \begin{proof}[Proof of \Cref{clm:mrf-fl-inner}]
        This follows directly from the analysis of Meyerson, Lemma 2.1 and Theorem 4.2.
    \end{proof}
    
    \begin{claim}\label{clm:mrf-fl-outer}
        For each $k \in [m]$, we have
        \begin{align*}
        \E \sum_{x_i \in R \cap C_k^{out}} \cost(x_i) &\leq 4\sum_{x \in C_k^{out}} d(c_k, x) + 2\E\sum_{x \in S\cap C_k^{in}} d(x, \hat F) +\E\Big(|C_k^{out}| - |S \cap C_k^{in}|\Big)^+ \cdot f.
        \end{align*}
    \end{claim}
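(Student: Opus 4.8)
\textbf{Proof plan for Claim~\ref{clm:mrf-fl-outer}.}
The idea is to charge each outer real arrival $x_i \in R \cap C_k^{out}$ to one of three sources: its own distance $d(x_i,c_k)$, a sampled \emph{inner} point $y \in S \cap C_k^{in}$ together with $d(y,\hat F)$, or — only when there are too few sampled inner points — an $f$ term. First I would record the standard per-arrival bound for the Meyerson step in Phase~2 of Algorithm~\ref{algo:FL-gen}~\cite{Meyerson2001}: if $F_{i-1}$ is the facility set present when $x_i$ is processed and $\rho := d(x_i,F_{i-1})$, then opening at $x_i$ (probability $\min\{\rho/f,1\}$) costs exactly $f$ while connecting otherwise costs $\rho$, so a two-line computation gives
\[
\E[\cost(x_i)\mid F_{i-1}] \;=\; \min\{\rho/f,1\}\cdot f + \big(1-\min\{\rho/f,1\}\big)\cdot\rho \;\le\; \min\{2\rho,\,f\}.
\]
Since $F_0 \supseteq \hat F$ and facilities are never removed, $\hat F \subseteq F_{i-1}$ always, so $\rho \le d(x_i,\hat F)$ and, by the tower rule, $\E[\cost(x_i)\mid S] \le \min\{2\,d(x_i,\hat F),\,f\}$ for \emph{every} adversarial arrival order.

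Next I would condition on the sample $S$, which fixes $\hat F$, the set $A := R \cap C_k^{out}$, and the set $B := S \cap C_k^{in}$, and then fix an arbitrary matching $\mu$ consisting of $\min\{|A|,|B|\}$ pairs between $A$ and $B$. For a matched pair $(x_i,\,y=\mu(x_i))$, the triangle inequality gives $d(x_i,\hat F) \le d(x_i,c_k) + d(c_k,y) + d(y,\hat F)$. The key geometric observation is that $y \in C_k^{in} = C_k^0 \cup \dots \cup C_k^L$ forces $d(c_k,y) \le e^{L}r_k$, while $x_i \in C_k^{out}$ forces $d(x_i,c_k) \ge e^{L}r_k$; hence $d(c_k,y) \le d(x_i,c_k)$ and therefore $d(x_i,\hat F) \le 2\,d(x_i,c_k) + d(y,\hat F)$. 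Combining with the Meyerson bound,
\[
\E[\cost(x_i)\mid S] \;\le\; 4\,d(x_i,c_k) + 2\,d(y,\hat F) \quad\text{for matched } x_i, \qquad
\E[\cost(x_i)\mid S] \;\le\; f \quad\text{for unmatched } x_i,
\]
and the number of unmatched points of $A$ is exactly $(|A|-|B|)^+$. Summing over $x_i \in A$ and using that $\mu$ uses each $y \in B$ at most once,
\[
\E\Big[\textstyle\sum_{x_i\in A}\cost(x_i)\ \Big|\ S\Big] \;\le\; 4\sum_{x\in A} d(x,c_k) \;+\; 2\sum_{y\in B} d(y,\hat F) \;+\; (|A|-|B|)^+\cdot f .
\]

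Finally I would take the expectation over $S$. Pointwise we have $A \subseteq C_k^{out}$, so $\sum_{x\in A}d(x,c_k) \le \sum_{x\in C_k^{out}}d(x,c_k)$ and $|A| \le |C_k^{out}|$; since $t\mapsto (t-|B|)^+$ is nondecreasing this gives $(|A|-|B|)^+ \le (|C_k^{out}| - |S\cap C_k^{in}|)^+$, and the middle term is $2\,\E\sum_{x\in S\cap C_k^{in}}d(x,\hat F)$ directly. This yields exactly the bound asserted in Claim~\ref{clm:mrf-fl-outer}. The one step that needs genuine care is the conditioning bookkeeping: one must check that the Meyerson per-arrival bound holds \emph{pointwise} in $F_{i-1}$ (so it is unaffected by the adversarial order and survives averaging over Phase~2 randomness), and that the matching $\mu$ may depend on $S$ but on nothing else — this decoupling is what makes the charging argument and the outer expectation over $S$ compose cleanly. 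Everything else is routine.
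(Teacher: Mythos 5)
Your proof is correct and follows essentially the same route as the paper's: condition on $S$, match the outer real arrivals injectively to sampled inner points, apply the triangle inequality through $c_k$ together with the Meyerson bound $\E[\cost(x_i)\mid S]\le \min\{2\,d(x_i,\hat F),\,f\}$, and charge each unmatched point $f$. In fact you make explicit two steps the paper leaves implicit — the derivation of the Meyerson per-arrival bound from $\hat F\subseteq F_{i-1}$, and the ring geometry $d(c_k,\mu(x_i))\le e^L r_k\le d(x_i,c_k)$ that justifies absorbing $2\,d(c_k,\mu(x_i))$ into $2\,d(x_i,c_k)$ — so nothing is missing.
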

    \begin{proof}[Proof of \Cref{clm:mrf-fl-outer}] To show this claim, we will attempt to match as many elements of $R\cap C_k^{out}$ (the real arrivals in the outermost ring) as possible with the elements of $S \cap C_k^{in}$ (the sample elements in any inner ring), with the hope that not too many elements of the former set go unmatched. We will then bound the cost of demands in $R\cap C_k^{out}$ by the cost of the corresponding elements of $S\cap C_k^{in}$ to the sample solution.
    
    Formally, let $A \subseteq R\cap C_k^{out}$ be a subset of size $\min \{|R\cap C_k^{out}|,~|S\cap C_k^{in}|\}$, and let $g : A \to S\cap C_k^{in}$ be an injective map. For this proof, we also condition on the choice sample and real elements. For each $x_i \in A$,  
    \begin{align*}
        \E[ \cost(x_i) \mid S] &\leq 2d(x_i, \hat F)\\
        &\leq 2d(x_i, c_k) + 2d(c_k, g(x_i)) +2d(g(x_i), \hat F)\\
        &\leq 4d(x_i, c_k) + 2d(g(x_i), \hat F).
    \end{align*}
    In addition, for each demand $x \in (R\cap C_k^{out})\setminus A$, we incur cost at most $f$. 
    
    Summing these bounds over all $x \in R \cap C_k^{out}$ and taking expectation over  $S$ gives our claim.
    \end{proof}
    \begin{claim}\label{clm:mrf-fl-chernoff}
    For each $k \in [m]$, we have $\E\Big(|C_k^{out}| - |S \cap C_k^{in}|\Big)^+ \leq 2.$
    \end{claim}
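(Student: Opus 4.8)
\textbf{Proof plan for \Cref{clm:mrf-fl-chernoff}.} The plan is to reduce the statement to a second-moment estimate for a binomial random variable. Write $t := |C_k^{out}|$ and $N := |C_k^{in}|$. If $t = 0$ the quantity inside the expectation is identically zero and there is nothing to prove, so assume $t \ge 1$ (which in particular forces $r_k > 0$, since with $r_k = 0$ all demands of $C_k$ coincide with $c_k$ and $C_k^{out} = \varnothing$). First I would exploit the averaging identity $|C_k|\, r_k = \sum_{v \in C_k} d(v, c_k)$: every $v \in C_k^{out}$ contributes at least $e^L r_k$ to this sum, so $|C_k|\, r_k \ge t\, e^L r_k$, giving $t \le |C_k|\, e^{-L}$ and hence $N = |C_k| - t \ge |C_k|(1 - e^{-L})$. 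Fixing the hidden constant in $L = \Theta(\log(1/p))$ so that $e^L \ge 1 + 2/p$, we obtain $pN \ge p\,|C_k|(1 - e^{-L}) \ge 2\,|C_k|\, e^{-L} \ge 2t$.

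Next I would replace the (monotone, augmented) sample by an independent one. By definition of the monotone model, $S$ contains a product sample that includes each value independently with probability $p$, so $|S \cap C_k^{in}|$ stochastically dominates $X \sim \mathrm{Bin}(N, p)$; since $z \mapsto (t - z)^+$ is non-increasing, $\E\big[(|C_k^{out}| - |S \cap C_k^{in}|)^+\big] \le \E\big[(t - X)^+\big]$. To bound the latter, use $(t - X)^+ \le t\cdot \ind{X < t}$, so $\E[(t-X)^+] \le t\,\Pr[X < t]$. With $\mu := \E[X] = pN \ge 2t$ and $\sigma^2 := \mathrm{Var}(X) = pN(1-p) \le \mu$, the event $\{X < t\}$ lies inside $\{|X - \mu| \ge \mu/2\}$ because $t \le \mu/2$, so Chebyshev's inequality gives $\Pr[X < t] \le 4\sigma^2/\mu^2 \le 4/\mu \le 2/t$. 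Therefore $\E[(t - X)^+] \le t \cdot (2/t) = 2$, which is exactly the claimed bound.

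I do not expect a real obstacle here; the analysis is a short second-moment computation. The only points needing care are (i) pinning down the constant in $L = \Theta(\log(1/p))$ so that $e^L \ge 1 + 2/p$ while keeping it consistent with the ring decomposition and the instantiation of Meyerson's bounds used in \Cref{clm:mrf-fl-inner} and \Cref{clm:mrf-fl-outer}, and (ii) the bookkeeping behind the stochastic-dominance step, which is precisely where monotonicity of the algorithm (i.e., the augmentation of the sample set) is being invoked. The degenerate case $r_k = 0$ is handled in one line as noted above.
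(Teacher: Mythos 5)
Your proposal is correct and follows essentially the same route as the paper's proof: bound $|C_k^{out}| \le e^{-L}|C_k|$ by Markov/averaging, use monotonicity to stochastically dominate $|S\cap C_k^{in}|$ by a $\mathrm{Bin}(N,p)$ variable with mean at least $2|C_k^{out}|$, and then bound $\E[(t-X)^+]\le t\,\Pr[X<t]$ by a concentration inequality. The only difference is that you use Chebyshev (giving $t\cdot 2/t=2$ exactly) where the paper uses a Chernoff bound (giving $t e^{-t/4}\le 4/e\le 2$); both are valid and yield the same constant.
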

    \begin{proof}
        Let $z := |C_k^{out}|$. Notice that $z \leq O(p) \cdot |C_k|$ by construction of $C_k^{out}$ and Markov's inequality, so we have $|C_k^{in}| \geq (1-p)|C_k|$. Since $\Pr(x \in S) = p$ independently for each $x \in C_k^{in}$, we have that $|S \cap C_k^{in}|$ stochastically dominates the binomial distribution $Bin(|C_k^{in}|,p)$. Thus, by choosing $L = O(\log(1/p))$ (i.e. so that $\E|S\cap C_k^{in}| \geq p(1-p)|C_k| \geq 2z$) and using Chernoff bounds, we have
        \begin{align*}
            \Pr(|S \cap C_k^{in}| < z) \leq e^{-z/4}.
        \end{align*}
        Therefore, we conclude 
        \[\E\Big(|C_k^{out}| - |S \cap C_k^{in}|\Big)^+ \leq |C_k^{out}| \cdot \Pr(|S \cap C_k^{in}| < |C_k^{out}|) \leq z \cdot e^{-z/4} \leq 2.   \qedhere \]
    \end{proof}
    Putting together the last three claims proves \Cref{thm:facilLocMonot}.
\end{proof}


\clearpage

\bibliography{bibliography}
\bibliographystyle{alpha}

\end{document}